\newtheorem{thm}{Theorem}[section]
\newtheorem{lem}[thm]{Lemma}
\newtheorem{cor}[thm]{Corollary}
\newtheorem{pro}[thm]{Proposition}
\newtheorem{ex}[thm]{Example}
\newtheorem{rmk}[thm]{Remark}
\newtheorem{defi}[thm]{Definition}
\newcommand {\emptycomment}[1]{}
\newcommand{\B}{\mathsf{B}}
\newcommand{\lon }{\,\rightarrow\,}
\newcommand{\be }{\begin{equation}}
\newcommand{\ee }{\end{equation}}
\newcommand{\g}{\mathfrak g}
\newcommand{\huaB}{\mathcal{B}}%{{\mathcal{E}}}%{\mathcal{B}}
\newcommand{\huaL}{\mathcal{L}}
\newcommand{\huaR}{\mathcal{R}}
\newcommand{\huaG}{\mathcal{G}}
\newcommand{\huaO}{{\mathcal{O}}}
\newcommand{\frkk}{\mathfrak k}
\newcommand{\frks}{\mathfrak s}
\newcommand{\half}{\frac{1}{2}}
\newcommand{\Id}{\rm{Id}}
\newcommand{\br}[1]{   [ \cdot,    \cdot  ]   }
\newcommand{\Hom}{\mathrm{Hom}}
\newcommand{\Sym}{\mathrm{Sym}}
\newcommand{\gl}{\mathfrak {gl}}
\newcommand{\K}{\mathbb{K}}
\begin{document}

\title[Leibniz bialgebras  ]{Leibniz bialgebras, relative Rota-Baxter operators and the classical Leibniz  Yang-Baxter equation}

\author{Yunhe Sheng}
\address{Department of Mathematics, Jilin University, Changchun 130012, Jilin, China}
\email{shengyh@jlu.edu.cn}

\author{Rong Tang}
\address{Department of Mathematics, Jilin University, Changchun 130012, Jilin, China}
\email{tangrong@jlu.edu.cn}

%\date{\today}

\begin{abstract}

In this paper, first we introduce the notion of a Leibniz bialgebra and show that matched pairs of Leibniz algebras, Manin triples of Leibniz algebras and Leibniz bialgebras are equivalent. Then we introduce the notion of a (relative) Rota-Baxter operator on a Leibniz algebra and construct the graded Lie algebra that characterizes relative Rota-Baxter operators as Maurer-Cartan elements. By these structures and the  twisting theory of twilled Leibniz algebras, we
 further define the classical Leibniz Yang-Baxter equation, classical Leibniz $r$-matrices and  triangular Leibniz bialgebras. Finally, we construct solutions of the classical Leibniz Yang-Baxter equation using relative Rota-Baxter operators and  Leibniz-dendriform algebras.

\end{abstract}

\subjclass[2010]{17A32, 17B38, 17B62}

\keywords{Leibniz bialgebra, Rota-Baxter operators, twilled Leibniz algebra, classical Leibniz Yang-Baxter equation}

\maketitle

\tableofcontents

\allowdisplaybreaks

%\end{document}

\section{Introduction}
\label{sec:intr}

The paper aims to establish the bialgebra theory for Leibniz algebras. In particular, define what is a triangular Leibniz bialgebra, what is a classical Leibniz Yang-Baxter equation  and what is a classical Leibniz $r$-matrix.

\subsection{Leibniz algebras and Leibniz bialgebras}
The notion of a Leibniz algebra was introduced by Loday \cite{Loday,Loday and Pirashvili} with the motivation in the study of the periodicity in algebraic K-theory.   Recently Leibniz algebras were studied from different aspects due to applications in both mathematics and physics.
In particular, integration of Leibniz algebras were studied in \cite{BW,Int1} and deformation quantization of Leibniz algebras was studied in \cite{DW}. As the underlying structure of embedding tensor, Leibniz algebras also have application in higher gauge theories, see \cite{KS,SW} for more details.

For a given algebraic structure, a bialgebra structure on this algebra is obtained
by a  comultiplication together with some compatibility conditions between the multiplication and the comultiplication.% When the underlying vector space is finite dimensional, the comultiplication is equivalent to equipping the dual space the same algebraic structure.
 A good compatibility condition is prescribed   by a rich structure theory and effective constructions.
The most famous examples of bialgebras are  associative bialgebras and Lie bialgebras, which have important applications in both mathematics and mathematical physics, e.g. a Lie bialgebra is the
algebraic structure corresponding to a Poisson-Lie group and the classical structure of a quantized
universal enveloping algebra~\cite{CP,D}.

The purpose of this paper is to study the bialgebra theory for Leibniz algebras with the motivation from the great importance of Lie bialgebras. It is well known that a Lie bialgebra is equivalent to a Manin tripe of Lie algebras. In the definition of a Manin triple, one needs to use a quadratic Lie algebra, which is a Lie algebra equipped with a symmetric nondegenerate invariant bilinear form. However, to define a quadratic Leibniz algebra, we need to use a skew-symmetric bilinear form. This is supported by the fact that the operad of Lie algebras is a cyclic operad, but the operad of Leibniz algebras is an anticyclic operad.  Actually, it is observed by Chapoton in \cite{Chapoton} using the operad theory that one should use the aforementioned skew-symmetric invariant  bilinear form on a Leibniz algebra. As soon as we have the correct notion of a quadratic Leibniz algebra, we can define Manin triples and dual representations of Leibniz algebras. We introduce the notion of a Leibniz bialgebra and show that matched pairs of Leibniz algebras, Manin triples of Leibniz algebras and Leibniz bialgebras are equivalent. Even though we obtain some nice results totally parallel to the context of Lie bialgebras, we need to emphasize that our bialgebra theory are not generalization of Lie bialgebras, namely the restriction of our theory on Lie algebras is independent of Lie bialgebras.

\subsection{Triangular Leibniz bialgebras: relative Rota-Baxter operator approach}
Due to the importance of the classical Yang-Baxter equation and triangular Lie bialgebras, it is natural to define the Leibniz analogue of the classical Yang-Baxter equation and triangular Leibniz bialgebras.
This is a very hard problem due to that the representation theory of Leibniz algebras is not good, e.g. there is no tensor product in the module category of Leibniz algebras. We solve this problem using relative Rota-Baxter operators and the twisting theory of twilled Leibniz algebras.

  A Rota-Baxter
operator  on a Lie algebra
was introduced in the
1980s as the operator form of the classical Yang-Baxter equation.
%named after the physicists C. N. Yang and R. Baxter \cite{BaR,Ya}.The classical Yang-Baxter equation plays important roles in many fields in mathematics and mathematical physics such as integrable systems and quantum groups \cite{CP,STS}.
To better understand the classical Yang-Baxter equation and
the related integrable systems, the more general notion of an $\huaO$-operator
on a Lie algebra was introduced by Kupershmidt~\cite{Ku},
which can be traced back to Bordemann
\cite{Bor}.  An
$\huaO$-operator gives rise to a skew-symmetric $r$-matrix in a larger Lie algebra \cite{Bai-1}.
In the context of associative algebras, $\huaO$-operators give rise to
  dendriform algebras \cite{Lo5}, play important role in the bialgebra theory \cite{Bai-2} and  lead to the splitting of
operads~\cite{Bai-Bellier-Guo-Ni}. See the book ~\cite{Gub} for more details.

The twisting theory was introduced by Drinfeld in \cite{Dri} motivated by the study of quasi-Lie bialgebras and quasi-Hopf algebras. As a useful tool in the study of bialgebras, the twisting theory was further applied to associative algebras and Poisson geometry, see \cite{Schwarzbach1,Schwarzbach2,Roy,Uchino} for more details.

 In this paper, we introduce the notion of a relative Rota-Baxter operator  on a Leibniz algebra. % $(\g,[\cdot,\cdot]_\g)$ with respect to a representation $(V;\rho^L,\rho^R)$. %If the Leibniz algebra $\g$ is a Lie algebra and $\rho^R=-\rho^L$, we recover the notion of an $\huaO$-operator introduced by Kupershmidt in \cite{Ku}.
  We construct the graded Lie algebra that characterize relative Rota-Baxter operators as Maurer-Cartan elements. Using this graded Lie algebra, we give the definition of a classical  Leibniz Yang-Baxter equation. Moreover, we give the twisting theory of twilled Leibniz algebras, by which we define triangular Leibniz bialgebras.  We also use relative Rota-Baxter operators and Leibniz-dendriform algebras to give solutions of the classical Leibniz Yang-Baxter equations in some larger Leibniz algebras. \vspace{-5mm}

\subsection{Outline of the paper}

In Section \ref{sec:B}, we introduce the notions of a Manin triple of Leibniz algebras and a Leibniz bialgebra.   We prove the equivalence between matched pairs of Leibniz algebras, Manin triples of Leibniz algebras and   Leibniz bialgebras. The main innovation is that we use skew-symmetric invariant bilinear form instead of symmetric invariant bilinear form  in the definition of a quadratic Leibniz algebra. Another ingredient is that the dual representation of a representation $(V;\rho^L,\rho^R)$ should be $(V^*;(\rho^L)^*,-(\rho^L)^*-(\rho^R)^*)$, rather than $(V^*;(\rho^L)^*,(\rho^R)^*)$ under some conditions.
Therefore, our approach is totally different from the one in \cite{Barreiro-Benayadi}, where the authors have to use symmetric Leibniz algebras.  Thus,   Leibniz algebras should be considered as an independent algebraic structure from Lie algebras, not viewed as a simple noncommutative generalization of Lie algebras.

In Section \ref{sec:K}, we make preparations for our later study of triangular Leibniz algebras. In Section \ref{sec:K1}, we give the graded Lie algebra that characterize Leibniz algebras as Maurer-Cartan elements and some technical tools. In Section \ref{sec:K2}, we introduce the notion of a relative Rota-Baxter operator on a Leibniz algebra with respect to a representation and construct the graded Lie algebra that characterize it as a Maurer-Cartan element. This is the foundation of the whole paper. In Section \ref{sec:K3}, we introduce the notions of a twilled Leibniz algebra.  The twisting theory of twilled Leibniz algebras is studied in detail for the purpose to define triangular Leibniz bialgebras.

In Section \ref{sec:R}, we study triangular Leibniz bialgebras. The traditional coboundary approach for Lie bialgebras does not work for Leibniz bialgebras because there is no tensor product for two representations of a Leibniz algebra. Thus, one need to use new ideas and new methods to solve this problem.  We define the classical Leibniz Yang-Baxter equation and  a classical Leibniz $r$-matrix using the graded Lie algebra given in   Section \ref{sec:K2},  and then define a triangular Leibniz bialgebra  successfully using the twisting theory of a twilled Leibniz algebra given in   Section \ref{sec:K3}.   We also   generalize a Semonov-Tian-Shansky's result about the relation between the operator form and the tensor form  of a classical $r$-matrix in \cite{STS} to the context of Leibniz algebras.

In Section \ref{sec:S}, first we use relative Rota-Baxter operators to give solutions of the classical Leibniz Yang-Baxter equation in the semidirect product Leibniz algebra. Then we introduce the notion of a Leibniz-dendriform algebra. Similar to the connection from pre-Lie algebras to Lie algebras and from dendriform algebras to associative algebras, we show that a Leibniz-dendriform algebra gives rise to  a Leibniz algebra together with a representation on itself.  The importance of such a structure is due to that it gives rise to a solution of the classical Leibniz Yang-Baxter equation.

 Quantization of Lie bialgebras and deformation  quantization of Leibniz algebras  was studied in \cite{DW,EK}. It is natural to study  quasi-triangular Leibniz bialgebras and their quantization. On the other hand, classical $r$-matrices play important role in the study of integrable systems. It is natural to investigate whether classical Leibniz $r$-matrices can be applied to some integrable systems. It is also natural to investigate the global objects corresponding to Leibniz bialgebras.  We will study these questions in the future.

\vspace{2mm}

In this paper, we work over an algebraically closed field $\K$ of characteristic 0 and all the vector spaces are over $\K$ and finite-dimensional.

\section{Quadratic Leibniz algebras and Leibniz bialgebras}\label{sec:B}

\begin{defi}
A {\bf Leibniz algebra} is a vector space $\g$ together with a bilinear operation $[\cdot,\cdot]_\g:\g\otimes\g\lon\g$ such that
\begin{eqnarray*}
\label{Leibniz}[x,[y,z]_\g]_\g=[[x,y]_\g,z]_\g+[y,[x,z]_\g]_\g,\quad\forall x,y,z\in\g.
\end{eqnarray*}
\end{defi}

A {\bf representation} of a Leibniz algebra $(\g,[\cdot,\cdot]_{\g})$ is a triple $(V;\rho^L,\rho^R)$, where $V$ is a vector space, $\rho^L,\rho^R:\g\lon\gl(V)$ are linear maps such that the following equalities hold for all $x,y\in\g$,
\begin{eqnarray}
\label{rep-1}\rho^L([x,y]_{\g})&=&[\rho^L(x),\rho^L(y)],\\
\label{rep-2}\rho^R([x,y]_{\g})&=&[\rho^L(x),\rho^R(y)],\\
\label{rep-3}\rho^R(y)\circ \rho^L(x)&=&-\rho^R(y)\circ \rho^R(x).
\end{eqnarray}
Here $[\cdot,\cdot]:\wedge^2\gl(V)\lon\gl(V)$ is the commutator Lie bracket on $\gl(V)$, the vector space of linear transformations on $V$.

Define the left multiplication $L:\g\longrightarrow\gl(\g)$ and the right multiplication $R:\g\longrightarrow\gl(\g)$ by $L_xy=[x,y]_\g$ and $R_xy=[y,x]_\g$ respectively for all $x,y\in \g$.  Then $(\g;L,R)$ is a representation of $(\g,[\cdot,\cdot]_{\g})$, which is called the {\bf regular representation}. Define two linear maps $L^*,R^*:\g\longrightarrow\gl(\g^*)$ with $x\longrightarrow L^*_x$ and $x\longrightarrow R^*_x$ respectively by
\begin{eqnarray}
\langle L^*_{x}\xi,y\rangle=-\langle \xi,[x,y]_\g\rangle,\quad\langle R^*_{x}\xi,y\rangle=-\langle \xi,[y,x]_\g\rangle,\quad\forall x,y\in\g,\xi\in\g^*.
\end{eqnarray}
If there is a Leibniz algebra structure on the dual space $\g^*$, we denote the left multiplication and the right multiplication by $\huaL$ and $\huaR$ respectively.

\subsection{Quadratic Leibniz algebras and the Leibniz analogue of the string Lie 2-algebra}
It is observed by Chapoton in \cite{Chapoton} using the operad theory  that one need to use skew-symmetric bilinear forms instead of symmetric bilinear forms on a Leibniz algebra. This is the key ingredient in our study of Leibniz bialgebras.

\begin{defi}{\rm (\cite{Chapoton})}
A {\bf quadratic Leibniz algebra} is a Leibniz algebra $(\g,[\cdot,\cdot]_\g)$ equipped with a nondegenerate skew-symmetric bilinear form $\omega\in\wedge^2\g^*$ such that the following invariant condition
holds:
\begin{eqnarray}\label{Invariant-bilinear-forms}
\omega(x,[y,z]_\g)=\omega([x,z]_\g+[z,x]_\g,y),\quad \forall  x,y,z\in \g.
\end{eqnarray}
\end{defi}

\begin{rmk}
In the original definition of a nondegenerate skew-symmetric invariant bilinear form on a Leibniz algebra $(\g,[\cdot,\cdot]_\g)$ given  in \cite{Chapoton}, there is a superfluous condition
\begin{eqnarray}\label{superfluous-condition}
\omega(x,[y,z]_\g)=-\omega([y,x]_\g,z).
\end{eqnarray}
In fact, by \eqref{Invariant-bilinear-forms},   we have
\begin{eqnarray*}
-\omega([y,x]_\g,z)=\omega(z,[y,x]_\g)=\omega([z,x]_\g+[x,z]_\g,y)=\omega(x,[y,z]_\g),\quad \forall x,y,z\in\g.
\end{eqnarray*}
\end{rmk}

\begin{rmk}
  Note that we use skew-symmetric bilinear forms instead of symmetric bilinear forms and use the invariant condition \eqref{Invariant-bilinear-forms} instead of the invariant condition $ B([x,y]_\g,z)=B(x,[y,z]_\g)$ and this is the main ingredient in our study of Leibniz bialgebras. In \cite{Barreiro-Benayadi}, the author  use symmetric bilinear form and invariant condition $ B([x,y]_\g,z)=B(x,[y,z]_\g)$ to study Leibniz bialgebras so that one has to add some strong conditions. As we will see, everything in the following study is natural in the sense that we do not need to add any extra conditions on the Leibniz algebra.
\end{rmk}

Recall that a quadratic Lie algebra is a Lie algebra $(\frkk,[\cdot,\cdot]_\frkk)$ equipped with a nondegenerate symmetric bilinear form $B\in\Sym^2(\frkk^*)$, which is invariant in the sense that
$$
B([x,y]_\g,z)=B(x,[y,z]_\g),\quad \forall x,y,z\in\frkk.
$$
Associated to a quadratic Lie algebra $(\frkk,[\cdot,\cdot]_\frkk,B)$, we have a closed 3-form $\bar{\Theta}\in\wedge^3\frkk^*$ given by
$$
\bar{\Theta}(x,y,z)=B(x,[y,z]_\frkk),
$$
which is known as the Cartan 3-form.

Let $(\g,[\cdot,\cdot]_\g,\omega)$ be a quadratic Leibniz algebra. Define $\Theta\in\otimes^3\g^*$ by
\begin{equation}
  \Theta(x,y,z)=\omega(x,[y,z]_\g),\quad\forall x,y,z\in\g.
\end{equation}
This 3-tensor can be viewed as the Leibniz analogue of the Cartan 3-form on a quadratic Lie algebra as the following lemma shows.

\begin{lem}\label{lem:3-form}
  With the above notations, $\Theta$ is a $3$-cocycle on the Leibniz algebra $(\g,[\cdot,\cdot]_\g)$ with values in the trivial representation $(\K;0,0)$, i.e. $\partial \Theta=0.$
\end{lem}
\begin{proof}
  For all $x,y,z,w\in\g$, by the fact that $[\g,\g]_\g$ is a left center, we have
  \begin{eqnarray*}
    (\partial \Theta)(x,y,z,w)&=&-\Theta([x,y]_\g,z,w)-\Theta(y,[x,z]_\g,w)-\Theta(y,z,[x,w]_\g)\\
    &&+\Theta(x,[y,z]_\g,w)+\Theta(x,z,[y,w]_\g)-\Theta(x,y,[z,w]_\g)\\
    &=&-\omega([x,y]_\g,[z,w]_\g)-\omega(y,[[x,z]_\g,w]_\g)-\omega(y,[z,[x,w]_\g]_\g)\\
    &&+\omega(x,[[y,z]_\g,w]_\g)+\omega(x,[z,[y,w]_\g]_\g)-\omega(x,[y,[z,w]_\g]_\g)\\
    &=&\omega(y,[x,[z,w]_\g]_\g)-\omega(y,[[x,z]_\g,w]_\g)-\omega(y,[z,[x,w]_\g]_\g)\\
    &&+\omega(x,[[y,z]_\g+[z,y]_\g,w]_\g)\\
    &=&0,
  \end{eqnarray*}
  which finishes the proof.
\end{proof}

Consequently, given a quadratic Leibniz algebra $(\g,[\cdot,\cdot]_\g,\omega)$, we can construct a Leibniz 2-algebra (2-term $Lod_\infty$-algebra) \cite{ammardefiLeibnizalgebra,livernet,LiuSheng}, which can be viewed as the Leibniz analogue of the string Lie 2-algebra associated to a semisimple Lie algebra \cite{baez:2algebras}.

On the graded vector space $\K\oplus\g$, define $l_1:\K\lon \g$ to be the zero map, define $l_2$ and $l_3$ by
$$
\left\{\begin{array}{rcll}
l_2(x,y)&=&[x,y]_\g, &\forall x,y\in\g,\\
l_2(x,s)&=&l_2(s,x)=0,&\forall x\in\g,~s\in\K,\\
l_3(x,y,z)&=&\Theta(x,y,z)=\omega(x,[y,z]_\g),&\forall x,y,z\in\g.
\end{array}\right.
$$
\begin{thm}
Let $(\g,[\cdot,\cdot]_\g,\omega)$ be a quadratic Leibniz algebra. Then $(\K,\g,l_1=0,l_2,l_3)$ is a Leibniz $2$-algebra.
\end{thm}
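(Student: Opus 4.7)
The plan is to unwind the definition of a 2-term $Lod_\infty$-algebra (Leibniz 2-algebra) and check each of its axioms against the data $(l_1=0,l_2,l_3)$, exploiting two structural features of this particular construction: first, $l_1$ is the zero map, so every identity involving $l_1$ either vanishes automatically or simplifies dramatically; second, $l_3$ takes values in the degree-1 part $\K$, while $l_2$ is defined to be zero whenever one of its arguments lies in $\K$, so compositions of $l_2$ with the output of $l_3$ vanish.

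First I would record the axioms of a 2-term $Lod_\infty$-algebra in the convention of \cite{livernet,LiuSheng}, where the grading has $\g$ in degree $0$ and $\K$ in degree $1$, with $l_1\colon\K\to\g$, $l_2\colon V\otimes V\to V$ of degree $0$, and $l_3\colon\g\otimes\g\otimes\g\to\K$. The compatibility of $l_1$ and $l_2$, as well as the identities $l_1l_2(x,s)=l_2(x,l_1s)$, $l_2(l_1 s,x)=l_2(s,\cdots)$, etc., are immediate from $l_1=0$ combined with $l_2|_{\g\otimes\K}=l_2|_{\K\otimes\g}=0$. The genuine Leibniz (Jacobi) identity on three $\g$-arguments reads
\[
l_2(l_2(x,y),z)-l_2(x,l_2(y,z))+l_2(y,l_2(x,z))=l_1 l_3(x,y,z);
\]
the right-hand side is zero because $l_1=0$, and the left-hand side is zero because $(\g,[\cdot,\cdot]_\g)$ is a Leibniz algebra. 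The mixed Jacobiator identities with one $\K$-entry read, after substituting the definitions,
\[
l_2(l_2(x,s),y)-l_2(x,l_2(s,y))+l_2(s,l_2(x,y))=l_3(x,l_1 s,y)+\text{(cyclic)},
\]
and every term vanishes because $l_2$ kills any tensor containing a factor in $\K$ and $l_1=0$.

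The only non-trivial condition is the coherence (higher Jacobi) identity for $l_3$, which schematically reads
\[
\sum(\pm)\, l_3(l_2(\cdot,\cdot),\cdot,\cdot)+\sum(\pm)\, l_2(l_3(\cdot,\cdot,\cdot),\cdot)=0
\]
on four $\g$-arguments. Since $l_2$ vanishes on any input containing a factor of $\K$, the second sum vanishes identically, and the remaining identity becomes precisely the $4$-cocycle equation $\partial\Theta=0$ for the $3$-tensor $\Theta(x,y,z)=\omega(x,[y,z]_\g)$ with respect to the Leibniz coboundary for the regular representation. This is exactly the content of Lemma \ref{lem:3-form}, so I would just invoke it.

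The main (and really only) obstacle is a bookkeeping one: matching the signs and indexing of the $Lod_\infty$ axioms from \cite{livernet,LiuSheng} to the formula defining $\Theta$, and checking that the one coherence identity that does not trivialize is indeed $\partial\Theta=0$ and not some twisted variant. Once the sign conventions are aligned, the theorem is essentially an immediate consequence of (i) $(\g,[\cdot,\cdot]_\g)$ being a Leibniz algebra, (ii) the invariance \eqref{Invariant-bilinear-forms} of $\omega$, and (iii) Lemma \ref{lem:3-form}.
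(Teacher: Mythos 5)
Your proposal is correct and takes essentially the same route as the paper, whose entire proof is the one-line reduction to Lemma \ref{lem:3-form} ($\partial\Theta=0$); your verification that $l_1=0$ and the vanishing of $l_2$ on mixed $\g\otimes\K$ inputs trivialize all axioms except the $l_3$-coherence identity is exactly the omitted bookkeeping. One small correction: since $\Theta$ takes values in $\K$, the coherence identity reduces to the cocycle condition for the Leibniz coboundary with \emph{trivial} coefficients in $\K$ (note the absence of any $\rho^L,\rho^R$ terms in the paper's computation of $\partial\Theta$), not the coboundary for the regular representation as you wrote.
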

\begin{proof}
  It follows from Lemma \ref{lem:3-form} and we omit details.
\end{proof}

\begin{rmk}
  A semisimple Lie algebra with the Killing form is naturally a quadratic Lie algebra. How to construct a skew-symmetric bilinear form associated to a Leibniz algebra such that it is invariant in the sense of \eqref{Invariant-bilinear-forms} is not known yet.
\end{rmk}

\subsection{Matched pairs, Manin triples of Leibniz algebras and Leibniz bialgebras}
In this subsection, first we recall the notion of a matched pair of Leibniz algebras. Then we introduce the notions of a Manin triple of Leibniz algebras and a Leibniz bialgebra. Finally, we prove the equivalence between matched pairs of Leibniz algebras, Manin triples of Leibniz algebras and Leibniz bialgebras.

\begin{defi}\label{matched-pair}{\rm (\cite{Agore})}
Let $(\g_1,[\cdot,\cdot]_{\g_1})$ and $(\g_2,[\cdot,\cdot]_{\g_2})$ be two Leibniz algebras. If there exists a representation $(\rho^L_{1},\rho^R_{1})$ of $\g_1$ on $\g_2$ and a representation $(\rho^L_{2},\rho^R_{2})$ of $\g_2$ on $\g_1$ such that the identities
\begin{eqnarray}
\label{matched-pair-1}&&\rho^R_{1}(x)[u,v]_{\g_2}-[u,\rho^R_{1}(x)v]_{\g_2}+[v,\rho^R_{1}(x)u]_{\g_2}-\rho^R_{1}(\rho^L_{2}(v)x)u+\rho^R_{1}(\rho^L_{2}(u)x)v=0;\\
\label{matched-pair-2}&&\rho^L_{1}(x)[u,v]_{\g_2}-[\rho^L_{1}(x)u,v]_{\g_2}-[u,\rho^L_{1}(x)v]_{\g_2}-\rho^L_{1}(\rho^R_{2}(u)x)v-\rho^R_{1}(\rho^R_{2}(v)x)u=0;\\
\label{matched-pair-3}&&[\rho^L_{1}(x)u,v]_{\g_2}+\rho^L_{1}(\rho^R_{2}(u)x)v+[\rho^R_{1}(x)u,v]_{\g_2}+\rho^L_{1}(\rho^L_{2}(u)x)v=0;\\
\label{matched-pair-4}&&\rho^R_{2}(u)[x,y]_{\g_1}-[x,\rho^R_{2}(u)y]_{\g_1}+[y,\rho^R_{2}(u)x]_{\g_1}-\rho^R_{2}(\rho^L_{1}(y)u)x+\rho^R_{2}(\rho^L_{1}(x)u)y=0;\\
\label{matched-pair-5}&&\rho^L_{2}(u)[x,y]_{\g_1}-[\rho^L_{2}(u)x,y]_{\g_1}-[x,\rho^L_{2}(u)y]_{\g_1}-\rho^L_{2}(\rho^R_{1}(x)u)y-\rho^R_{2}(\rho^R_{1}(y)u)x=0;\\
\label{matched-pair-6}&&[\rho^L_{2}(u)x,y]_{\g_1}+\rho^L_{2}(\rho^R_{1}(x)u)y+[\rho^R_{2}(u)x,y]_{\g_1}+\rho^L_{2}(\rho^L_{1}(x)u)y=0,
\end{eqnarray}
hold for all $x,y\in\g_1$ and $u,v\in\g_2$, then we call $(\g_1,\g_2;(\rho^L_{1},\rho^R_{1}),(\rho^L_{2},\rho^R_{2}))$ a {\bf matched pair} of Leibniz algebras.

\begin{pro}\label{matched-pair-to-big-algebras}{\rm (\cite{Agore})}
Let $(\g_1,\g_2;(\rho^L_{1},\rho^R_{1}),(\rho^L_{2},\rho^R_{2}))$ be a matched pair of Leibniz algebras. Then there is a Leibniz algebra structure on $\g_1\oplus\g_2$ defined by
\begin{eqnarray}\label{eq:dm}
[x+u,y+v]_{ \bowtie }=[x,y]_{\g_1}+\rho^R_{2}(v)x+\rho^L_{2}(u)y+[u,v]_{\g_2}+\rho^L_{1}(x)v+\rho^R_{1}(y)u.
\end{eqnarray}
\end{pro}

\emptycomment{
\begin{itemize}
        \item[\rm(i)]
        $\rho^R_{1}(x)[u,v]_{\g_2}=[u,\rho^R_{1}(x)v]_{\g_2}-[v,\rho^R_{1}(x)u]_{\g_2}+\rho^R_{1}(\rho^L_{2}(v)x)u-\rho^R_{1}(\rho^L_{2}(u)x)v$;
        \item[\rm(ii)]
        $\rho^L_{1}(x)[u,v]_{\g_2}=[\rho^L_{1}(x)u,v]_{\g_2}+[u,\rho^L_{1}(x)v]_{\g_2}+\rho^L_{1}(\rho^R_{2}(u)x)v+\rho^R_{1}(\rho^R_{2}(v)x)u$;
        \item[\rm(iii)]
        $[\rho^L_{1}(x)u,v]_{\g_2}+\rho^L_{1}(\rho^R_{2}(u)x)v+[\rho^R_{1}(x)u,v]_{\g_2}+\rho^L_{1}(\rho^L_{2}(u)x)v=0$;
        \item[\rm(iv)]
        $\rho^R_{2}(u)[x,y]_{\g_1}=[x,\rho^R_{2}(u)y]_{\g_1}-[y,\rho^R_{2}(u)x]_{\g_1}+\rho^R_{2}(\rho^L_{1}(y)u)x-\rho^R_{2}(\rho^L_{1}(x)u)y$;
        \item[\rm(v)]
        $\rho^L_{2}(u)[x,y]_{\g_1}=[\rho^L_{2}(u)x,y]_{\g_1}+[x,\rho^L_{2}(u)y]_{\g_1}+\rho^L_{2}(\rho^R_{1}(x)u)y+\rho^R_{2}(\rho^R_{1}(y)u)x$;
        \item[\rm(vi)]
        $[\rho^L_{2}(u)x,y]_{\g_1}+\rho^L_{2}(\rho^R_{1}(x)u)y+[\rho^R_{2}(u)x,y]_{\g_1}+\rho^L_{2}(\rho^L_{1}(x)u)y=0$,
\end{itemize}
}
\end{defi}

\emptycomment{
\begin{pro}
The notion of a twilled Leibniz algebra is equivalent to the notion of a matched pair of Leibniz algebras.
\end{pro}
\begin{proof}
Let $(\huaG=\g_1\oplus\g_2,\g_1,\g_2)$ be a twilled Leibniz algebra. By Remark \ref{two-representation}, we obtain that $(\g_1,[\cdot,\cdot]_{\g_1})$ and $(\g_2,[\cdot,\cdot]_{\g_2})$ are two Leibniz algebras and $(\rho^L_{1},\rho^R_{1})$ is a representation of $(\g_1,[\cdot,\cdot]_{\g_1})$ on $\g_2$ and $(\rho^L_{2},\rho^R_{2})$ is a representation of $(\g_2,[\cdot,\cdot]_{\g_2})$ on $\g_1$. By $[\hat{\mu}_1,\hat{\mu}_2]^B=0$, for all $x\in\g_1,u,v\in\g_2$, we have
\begin{eqnarray*}
[\hat{\mu}_1,\hat{\mu}_2]^B(u,v,x)&=&(\hat{\mu}_1\bar{\circ}\hat{\mu}_2+\hat{\mu}_2\bar{\circ}\hat{\mu}_1)(u,v,x)\\
                                  &=&\hat{\mu}_1(\hat{\mu}_2(u,v),x)-\hat{\mu}_1(u,\hat{\mu}_2(v,x))+\hat{\mu}_1(v,\hat{\mu}_2(u,x))\\
                                  &&+\hat{\mu}_2(\hat{\mu}_1(u,v),x)-\hat{\mu}_2(u,\hat{\mu}_1(v,x))+\hat{\mu}_2(v,\hat{\mu}_1(u,x))\\
                                  &=&\rho_1^R(x)[u,v]_{\g_2}-\rho_1^R(\rho_2^L(v)x)u+\rho_1^R(\rho_2^L(u)x)v-[u,\rho_1^R(x)v]_{\g_2}+[v,\rho_1^R(x)u]_{\g_2}\\
                                  &=&0.
\end{eqnarray*}
Thus, we deduce \eqref{matched-pair-1}. Moreover, we have
\begin{eqnarray*}
&&[\hat{\mu}_1,\hat{\mu}_2]^B(x,u,v)=\rho_1^L(\rho_2^R(u)x)v-\rho_1^L(x)[u,v]_{\g_2}+\rho_1^R(\rho_2^R(v)x)u+[\rho_1^L(x)u,v]_{\g_2}+[u,\rho_1^L(x)v]_{\g_2}=0,\\
&&[\hat{\mu}_1,\hat{\mu}_2]^B(u,x,v)=\rho_1^L(\rho_2^L(u)x)v-\rho_1^R(\rho_2^R(v)x)u+\rho_1^L(x)[u,v]_{\g_2}+[\rho_1^R(x)u,v]_{\g_2}-[u,\rho_1^L(x)v]_{\g_2}=0.
\end{eqnarray*}
Thus, we deduce \eqref{matched-pair-2} and \eqref{matched-pair-3}. Similarly, we can deduce \eqref{matched-pair-4}-\eqref{matched-pair-6}. The proof is finished.
\end{proof}
}

In the Lie algebra context, to relate  matched pairs of Lie algebras to Lie bialgebras and Manin triples for Lie algebras, we need the notions of the coadjoint representation,  which is the dual representation of the adjoint representation. Now we investigate the dual representation  in the Leibniz algebra context.

\begin{lem}\label{lem:dualrep}
Let $(V;\rho^L,\rho^R)$ be a representation of a Leibniz algebra $(\g,[\cdot,\cdot]_{\g})$. Then $$\big(V^*;(\rho^L)^*,-(\rho^L)^*-(\rho^R)^*\big)$$ is a representation of   $(\g,[\cdot,\cdot]_{\g})$, which is called the {\bf dual representation} of  $(V;\rho^L,\rho^R)$.
\end{lem}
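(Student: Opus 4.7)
The plan is a direct verification of the three representation axioms \eqref{rep-1}, \eqref{rep-2}, \eqref{rep-3} for the pair $\big((\rho^L)^*,-(\rho^L)^*-(\rho^R)^*\big)$, using only the defining pairing $\langle A^*\xi,v\rangle=-\langle\xi,Av\rangle$ (as in the paper's convention $\langle L^*_x\xi,y\rangle=-\langle\xi,[x,y]_\g\rangle$) and the three axioms satisfied by $(\rho^L,\rho^R)$ themselves. Every step is a pairing calculation, and no deeper machinery is required.

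First I would establish two auxiliary identities by dualizing \eqref{rep-1} and \eqref{rep-2}. Namely, by pairing both sides against a test vector and using $\langle A^*\xi,v\rangle=-\langle\xi,Av\rangle$ twice (so the two sign flips cancel), one gets
\begin{eqnarray*}
(\rho^L)^*([x,y]_\g) &=& [(\rho^L)^*(x),(\rho^L)^*(y)],\\
(\rho^R)^*([x,y]_\g) &=& [(\rho^L)^*(x),(\rho^R)^*(y)].
\end{eqnarray*}
The first is \eqref{rep-1} for the new left-action. Adding the two and inserting a minus sign, the second yields
\[
-(\rho^L)^*([x,y]_\g)-(\rho^R)^*([x,y]_\g)=[(\rho^L)^*(x),-(\rho^L)^*(y)-(\rho^R)^*(y)],
\]
which is precisely \eqref{rep-2} for the new pair.

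Finally I would verify \eqref{rep-3}. Setting $\rho'^L=(\rho^L)^*$ and $\rho'^R=-(\rho^L)^*-(\rho^R)^*$, one has $\rho'^L(x)+\rho'^R(x)=-(\rho^R)^*(x)$, so the required identity $\rho'^R(y)\circ\rho'^L(x)=-\rho'^R(y)\circ\rho'^R(x)$ collapses to $\rho'^R(y)\circ(\rho^R)^*(x)=0$. Dualizing, this is equivalent to $\rho^R(x)\circ\big(\rho^L(y)+\rho^R(y)\big)=0$, which is exactly \eqref{rep-3} (applied with the roles of $x$ and $y$ interchanged).

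The only subtle point—and the sole potential source of error—is bookkeeping of signs: each $*$ contributes a $-$ when moved across the pairing, so one must keep careful track of how many times $*$ is pushed through in each term. There is no conceptual obstacle; the choice $-(\rho^L)^*-(\rho^R)^*$ on the right slot is precisely engineered so that the failure of $(\rho^R)^*$ alone to satisfy \eqref{rep-3} in the dual is corrected by the $-(\rho^L)^*$ summand via the identity $\rho'^L+\rho'^R=-(\rho^R)^*$.
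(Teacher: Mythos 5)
Your proposal is correct and takes essentially the same route as the paper: a direct verification of \eqref{rep-1}--\eqref{rep-3} for the pair $\big((\rho^L)^*,-(\rho^L)^*-(\rho^R)^*\big)$ by pairing computations, with the signs handled correctly since $[A,B]^*=[A^*,B^*]$ under the convention $\langle A^*\xi,v\rangle=-\langle \xi,Av\rangle$. In particular, your reduction of the third axiom to $\rho^R(x)\circ\big(\rho^L(y)+\rho^R(y)\big)=0$ (i.e.\ \eqref{rep-3} with $x$ and $y$ interchanged) is the same mechanism underlying the paper's verification, where the vanishing term $\rho^R(x)\circ\rho^L(y)+\rho^R(x)\circ\rho^R(y)$ is inserted so that the composition factors as $\big(\rho^L(x)+\rho^R(x)\big)\circ\big(\rho^L(y)+\rho^R(y)\big)$.
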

\begin{proof}
By \eqref{rep-1}, for all $x,y\in\g,~v\in V$ and $\chi\in V^*$,  we have
\begin{eqnarray*}
\langle(\rho^L)^*([x,y]_\g)\chi,v\rangle&=&-\langle \chi,\rho^L([x,y]_\g)v\rangle=-\langle \chi,\rho^L(x)\rho^L(y)v-\rho^L(y)\rho^L(x)v\rangle\\
&=&-\langle (\rho^L)^*(y)(\rho^L)^*(x)\chi,v\rangle+\langle (\rho^L)^*(x)(\rho^L)^*(y)\chi,v\rangle
=\langle [(\rho^L)^*(x),(\rho^L)^*(y)]\chi,v\rangle.
\end{eqnarray*}
Thus, we have $(\rho^L)^*([x,y]_\g)=[(\rho^L)^*(x),(\rho^L)^*(y)]$. By \eqref{rep-1} and \eqref{rep-2}, we have
\begin{eqnarray*}
&&\langle \big(-(\rho^L)^*([x,y]_\g)-(\rho^R)^*([x,y]_\g)\big)\chi,v\rangle\\&=&\langle \chi,\rho^L([x,y]_\g)v+\rho^R([x,y]_\g)v\rangle\\
                                                              &=&\langle \chi,\rho^L(x)\rho^L(y)v-\rho^L(y)\rho^L(x)v+\rho^L(x)\rho^R(y)v-\rho^R(y)\rho^L(x)v\rangle\\
                                  &=&\langle (\rho^L)^*(y)(\rho^L)^*(x)\chi,v\rangle-(\rho^L)^*(x)(\rho^L)^*(y)\chi,v\rangle\\
                                  &&+\langle (\rho^R)^*(y)(\rho^L)^*(x)\chi,v\rangle-\langle (\rho^L)^*(x)(\rho^R)^*(y)\chi,v\rangle\\
                                  &=&\langle [(\rho^L)^*(x),-(\rho^L)^*(y)-(\rho^R)^*(y)]\chi,v\rangle.
\end{eqnarray*}
Thus, we have $-(\rho^L)^*([x,y]_\g)-(\rho^R)^*([x,y]_\g)=[(\rho^L)^*(x),-(\rho^L)^*(y)-(\rho^R)^*(y)]$. \emptycomment{ For all $x,y\in\g,v\in V$ and $\chi\in V^*$, by \eqref{rep-3}, we have
\begin{eqnarray*}
&&\langle (-(\rho^L)^*(y)-(\rho^R)^*(y))((\rho^L)^*(x)\chi),v\rangle\\&=&-\langle \chi,(\rho^L(x)\circ \rho^L(y)+\rho^L(x)\circ \rho^R(y))v\rangle\\
                                  &=&-\langle \chi,(\rho^L(x)\circ \rho^L(y)+\rho^R(x)\circ\rho^L(y)+\rho^L(x)\circ \rho^R(y)+\rho^R(x)\circ \rho^R(y))v\rangle\\
                                  &=&-\langle \chi,(\rho^L(x)+\rho^R(x))(\rho^L(y)+\rho^R(y))v\rangle\\
                                  &=&-\langle (-(\rho^L)^*(y)-(\rho^R)^*(y))(-(\rho^L)^*(x)-(\rho^R)^*(x))\chi,v\rangle.
\end{eqnarray*}
}
Similarly, we can show that $$\Big(-(\rho^L)^*(y)-(\rho^R)^*(y)\Big)\circ (\rho^L)^*(x)=-\Big(-(\rho^L)^*(y)-(\rho^R)^*(y)\Big)\circ \Big(-(\rho^L)^*(x)-(\rho^R)^*(x)\Big).$$
Thus, $\big(V^*;(\rho^L)^*,-(\rho^L)^*-(\rho^R)^*\big)$ is a representation of   $(\g,[\cdot,\cdot]_{\g})$.
\end{proof}

\begin{defi}
  A {\bf Manin triple of Leibniz algebras} is a triple $(\huaG,\g_1,\g_2)$, where
  \begin{itemize}
    \item $(\huaG,[\cdot,\cdot]_{\huaG},\omega)$ is a quadratic Leibniz algebra;
    \item both $\g_1$ and $\g_2$ are isotropic subalgebras of $(\huaG,[\cdot,\cdot]_{\huaG})$;
    \item $\huaG=\g_1\oplus\g_2$ as vector spaces.
  \end{itemize}
\end{defi}

\begin{ex}{\rm
 Let $(\g,[\cdot,\cdot]_\g)$ be a Leibniz algebra. Then   $(\g\ltimes_{L^*,-L^*-R^*}\g^*,\g,\g^*)$ is a Manin triple of Leibniz algebras, where the natural nondegenerate skew-symmetric bilinear form $\omega$ on $\g\oplus\g^*$ is given by:
\begin{eqnarray}\label{phase-space}
\omega(x+\xi,y+\eta)=\langle \xi,y\rangle-\langle \eta,x\rangle,\,\,\,\,\forall x,y\in \g,~\xi,\eta\in \g^*.
\end{eqnarray}

%\begin{eqnarray}\label{phase-space}
%\omega(x+\alpha,y+\beta)=\langle \alpha,y\rangle-\langle \beta,x\rangle,\,\,\,\,\forall x,y\in \g,\alpha,\beta\in \g^*.
%\end{eqnarray}
 %We denote the Leibniz bracket $[\cdot,\cdot]_{\ltimes}$ by $\Omega$.
 }
\end{ex}

%\begin{proof}
%By Proposition \ref{Leibniz-Manin-triple}, $(\g\oplus\g^*;\g,\g^*)$ is Manin triple of Leibniz algebras with the invariant bilinear form given by \eqref{phase-space}. Then, $(\g\ltimes_{L^*,-L^*-R^*}\g^*,\g,\g^*)$ is a quadratic twilled Leibniz algebra. The proof is finished.
%\end{proof}

\emptycomment{
\begin{proof}
By \eqref{rep-1}, for all $x,y,z\in\g$ and $\xi\in\g^*$,  we have
\begin{eqnarray*}
\langle l^*_{[x,y]_\g}\xi,z\rangle=-\langle \xi,l_{[x,y]_\g}z\rangle=-\langle \xi,l_x(l_yz)-l_y(l_xz)\rangle&=&-\langle l^*_y(l^*_x\xi),z\rangle+\langle l^*_x(l^*_y\xi),z\rangle=\langle [l^*_x,l^*_y]\xi,z\rangle.
\end{eqnarray*}
Thus, we have $l^*_{[x,y]_\g}=[l^*_x,l^*_y]$. For all $x,y,z\in\g$ and $\xi\in\g^*$, by \eqref{rep-1} and \eqref{rep-2}, we have
\begin{eqnarray*}
\langle -l^*_{[x,y]_\g}-r^*_{[x,y]_\g}\xi,z\rangle=\langle \xi,l_{[x,y]_\g}z+r_{[x,y]_\g}z\rangle&=&\langle \xi,l_x(l_yz)-l_y(l_xz)+l_x(r_yz)-r_y(l_xz)\rangle\\
                                  &=&\langle l^*_y(l^*_x\xi),z\rangle-\langle l^*_x(l^*_y\xi),z\rangle+\langle r^*_y(l^*_x\xi),z\rangle-\langle l^*_x(r^*_y\xi),z\rangle\\
                                  &=&\langle [l^*_x,-l^*_{y}-r^*_{y}]\xi,z\rangle.
\end{eqnarray*}
Thus, we have $-l^*_{[x,y]_\g}-r^*_{[x,y]_\g}=[l^*_x,-l^*_{y}-r^*_{y}]$. For all $x,y,z\in\g$ and $\xi\in\g^*$, by \eqref{rep-3}, we have
\begin{eqnarray*}
\langle (-l^*_{y}-r^*_{y})(l^*_x\xi),z\rangle=-\langle \xi,(l_{x}\circ l_{y}+l_{x}\circ r_{y})z\rangle&=&-\langle \xi,(l_{x}\circ l_{y}+r_x\circ l_y+l_{x}\circ r_{y}+r_x\circ r_y)z\rangle\\
                                  &=&-\langle \xi,(l_{x}+r_x)\circ(l_y+r_{y})z\rangle\\
                                  &=&-\langle (-l^*_{y}-r^*_{y})((-l^*_{x}-r^*_{x})\xi),z\rangle.
\end{eqnarray*}
Thus, we have $(-l^*_{y}-r^*_{y})\circ l^*_x=-(-l^*_{y}-r^*_{y})\circ (-l^*_{x}-r^*_{x})$. The proof is finished.
\end{proof}
}

\emptycomment{
In the following, we concentrate on the case that $\g_1=\g$ and $\g_2=\g^*$, the dual space of $\g$ and
$$
\rho^L_{1}=L^*,\rho^R_{1}=-L^*-R^*,\rho^L_{2}=\huaL^*,\rho^R_{2}=-\huaL^*-\huaR^*.
$$
}

\emptycomment{
\begin{pro}\label{Manin-triple}
Let $(\g,[\cdot,\cdot]_\g)$ and $(\g^*,[\cdot,\cdot]_{\g^*})$ be two Leibniz algebras. Then $(\g,\g^*;L^*,-L^*-R^*,\huaL^*,-\huaL^*-\huaR^*)$ is a matched pair of
Leibniz algebras if and only if for all $x,y\in\g,\xi,\eta\in\g^*$,
\begin{eqnarray}
\label{Manin-triple-1}&&(-L^*_x-R^*_x)[\xi,\eta]_{\g^*}=[\xi,(-L^*_x-R^*_x)\eta]_{\g^*}-[\eta,(-L^*_x-R^*_x)\xi]_{\g^*}+(-L^*_{\huaL^*_\eta x}-R^*_{\huaL^*_\eta x})\xi-(-L^*_{\huaL^*_\xi}x-R^*_{\huaL^*_\xi}x)\eta;\\
\label{Manin-triple-2}&&\rho^L_{1}(x)[u,v]_{\g_2}=[\rho^L_{1}(x)u,v]_{\g_2}+[u,\rho^L_{1}(x)v]_{\g_2}+\rho^L_{1}(\rho^R_{2}(u)x)v+\rho^R_{1}(\rho^R_{2}(v)x)u;\\
\label{Manin-triple-3}&&[\rho^L_{1}(x)u,v]_{\g_2}+\rho^L_{1}(\rho^R_{2}(u)x)v+[\rho^R_{1}(x)u,v]_{\g_2}+\rho^L_{1}(\rho^L_{2}(u)x)v=0;\\
\label{Manin-triple-4}&&\rho^R_{2}(u)[x,y]_{\g_1}=[x,\rho^R_{2}(u)y]_{\g_1}-[y,\rho^R_{2}(u)x]_{\g_1}+\rho^R_{2}(\rho^L_{1}(y)u)x-\rho^R_{2}(\rho^L_{1}(x)u)y;\\
\label{Manin-triple-5}&&\rho^L_{2}(u)[x,y]_{\g_1}=[\rho^L_{2}(u)x,y]_{\g_1}+[x,\rho^L_{2}(u)y]_{\g_1}+\rho^L_{2}(\rho^R_{1}(x)u)y+\rho^R_{2}(\rho^R_{1}(y)u)x;\\
\label{Manin-triple-6}&&[\rho^L_{2}(u)x,y]_{\g_1}+\rho^L_{2}(\rho^R_{1}(x)u)y+[\rho^R_{2}(u)x,y]_{\g_1}+\rho^L_{2}(\rho^L_{1}(x)u)y=0,
\end{eqnarray}

\begin{eqnarray}
\label{Manin-triple-1}&&\rho^R_{1}(x)[u,v]_{\g_2}=[u,\rho^R_{1}(x)v]_{\g_2}-[v,\rho^R_{1}(x)u]_{\g_2}+\rho^R_{1}(\rho^L_{2}(v)x)u-\rho^R_{1}(\rho^L_{2}(u)x)v;\\
\label{Manin-triple-2}&&\rho^L_{1}(x)[u,v]_{\g_2}=[\rho^L_{1}(x)u,v]_{\g_2}+[u,\rho^L_{1}(x)v]_{\g_2}+\rho^L_{1}(\rho^R_{2}(u)x)v+\rho^R_{1}(\rho^R_{2}(v)x)u;\\
\label{Manin-triple-3}&&[\rho^L_{1}(x)u,v]_{\g_2}+\rho^L_{1}(\rho^R_{2}(u)x)v+[\rho^R_{1}(x)u,v]_{\g_2}+\rho^L_{1}(\rho^L_{2}(u)x)v=0;\\
\label{Manin-triple-4}&&\rho^R_{2}(u)[x,y]_{\g_1}=[x,\rho^R_{2}(u)y]_{\g_1}-[y,\rho^R_{2}(u)x]_{\g_1}+\rho^R_{2}(\rho^L_{1}(y)u)x-\rho^R_{2}(\rho^L_{1}(x)u)y;\\
\label{Manin-triple-5}&&\rho^L_{2}(u)[x,y]_{\g_1}=[\rho^L_{2}(u)x,y]_{\g_1}+[x,\rho^L_{2}(u)y]_{\g_1}+\rho^L_{2}(\rho^R_{1}(x)u)y+\rho^R_{2}(\rho^R_{1}(y)u)x;\\
\label{Manin-triple-6}&&[\rho^L_{2}(u)x,y]_{\g_1}+\rho^L_{2}(\rho^R_{1}(x)u)y+[\rho^R_{2}(u)x,y]_{\g_1}+\rho^L_{2}(\rho^L_{1}(x)u)y=0,
\end{eqnarray}
\end{pro}
}

For a Leibniz algebra $(\g^*,[\cdot,\cdot]_{\g^*})$, let $\triangle:\g\longrightarrow\otimes^2 \g$  be the dual map of  $[\cdot,\cdot]_{\g^*}:\otimes^2 \g^*\longrightarrow\g^*$, i.e.
\begin{eqnarray*}
 \langle \triangle x,\xi\otimes \eta\rangle=\langle x,[\xi,\eta]_{\g^*}\rangle.
\end{eqnarray*}

\begin{defi}\label{defi:Leibnizbialgebra}
Let $(\g,[\cdot,\cdot]_\g)$ and $(\g^*,[\cdot,\cdot]_{\g^*})$ be   Leibniz algebras. Then $(\g,\g^*)$ is called a {\bf Leibniz bialgebra} if the following conditions hold:
\begin{itemize}
    \item[\rm(a)]  For all $x,y\in\g$, we have
   $$ \tau_{12}\Big(( R_y\otimes{\Id})(\triangle x)\Big)=(R_x\otimes {\Id})(\triangle y),$$
   where  $\tau_{12}:\g\otimes\g\lon\g\otimes\g$ is the exchange operator defined by
$
\tau_{12}(x\otimes y)=y\otimes x.
$
    \item[\rm(b)] For all $x,y\in\g$, we have
$$\triangle[x,y]_\g=\Big(({\Id}\otimes R_y-L_y\otimes{\Id}-R_y\otimes{\Id})\circ({\Id}+\tau_{12})\Big)\triangle x+\big({\Id}\otimes L_x+L_x\otimes{\Id}\big)\triangle y.$$
  \end{itemize}
\end{defi}

Until now, we have recalled the notion of a matched pair, introduced the notions of a Manin triple of Leibniz algebras and a Leibniz bialgebra. Similar to the case of Lie algebras, these  objects are equivalent when we consider the dual representation of the regular representation in a matched pair of Leibniz algebras.   The following theorem is the main result in this section.

\begin{thm}\label{thm:equivalent}
Let $(\g,[\cdot,\cdot]_\g)$ and $(\g^*,[\cdot,\cdot]_{\g^*})$ be two Leibniz algebras. Then the following conditions are equivalent.
\begin{itemize}
    \item[\rm(i)] $(\g,\g^*)$ is a Leibniz bialgebra.
    \item[\rm(ii)]$(\g,\g^*;(L^*,-L^*-R^*),(\huaL^*,-\huaL^*-\huaR^*))$ is a matched pair of Leibniz algebras.
    \item[\rm(iii)]  $(\g\oplus\g^*,\g,\g^*)$ is a Manin triple of Leibniz algebras, where the invariant skew-symmetric bilinear form on $\g\oplus\g^*$ is given by \eqref{phase-space}.
  \end{itemize}
\end{thm}

\begin{proof}
First we prove that (ii) is  equivalent to (iii).

Let $(\g,\g^*;(L^*,-L^*-R^*),(\huaL^*,-\huaL^*-\huaR^*))$ be a matched pair of Leibniz algebras. Then $(\g\oplus\g^*,[\cdot,\cdot]_{\bowtie})$ is a Leibniz algebra, where $[\cdot,\cdot]_{ \bowtie }$ is given by \eqref{eq:dm}. We only need to prove that $\omega$ satisfies the invariant condition \eqref{Invariant-bilinear-forms}. For all $x,y,z\in\g$ and $\xi,\eta,\alpha\in\g^*$, we have
\begin{eqnarray*}
&&\omega(x+\xi,[y+\eta,z+\alpha]_{\bowtie})\\
&=&\omega(x+\xi,[y,z]_\g+L^*_y\alpha+(-\huaL^*_{\alpha}-\huaR^*_{\alpha})y+\huaL^*_{\eta}z+(-L^*_z-R^*_z)\eta+[\eta,\alpha]_{\g^*})\\
 &=&\langle \xi,[y,z]_\g\rangle-\langle \xi,\huaL^*_{\alpha}y\rangle-\langle \xi,\huaR^*_{\alpha}y\rangle+\langle \xi,\huaL^*_{\eta}z\rangle-\langle L^*_y\alpha,x\rangle+\langle L^*_z\eta,x\rangle+\langle R^*_z\eta,x\rangle-\langle [\eta,\alpha]_{\g^*},x\rangle\\
  &=&\langle \xi,[y,z]_\g\rangle+\langle [\alpha,\xi]_{\g^*},y\rangle+\langle [\xi,\alpha]_{\g^*},y\rangle-\langle [\eta,\xi]_{\g^*},z\rangle\\
                               &&+\langle \alpha,[y,x]_\g\rangle-\langle \eta,[z,x]_\g\rangle-\langle \eta,[x,z]_\g\rangle-\langle [\eta,\alpha]_{\g^*},x\rangle.
\end{eqnarray*}
Moreover, we have
\begin{eqnarray*}
&&\omega([x+\xi,z+\alpha]_{\bowtie}+[z+\alpha,x+\xi]_{\bowtie},y+\eta)\\
 &=&\omega([x,z]_\g+L^*_x\alpha+(-\huaL^*_{\alpha}-\huaR^*_{\alpha})x+\huaL^*_{\xi}z+(-L^*_z-R^*_z)\xi+[\xi,\alpha]_{\g^*}\\
 &&+[z,x]_\g+L^*_z\xi+(-\huaL^*_{\xi}-\huaR^*_{\xi})z+\huaL^*_{\alpha}x+(-L^*_x-R^*_x)\alpha+[\alpha,\xi]_{\g^*},y+\eta)\\
 &=&\omega([x,z]_\g-\huaR^*_{\alpha}x-R^*_z\xi+[\xi,\alpha]_{\g^*}+[z,x]_\g-\huaR^*_{\xi}z-R^*_x\alpha+[\alpha,\xi]_{\g^*},y+\eta)\\
  &=&-\langle R^*_z\xi,y\rangle+\langle [\xi,\alpha]_{\g^*},y\rangle-\langle R^*_x\alpha,y\rangle+\langle [\alpha,\xi]_{\g^*},y\rangle\\
  &&-\langle \eta,[x,z]_\g\rangle+\langle \eta,\huaR^*_{\alpha}x\rangle-\langle \eta,[z,x]_\g\rangle+\langle \eta,\huaR^*_{\xi}z\rangle\\
   &=&\langle \xi,[y,z]_\g\rangle+\langle [\xi,\alpha]_{\g^*},y\rangle+\langle\alpha,[y,x]_\g\rangle+\langle [\alpha,\xi]_{\g^*},y\rangle\\
   &&-\langle \eta,[x,z]_\g\rangle-\langle [\eta,\alpha]_{\g^*},x\rangle-\langle \eta,[z,x]_\g\rangle-\langle [\eta,\xi]_{\g^*},z\rangle.
\end{eqnarray*}
Thus, $\omega$ satisfies the invariant condition \eqref{Invariant-bilinear-forms}.

On the other hand, if $(\huaG,\g,\g^*)$ is a Manin triple of Leibniz algebras with the invariant bilinear form given by \eqref{phase-space}. For all $x\in\g,~\xi,\eta\in\g^*$, by \eqref{Invariant-bilinear-forms}, we have
\begin{eqnarray*}
\langle \eta,\rho_2^R(\xi)x\rangle=\omega(\eta,[x,\xi]_\huaG)=\omega([\eta,\xi]_{\g^*}+[\xi,\eta]_{\g^*},x)=\langle \huaR_{\xi}\eta+\huaL_{\xi}\eta,x\rangle=-\langle \eta,\huaR_{\xi}^*x+\huaL_{\xi}^*x\rangle,
\end{eqnarray*}
which implies that $\rho_2^R=-\huaL^*-\huaR^*$. We have
\begin{eqnarray*}
\langle \rho_1^L(x)\xi,y\rangle=-\omega(y,[x,\xi]_\huaG)=-\omega([y,\xi]_\huaG+[\xi,y]_\huaG,x)=-\omega(\xi,[x,y]_\g)=-\langle\xi,L_xy\rangle=\langle L_x^*\xi,y\rangle,
\end{eqnarray*}
which implies that $\rho_1^L=L^*$. Similarly, we have $\rho_1^R=-L^*-R^*$ and $\rho_2^L=\huaL^*$. Thus, $(\g,\g^*;(L^*,-L^*-R^*),(\huaL^*,-\huaL^*-\huaR^*))$ is a matched pair.  \vspace{3mm}

Next we prove that (i) is  equivalent to (ii).

Let $(\g,[\cdot,\cdot]_\g)$ and $(\g^*,[\cdot,\cdot]_{\g^*})$ be  Leibniz algebras. Consider their representations $(\g^*;L^*,-L^*-R^*)$ and $(\g;\huaL^*,-\huaL^*-\huaR^*)$.   For all $x,y\in\g,~\xi,\eta\in\g^*$, consider the left hand side of \eqref{matched-pair-6}, we have
\begin{eqnarray*}
[\huaL^*_\xi x,y]_\g+\huaL^*_{(-L^*_x-R^*_x)\xi}y+[(-\huaL^*_\xi-\huaR^*_\xi)x,y]_{\g}+\huaL^*_{L^*_x\xi}y=-\huaL^*_{R^*_x\xi}y-[\huaR^*_\xi x,y]_{\g}.
\end{eqnarray*}
Furthermore, by straightforward computations, we have
\begin{eqnarray*}
\langle-\huaL^*_{R^*_x\xi}y-[\huaR^*_\xi x,y]_{\g},\eta\rangle&=&\langle y,[R^*_x\xi,\eta]_{\g^*}\rangle-\langle R_y\huaR^*_\xi x,\eta\rangle=\langle \triangle y,R^*_x\xi\otimes\eta\rangle-\langle \triangle x,R^*_y\eta\otimes\xi \rangle\\
                          &=&-\langle (R_x\otimes{\Id})(\triangle y),\xi\otimes\eta\rangle+\langle (R_y\otimes{\Id})(\triangle x),\eta\otimes\xi \rangle\\
             &=&-\langle (R_x\otimes{\Id})(\triangle y),\xi\otimes\eta\rangle+\langle \tau_{12}[(R_y\otimes{\Id})(\triangle x)],\xi\otimes\eta \rangle.
\end{eqnarray*}
Therefore, \eqref{matched-pair-6} is  equivalent to
\begin{eqnarray}\label{leibniz-bialgebra-1}
(R_x\otimes{\Id})(\triangle y)=\tau_{12}\Big((R_y\otimes{\Id})(\triangle x)\Big).
\end{eqnarray}
The left hand side of  \eqref{matched-pair-5} is equal to
\begin{eqnarray*}
 \huaL^*_\xi[x,y]_\g-[\huaL^*_\xi x,y]_\g-[x,\huaL^*_\xi y]_\g-\huaL^*_{(-L^*_x-R^*_x)\xi}y-(-\huaL^*_{(-L^*_y-R^*_y)\xi}-\huaR^*_{(-L^*_y-R^*_y)\xi})x.
\end{eqnarray*}
Furthermore, by straightforward computations, we have
\begin{eqnarray*}
&&\langle\huaL^*_\xi[x,y]_\g-[\huaL^*_\xi x,y]_\g-[x,\huaL^*_\xi y]_\g+\huaL^*_{L^*_x\xi}y+\huaL^*_{R^*_x\xi}y-\huaL^*_{L^*_y\xi}x-\huaL^*_{R^*_y\xi}x-\huaR^*_{L^*_y\xi}x-\huaR^*_{R^*_y\xi}x,\eta\rangle\\
&=&-\langle[x,y]_\g,[\xi,\eta]_{\g^*}\rangle-\langle x,[\xi ,R^*_y\eta]_{\g^*}\rangle-\langle y,[\xi,L^*_x\eta]_{\g^*}\rangle-\langle y,[L^*_x\xi,\eta]_{\g^*}\rangle-\langle y,[R^*_x\xi,\eta]_{\g^*}\rangle\\
&&+\langle x,[L^*_y\xi,\eta]_{\g^*}\rangle+\langle x,[R^*_y\xi,\eta]_{\g^*}\rangle+\langle x,[\eta,L^*_y\xi]_{\g^*}\rangle+\langle x,[\eta,R^*_y\xi]_{\g^*}\rangle\\
&=&-\langle\triangle[x,y]_\g,\xi\otimes\eta\rangle-\langle\triangle x,\xi\otimes R^*_y\eta\rangle-\langle\triangle y,\xi\otimes L^*_x\eta\rangle-\langle\triangle y,L^*_x\xi\otimes\eta\rangle-\langle\triangle y,R^*_x\xi\otimes\eta\rangle\\
&&+\langle\triangle x,L^*_y\xi\otimes\eta\rangle+\langle\triangle x,R^*_y\xi\otimes\eta\rangle+\langle \triangle x,\eta\otimes L^*_y\xi\rangle+\langle\triangle x,\eta\otimes R^*_y\xi\rangle\\
&=&-\langle\triangle[x,y]_\g,\xi\otimes\eta\rangle+\langle({\Id}\otimes R_y)(\triangle x),\xi\otimes\eta\rangle+\langle({\Id}\otimes L_x)(\triangle y),\xi\otimes\eta\rangle+\langle(L_x\otimes{\Id})(\triangle y),\xi\otimes\eta\rangle\\
&&+\langle(R_x\otimes{\Id})(\triangle y),\xi\otimes\eta\rangle-\langle(L_y\otimes{\Id})(\triangle x),\xi\otimes\eta\rangle-\langle(R_y\otimes{\Id})(\triangle x),\xi\otimes\eta\rangle\\
&&-\langle (\tau_{12}\big(({\Id}\otimes L_y)(\triangle x)\big),\xi\otimes\eta\rangle-\langle\tau_{12}\big(({\Id}\otimes R_y)(\triangle x)\big),\xi\otimes\eta\rangle.
\end{eqnarray*}
Therefore,  \eqref{matched-pair-5} is  equivalent to
\begin{eqnarray}\label{leibniz-bialgebra-2}
\triangle[x,y]_\g&=&\Big({\Id}\otimes R_y-L_y\otimes{\Id}-R_y\otimes{\Id}-\tau_{12}\circ({\Id}\otimes L_y)-\tau_{12}\circ({\Id}\otimes R_y)\Big)(\triangle x)\\
\nonumber&&+\big({\Id}\otimes L_x+L_x\otimes{\Id}+R_x\otimes{\Id}\big)(\triangle y).
\end{eqnarray}
The left hand side of \eqref{matched-pair-4} is equal to
\begin{eqnarray*}
 (-\huaL^*_\xi-\huaR^*_\xi)[x,y]_\g+[x,\huaL^*_\xi y+\huaR^*_\xi y]_\g -[y,\huaL^*_\xi x+\huaR^*_\xi x]_\g +\huaL^*_{L^*_y\xi}x+\huaR^*_{L^*_y\xi}x-\huaL^*_{L^*_x\xi}y-\huaR^*_{L^*_x\xi}y.
\end{eqnarray*}
Furthermore, by straightforward computations, we have
\begin{eqnarray*}
&&\langle(-\huaL^*_\xi-\huaR^*_\xi)[x,y]_\g+[x,\huaL^*_\xi y+\huaR^*_\xi y]_\g -[y,\huaL^*_\xi x+\huaR^*_\xi x]_\g +\huaL^*_{L^*_y\xi}x+\huaR^*_{L^*_y\xi}x-\huaL^*_{L^*_x\xi}y-\huaR^*_{L^*_x\xi}y,\eta\rangle\\
&=&\langle[x,y]_\g,[\xi,\eta]_{\g^*}\rangle+\langle[x,y]_\g,[\eta,\xi]_{\g^*}\rangle+\langle y,[\xi,L^*_x\eta]_{\g^*}\rangle+\langle y,[L^*_x\eta,\xi]_{\g^*}\rangle-\langle x,[\xi,L^*_y\eta]_{\g^*}\rangle\\
&&-\langle x,[L^*_y\eta,\xi]_{\g^*}\rangle-\langle x,[L^*_y\xi,\eta]_{\g^*}\rangle-\langle x,[\eta,L^*_y\xi]_{\g^*}\rangle+
\langle y,[L^*_x\xi,\eta]_{\g^*}\rangle+\langle y,[\eta,L^*_x\xi]_{\g^*}\rangle\\
&=&\langle\triangle[x,y]_\g,\xi\otimes\eta\rangle+\langle\tau_{12}(\triangle[x,y]_\g),\xi\otimes\eta\rangle-\langle({\Id}\otimes L_x)(\triangle y),\xi\otimes\eta\rangle-\langle\tau_{12}\big((L_x\otimes{\Id})(\triangle y)\big),\xi\otimes\eta\rangle\\
&&+\langle({\Id}\otimes L_y)\triangle x,\xi\otimes\eta\rangle+\langle\tau_{12}\big((L_y\otimes{\Id})(\triangle x)\big),\xi\otimes\eta\rangle+\langle(L_y\otimes{\Id})(\triangle x),\xi\otimes\eta\rangle\\
&&+\langle\tau_{12}\big(({\Id}\otimes L_y)(\triangle x)\big),\xi\otimes\eta\rangle-\langle(L_x\otimes{\Id})(\triangle y),\xi\otimes\eta\rangle-\langle\tau_{12}\big(({\Id}\otimes L_x)(\triangle y)\big),\xi\otimes\eta\rangle.
\end{eqnarray*}
Therefore, \eqref{matched-pair-4} is  equivalent to
\begin{eqnarray}
\label{leibniz-bialgebra-3}\triangle[x,y]_\g+\tau_{12}(\triangle[x,y]_\g)&=&\big({\Id}\otimes L_x+\tau_{12}\circ(L_x\otimes{\Id})+L_x\otimes{\Id}+\tau_{12}\circ({\Id}\otimes L_x)\big)(\triangle y)\\
                                              \nonumber&&-\big({\Id}\otimes L_y+\tau_{12}\circ(L_y\otimes{\Id})+L_y\otimes{\Id}+\tau_{12}\circ({\Id}\otimes L_y)\big)(\triangle x).
\end{eqnarray}
By \eqref{leibniz-bialgebra-1} and \eqref{leibniz-bialgebra-2}, we deduce that
\begin{eqnarray*}
&&\triangle[x,y]_\g+\tau_{12}(\triangle[x,y]_\g)\\&=&\big(\underline{{\Id}\otimes R_y}-L_y\otimes{\Id}\underbrace{-R_y\otimes{\Id}}-\tau_{12}\circ({\Id}\otimes L_y)\underline{-\tau_{12}\circ({\Id}\otimes R_y)}\big)(\triangle x)\\
&&+\big({\Id}\otimes L_x+L_x\otimes{\Id}+\underbrace{R_x\otimes{\Id}}\big)(\triangle y)\\
&&+\big(\underline{\tau_{12}\circ({\Id}\otimes R_y)}-\tau_{12}\circ(L_y\otimes{\Id})\underbrace{-\tau_{12}\circ(R_y\otimes{\Id})}-{\Id}\otimes L_y\underline{-{\Id}\otimes R_y}\big)(\triangle x)\\
&&+\big(\tau_{12}\circ({\Id}\otimes L_x)+\tau_{12}\circ(L_x\otimes{\Id})+\underbrace{\tau_{12}\circ(R_x\otimes{\Id})}\big)(\triangle y)\\
                                              &=& \mbox{the right hand side of }\eqref{leibniz-bialgebra-3}.
\end{eqnarray*}
Thus, by \eqref{matched-pair-5} and \eqref{matched-pair-6}, we can deduce that \eqref{matched-pair-4} holds.

Consider the left hand side of \eqref{matched-pair-3}, it equals to $-L^*_{\huaR^*_\xi x}\eta-[R^*_x\xi,\eta]_{\g^*}$. For all $y\in\g$, we have
\begin{eqnarray*}
\langle-L^*_{\huaR^*_\xi x}\eta-[R^*_x\xi,\eta]_{\g^*},y\rangle=\langle \eta,[\huaR^*_\xi x,y]_\g\rangle+\langle \eta,\huaL^*_{R^*_x\xi}y\rangle=\langle \eta,[\huaR^*_\xi x,y]_\g+\huaL^*_{R^*_x\xi}y\rangle,
\end{eqnarray*}
which implies that \eqref{matched-pair-3} is equivalent to \eqref{matched-pair-6}.
 \emptycomment{
 Moreover, for all $x,y\in\g,\xi,\eta\in\g^*$, we deduce that \eqref{matched-pair-2} as following
\begin{eqnarray*}
&&L^*_x[\xi,\eta]_{\g^*}-[L^*_x\xi,\eta]_{\g^*}-[\xi,L^*_x\eta]_{\g^*}-L^*_{(-\huaL^*_\xi-\huaR^*_\xi)x}\eta-(-L^*_{(-\huaL^*_\eta-\huaR^*_\eta)x}-R^*_{(-\huaL^*_\eta-\huaR^*_\eta)x})\xi\\
&=&L^*_x[\xi,\eta]_{\g^*}-[L^*_x\xi,\eta]_{\g^*}-[\xi,L^*_x\eta]_{\g^*}+L^*_{(\huaL^*_\xi+\huaR^*_\xi)x}\eta-L^*_{(\huaL^*_\eta+\huaR^*_\eta)x}\xi-R^*_{(\huaL^*_\eta+\huaR^*_\eta)x}\xi\\
&=&0.
\end{eqnarray*}
For all $y\in\g$, we have
\begin{eqnarray*}
&&\langle L^*_x[\xi,\eta]_{\g^*}-[L^*_x\xi,\eta]_{\g^*}-[\xi,L^*_x\eta]_{\g^*}+L^*_{(\huaL^*_\xi+\huaR^*_\xi)x}\eta-L^*_{(\huaL^*_\eta+\huaR^*_\eta)x}\xi-R^*_{(\huaL^*_\eta+\huaR^*_\eta)x}\xi,y\rangle\\
&=&-\langle [\xi,\eta]_{\g^*},[x,y]_\g\rangle-\langle \huaL_{L^*_x\xi}\eta,y\rangle+\langle L^*_x\eta,\huaL^*_\xi y\rangle-\langle \eta,[(\huaL^*_\xi+\huaR^*_\xi)x,y]_\g\rangle\\
&&+\langle \xi,[(\huaL^*_\eta+\huaR^*_\eta)x,y]_\g\rangle+\langle \xi,[y,(\huaL^*_\eta+\huaR^*_\eta)x]_\g\rangle\\
&=&\langle\eta,\huaL^*_\xi[x,y]_\g\rangle+\langle\eta,\huaL^*_{L^*_x\xi}y\rangle-\langle\eta,[x,\huaL^*_\xi y]_\g\rangle-\langle \eta,[\huaL^*_\xi x,y]_\g\rangle-\langle\eta,[\huaR^*_\xi x,y]_\g\rangle\\
&&-\langle R_y^*\xi,\huaL^*_\eta x\rangle-\langle R_y^*\xi,\huaR^*_\eta x\rangle-\langle L_y^*\xi,\huaL^*_\eta x\rangle-\langle L_y^*\xi,\huaR^*_\eta x\rangle\\
&=&\langle\eta,\huaL^*_\xi[x,y]_\g\rangle+\langle\eta,\huaL^*_{L^*_x\xi}y\rangle-\langle\eta,[x,\huaL^*_\xi y]_\g\rangle-\langle \eta,[\huaL^*_\xi x,y]_\g\rangle-\langle\eta,[\huaR^*_\xi x,y]_\g\rangle\\
&&+\langle [\eta,R_y^*\xi]_{\g^*},x\rangle+\langle[R_y^*\xi,\eta]_{\g^*},x\rangle+\langle[\eta,L_y^*\xi]_{\g^*},x\rangle+\langle [L_y^*\xi,\eta]_{\g^*},x\rangle\\
&=&\langle\eta,\huaL^*_\xi[x,y]_\g\rangle+\langle\eta,\huaL^*_{L^*_x\xi}y\rangle-\langle\eta,[x,\huaL^*_\xi y]_\g\rangle-\langle \eta,[\huaL^*_\xi x,y]_\g\rangle\underbrace{-\langle\eta,[\huaR^*_\xi x,y]_\g\rangle}\\
&&-\langle\eta,\huaR^*_{R_y^*\xi}x\rangle-\langle\eta,\huaL^*_{R_y^*\xi}x\rangle-\langle\eta,\huaR^*_{L_y^*\xi}x\rangle-\langle \eta,\huaL^*_{L_y^*\xi}x\rangle.
\end{eqnarray*}
}
Similarly, if \eqref{matched-pair-3} holds, we can deduce that \eqref{matched-pair-2} is equivalent to \eqref{matched-pair-5}. Furthermore, by \eqref{matched-pair-2} and \eqref{matched-pair-3}, we  can deduce that \eqref{matched-pair-1} holds naturally. Therefore, $(\g,\g^*;(L^*,-L^*-R^*),(\huaL^*,-\huaL^*-\huaR^*))$ is a matched pair of Leibniz algebras  if and only if  \eqref{leibniz-bialgebra-1} and  \eqref{leibniz-bialgebra-2} hold. Note that \eqref{leibniz-bialgebra-1} is exactly Condition (a) in Definition \ref{defi:Leibnizbialgebra}. Furthermore, if \eqref{leibniz-bialgebra-1} holds, \eqref{leibniz-bialgebra-2} is exactly Condition (b) in Definition \ref{defi:Leibnizbialgebra}. Thus, $(\g,\g^*)$ is a Leibniz bialgebra
   if and only if $(\g,\g^*;(L^*,-L^*-R^*),(\huaL^*,-\huaL^*-\huaR^*))$ is a matched pair of Leibniz algebras. The proof is finished.
\end{proof}

\begin{cor}
  Let $(\g,\g^*)$ be a Leibniz bialgebra. Then $(\g^*,\g)$ is also a Leibniz bialgebra.
\end{cor}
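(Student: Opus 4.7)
The strategy is to invoke the equivalence in Theorem \ref{thm:equivalent} twice, exploiting the manifest symmetry of the quadratic twilled Leibniz algebra picture under swapping the two isotropic subalgebras. Concretely, starting from a Leibniz bialgebra $(\g,\g^*)$, I would first apply the implication $(\mathrm{i})\Rightarrow(\mathrm{iii})$ to obtain that $(\g\oplus\g^*,\g,\g^*)$ is a quadratic twilled Leibniz algebra with the canonical skew-symmetric pairing $\omega$ from \eqref{phase-space}. Then I would produce from this a quadratic twilled Leibniz algebra structure on $(\g^*\oplus\g,\g^*,\g)$, and apply $(\mathrm{iii})\Rightarrow(\mathrm{i})$ with the roles of $\g$ and $\g^*$ exchanged.

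The key observation needed for the middle step is that the notion of a quadratic twilled Leibniz algebra is intrinsically symmetric in its two isotropic subalgebras: the defining conditions (decomposition into isotropic subalgebras, nondegeneracy and invariance of $\omega$ in the sense of \eqref{Invariant-bilinear-forms}) make no distinction between $\g_1$ and $\g_2$. In particular, from the twilled Leibniz algebra $(\g\oplus\g^*,\g,\g^*)$ we may simply relabel the factors to obtain $(\g^*\oplus\g,\g^*,\g)$, and the bilinear form $\omega'$ on $\g^*\oplus\g$ defined canonically by $\omega'(\xi+x,\eta+y)=\langle x,\eta\rangle-\langle y,\xi\rangle$ (viewing $\g=(\g^*)^*$) coincides up to a sign with $\omega$ under the obvious identification $\g^*\oplus\g\cong\g\oplus\g^*$. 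Since $-\omega$ is again nondegenerate, skew-symmetric and satisfies the invariance condition \eqref{Invariant-bilinear-forms}, the relabeled triple is indeed a quadratic twilled Leibniz algebra in the required canonical form.

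Finally, applying Theorem \ref{thm:equivalent} to the pair $(\g^*,\g)$, where the ``starred'' algebra is now $\g=(\g^*)^*$ with its original bracket $[\cdot,\cdot]_\g$ and the underlying algebra is $(\g^*,[\cdot,\cdot]_{\g^*})$, the quadratic twilled Leibniz algebra structure on $(\g^*\oplus\g,\g^*,\g)$ is equivalent to $(\g^*,\g)$ being a Leibniz bialgebra. The only mild care needed is bookkeeping: checking that the representations arising in the matched pair $(\g^*,\g;(\huaL^*,-\huaL^*-\huaR^*),(L^*,-L^*-R^*))$ obtained after swapping match the ones prescribed by Theorem \ref{thm:equivalent}(ii) for the pair $(\g^*,\g)$; this is immediate from the formulas since, under $\g\leftrightarrow\g^*$, the roles of $L,R$ on $\g$ and $\huaL,\huaR$ on $\g^*$ are exactly interchanged, and the canonical pairing is symmetric under this swap. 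No further computation with $\triangle$ or with the six matched pair identities \eqref{matched-pair-1}–\eqref{matched-pair-6} is required, which is the whole point of routing through the symmetric picture (iii).
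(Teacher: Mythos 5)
Your proposal is correct and is exactly the argument the paper intends: the corollary is stated as an immediate consequence of Theorem \ref{thm:equivalent}, using the symmetry of characterization (iii) under exchanging the two isotropic subalgebras. Your sign bookkeeping (that the canonical form for the pair $(\g^*,\g)$ is $-\omega$, which is still nondegenerate, skew-symmetric and invariant since \eqref{Invariant-bilinear-forms} is linear in $\omega$, with $\g=(\g^*)^*$ by finite-dimensionality) is precisely the only point that needs checking, and you handle it correctly.
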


\section{(Relative) Rota-Baxter operators and  twisting theory}\label{sec:K}

In this section, first we recall the graded Lie algebra whose Maurer-Cartan elements are Leibniz algebra structures, and define the bidegree of a multilinear map which is the technical tool in our later study. Then we introduce the notion of a relative Rota-Baxter operator on a Leibniz algebra, and construct the graded Lie algebra whose Maurer-Cartan elements are relative Rota-Baxter operators.  Finally, we give the twisting theory of twilled Leibniz algebras. These structures and theories are the main ingredient in our later study of Leibniz bialgebras.

\subsection{Lift and bidegree of multilinear maps}\label{sec:K1}

A permutation $\sigma\in\mathbb S_n$ is called an $(i,n-i)$-{\bf shuffle} if $\sigma(1)<\cdots <\sigma(i)$ and $\sigma(i+1)<\cdots <\sigma(n)$. If $i=0$ or $n$ we assume $\sigma=\Id$. The set of all $(i,n-i)$-shuffles will be denoted by $\mathbb S_{(i,n-i)}$. The notion of an $(i_1,\cdots,i_k)$-shuffle and the set $\mathbb S_{(i_1,\cdots,i_k)}$ are defined analogously.

Let $\g$ be a vector space. We consider the graded vector space $C^*(\g,\g)=\oplus_{n\ge 1}C^n(\g,\g)=\oplus_{n\ge 1}\Hom(\otimes^n\g,\g)$. The {\bf Balavoine bracket} on the graded vector space $C^*(\g,\g)$ is given
by:
\begin{eqnarray}\label{leibniz-bracket}
[P,Q]_\B=P\bar{\circ}Q-(-1)^{pq}Q\bar{\circ}P,\,\,\,\,\forall P\in C^{p+1}(\g,\g),Q\in C^{q+1}(\g,\g),
\end{eqnarray}
where $P\bar{\circ}Q\in C^{p+q+1}(\g,\g)$ is defined by
\begin{eqnarray}
P\bar{\circ}Q=\sum_{k=1}^{p+1}(-1)^{(k-1)q}P\circ_k Q,
\end{eqnarray}
and $\circ_k$ is defined by
\begin{eqnarray}
 \nonumber&&(P\circ_kQ)(x_1,\cdots,x_{p+q+1})\\
&=&\sum_{\sigma\in\mathbb S_{(k-1,q)}}(-1)^{\sigma}P(x_{\sigma(1)},\cdots,x_{\sigma(k-1)},Q(x_{\sigma(k)},\cdots,x_{\sigma(k+q-1)},x_{k+q}),x_{k+q+1},\cdots,x_{p+q+1}).
\end{eqnarray}
It is well known that

\begin{thm}{\rm (\cite{Balavoine-1,Fialowski})}\label{leibniz-algebra-B}
With the above notations, $(C^*(\g,\g),[\cdot,\cdot]_{\B})$ is a graded Lie algebra. Its Maurer-Cartan elements are precisely the Leibniz algebra structures on $\g$.
\end{thm}

Let $\g_1$ and $\g_2$ be vector spaces and elements in $\g_1$ will be denoted by $x,y,z, x_i$ and elements in $\g_2$ will be denoted by $u,v,v_i$. Let $c:\g_2^{\otimes n}\lon \g_1$ be a linear map. We can construct a linear map $\hat{c}\in C^n(\g_1\oplus\g_2,\g_1\oplus\g_2)$ by
\begin{eqnarray*}
\hat{c}\big((x_1,v_1)\otimes\cdots\otimes(x_n,v_n)\big):=(c(v_1,\cdots,v_n),0).
\end{eqnarray*}
In general, for a given linear map $f:\g_{i(1)}\otimes\g_{i(2)}\otimes\cdots\otimes\g_{i(n)}\lon\g_j$, $i(1),\cdots,i(n),j\in\{1,2\}$, we define a linear map $\hat{f}\in C^n(\g_1\oplus\g_2,\g_1\oplus\g_2)$ by

\[
\hat{f}:=\left\{
\begin{array}{ll}
f &\mbox {on $\g_{i(1)}\otimes\g_{i(2)}\otimes\cdots\otimes\g_{i(n)}$, }\\
0 &\mbox {all other cases.}
\end{array}
\right.
\]
We call the linear map $\hat{f}$ a {\bf horizontal lift} of $f$, or simply a lift. Let $H:\g_2\lon\g_1$ be a linear map. Its lift is given by
$
\hat{H}(x,v)=(H(v),0).
$
Obviously we have $\hat{H}\circ\hat{H}=0.$\vspace{2mm}

We denote by $\g^{l,k}$ the direct sum of all $(l+k)$-tensor powers of $\g_1$ and $\g_2$, where $l$ (resp. $k$) is the number of $\g_1$ (resp. $\g_2$).
By the properties of the $\Hom$-functor, we have
\begin{eqnarray}\label{decomposition}
C^n(\g_1\oplus\g_2,\g_1\oplus\g_2)\cong\sum_{l+k=n}\Hom(\g^{l,k},\g_1)\oplus\sum_{l+k=n}\Hom(\g^{l,k},\g_2),
\end{eqnarray}
where the isomorphism is the horizontal lift.

\begin{defi}
A linear map $f\in \Hom\big(\otimes^n(\g_1\oplus\g_2),(\g_1\oplus\g_2)\big)$ has a {\bf bidegree} $l|k$, which is denoted by $||f||=l|k$,   if $f$ satisfies the following four conditions:
\begin{itemize}
\item[\rm(i)] $l+k+1=n;$
\item[\rm(ii)] If $X$ is an element in $\g^{l+1,k}$, then $f(X)\in\g_1;$
\item[\rm(iii)] If $X$ is an element in $\g^{l,k+1}$, then $f(X)\in\g_2;$
\item[\rm(iv)] All the other case, $f(X)=0.$
\end{itemize}
\end{defi}
A linear map $f$ is said to be homogeneous  if $f$ has a bidegree.
We have $l+k\ge0,~k,l\ge-1$ because $n\ge1$ and $l+1,~k+1\ge0$. For instance, the lift $\hat{H}\in C^1(\g_1\oplus\g_2,\g_1\oplus\g_2)$ of $H:\g_2\lon\g_1$ has the bidegree $-1|1$.

It is obvious that we have the following lemmas:
\emptycomment{
\begin{lem}
  The bidegree of $f\in C^n(\g_1\oplus\g_2,\g_1\oplus\g_2)$ is $l|k$ if and only if the following four conditions hold:
\begin{itemize}
\item[\rm(i)] $l+k+1=n;$
\item[\rm(ii)] If $X$ is an element in $\g^{l+1,k}$, then $f(X)\in\g_1;$
\item[\rm(iii)] If $X$ is an element in $\g^{l,k+1}$, then $f(X)\in\g_2;$
\item[\rm(iv)] All the other case, $f(X)=0.$
\end{itemize}
\end{lem}
}

\begin{lem}\label{Zero-condition-1}
Let $f_1,\cdots,f_k\in C^n(\g_1\oplus\g_2,\g_1\oplus\g_2)$ be homogeneous linear maps and the bidegrees of $f_i$ are
different. Then $f_1+\cdots+f_k=0$ if and only if $f_1=\cdots=f_k=0.$
\end{lem}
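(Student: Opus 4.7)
The backward implication is immediate, so the entire content is in showing that $f_1 + \cdots + f_k = 0$ forces each summand to vanish. My plan is to exploit the decomposition
\[
(\g_1\oplus\g_2)^{\otimes n} \;=\; \bigoplus_{l+k=n} \g^{l,k}
\]
together with the direct sum decomposition $\g_1 \oplus \g_2$ on the target side, so that the identification \eqref{decomposition} really is a genuine direct sum as claimed just before the lemma. The lemma then follows by reading off each $f_i$ from an appropriate component.

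More concretely, let $f_i$ have bidegree $l_i|k_i$, so by conditions (i)--(iv) of the definition, $f_i$ sends $\g^{l_i+1,k_i}$ into $\g_1$, sends $\g^{l_i,k_i+1}$ into $\g_2$, and vanishes on every other summand $\g^{p,q}$ of $(\g_1\oplus\g_2)^{\otimes n}$. First I would restrict the equation $f_1+\cdots+f_k = 0$ to the subspace $\g^{l_i+1,k_i}$ and compose with the projection $\pi_1 \colon \g_1\oplus\g_2 \lon \g_1$. For $j\neq i$, the map $f_j$ either vanishes on $\g^{l_i+1,k_i}$ (if $(l_j+1,k_j)\neq(l_i+1,k_i)$ and $(l_j,k_j+1)\neq(l_i+1,k_i)$) or lands inside $\g_2$ (in the remaining case $(l_j,k_j+1)=(l_i+1,k_i)$, which is the only way the bidegrees $l_j|k_j\neq l_i|k_i$ can still hit this subspace). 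In either case $\pi_1\circ f_j$ vanishes on $\g^{l_i+1,k_i}$, so we conclude $f_i|_{\g^{l_i+1,k_i}}=0$ as a $\g_1$-valued map.

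Repeating the same argument on $\g^{l_i,k_i+1}$ with the projection $\pi_2\colon \g_1\oplus\g_2 \lon \g_2$ yields $f_i|_{\g^{l_i,k_i+1}}=0$ as a $\g_2$-valued map. Since $f_i$ is by definition zero on all other summands $\g^{p,q}$, this forces $f_i=0$; and as $i$ was arbitrary, the lemma follows.

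I do not expect any real obstacle here: the lemma is essentially a bookkeeping statement that the bidegree decomposition refines the tensor decomposition, and the only point that deserves a sentence of care is handling the collision $(l_j,k_j+1) = (l_i+1,k_i)$ (equivalently $l_j=l_i+1$, $k_j=k_i-1$), where two distinct bidegrees have overlapping support — but they are still separated by the target projections $\pi_1,\pi_2$.
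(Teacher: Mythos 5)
Your proof is correct, and it takes exactly the route the paper intends: the paper states this lemma without proof (``It is obvious that we have the following lemmas''), relying implicitly on the fact that the horizontal-lift identification \eqref{decomposition} is a genuine direct-sum decomposition, which is precisely what your argument verifies. Your handling of the one real subtlety is also right: since every bidegree here lies on the line $l+k=n-1$, the collision $l_j=l_i+1$, $k_j=k_i-1$ does occur, and separating it with the target projections $\pi_1,\pi_2$ is the correct (and needed) step.
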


\begin{lem}\label{Zero-condition-2}
If $||f||=-1|l$ (resp. $l|-1$) and $||g||=-1|k$ (resp. $k|-1$), then $[f,g]_\B=0.$
\end{lem}
\begin{proof}
Assume that $||f||=-1|l$ and $||g||=-1|k$. Then $f$ and $g$ are both horizontal lift of linear maps in $C^*(\g_2,\g_1)$. By the definition of the lift, we have $f\circ_i g=g\circ_j f=0$ for any $i,j.$ Thus, we have $[f,g]_\B=0.$
\end{proof}

\begin{lem}\label{important-lemma-1}
Let $f\in C^{n}(\g_1\oplus\g_2,\g_1\oplus\g_2)$ and $g\in C^{m}(\g_1\oplus\g_2,\g_1\oplus\g_2)$ be homogeneous linear maps with bidegrees $l_f|k_f$ and $l_g|k_g$ respectively. Then the composition $f\circ_ig\in C^{n+m-1}(\g_1\oplus\g_2,\g_1\oplus\g_2)$ is  a homogeneous linear map of the bidegree $l_f+l_g|k_f+k_g.$
\end{lem}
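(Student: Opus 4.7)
The plan is to verify the four defining conditions of bidegree for $f \circ_i g$ directly from the definitions, using that bidegree is a purely combinatorial count of how many $\g_1$-inputs versus $\g_2$-inputs a homogeneous cochain absorbs.

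First, condition (i) is a one-line arithmetic check: since $||f|| = l_f|k_f$ and $||g|| = l_g|k_g$, we have $n = l_f + k_f + 1$ and $m = l_g + k_g + 1$, so
\begin{eqnarray*}
(n+m-1) = (l_f + l_g) + (k_f + k_g) + 1,
\end{eqnarray*}
which is exactly the relation required of a cochain of arity $(l_f + l_g) + (k_f + k_g) + 1$ and bidegree $(l_f+l_g)|(k_f+k_g)$.

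Next I would handle conditions (ii)--(iv) together by tracking inputs. Fix an input $X \in \g^{L,K}$ with $L + K = n + m - 1$. Each term in the shuffle expansion defining $f \circ_i g$ applies $g$ to some $m$-element subset of the inputs and feeds its output, together with the remaining $n-1$ inputs, into $f$. Since both $f$ and $g$ are homogeneous, each such term is nonzero only if the $m$-element subset chosen for $g$ lies in either $\g^{l_g+1,k_g}$ (so $g$ returns a $\g_1$-element) or $\g^{l_g,k_g+1}$ (so $g$ returns a $\g_2$-element), and afterwards the input seen by $f$ lies in either $\g^{l_f+1,k_f}$ or $\g^{l_f,k_f+1}$. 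A short case analysis on where $g$'s output lands shows that, in both cases, the counts available from $X$ must satisfy $(L,K) = (l_f + l_g + 1,\, k_f + k_g)$ for the composite to land in $\g_1$ and $(L,K) = (l_f + l_g,\, k_f + k_g + 1)$ for the composite to land in $\g_2$. Any other $(L,K)$ forces every term in the shuffle sum to vanish, which gives condition (iv).

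The key observation that makes this clean is that bidegree depends only on how many factors from $\g_1$ and $\g_2$ appear in the input, not on their order, so the shuffle signs and reorderings in the definition of $\circ_i$ are irrelevant to the bidegree computation: they merely rearrange arguments whose $\g_1$/$\g_2$-types are already fixed. The only mild obstacle is being careful that \emph{some} term of the shuffle sum is in fact nonzero in the two good cases above (so that the bidegree of $f \circ_i g$ is genuinely $(l_f+l_g)|(k_f+k_g)$ rather than degenerating), but this follows because the shuffle sum is a permutation-symmetric pattern and at least one shuffle places the $\g_1$- and $\g_2$-factors in the positions compatible with both $f$ and $g$.
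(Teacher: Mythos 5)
Your proof is correct and follows essentially the same route as the paper's: condition (i) by arithmetic, and conditions (ii)--(iv) by tracking the number of $\g_1$- and $\g_2$-factors through the shuffle expansion, first using homogeneity of $g$ (its block of inputs must lie in $\g^{l_g+1,k_g}$ or $\g^{l_g,k_g+1}$ for a nonzero value) and then homogeneity of $f$, noting that shuffles only permute inputs of fixed type. One small point: your closing worry about ensuring \emph{some} shuffle term is nonzero is moot --- the bidegree conditions are implications (if $X\in\g^{l+1,k}$ then $f(X)\in\g_1$), which hold vacuously when the value is $0\in\g_1$, as the paper's own proof explicitly notes; so no nonvanishing argument is needed, and the one you sketch would in any case fail (e.g.\ when $f$ or $g$ is the zero map, or when terms cancel).
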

\emptycomment{
\begin{proof}
We show that conditions \rm(deg1)-\rm(deg3) hold. The condition \rm(deg1) holds because $(l_f+l_g)+(k_f+k_g)+1=n+m-1.$ We show that conditions \rm(deg2-1) and \rm(deg2-2) hold.
Take an element $\xi_1\otimes\cdots\otimes\xi_{n+m-1}\in\g^{l_f+l_g+1,k_f+k_g}$, where $\xi_i\in\g_1$ or $\xi_i\in\g_2$. Consider
\begin{eqnarray}
\nonumber&&f\circ_ig(\xi_1\otimes\cdots\otimes\xi_{n+m-1})\\
\label{composition}&=&\sum_{\sigma\in\mathbb S_{(i-1,m-1)}}(-1)^{\sigma}f(\xi_{\sigma(1)},\cdots,\xi_{\sigma(i-1)},g(\xi_{\sigma(i)},\cdots,\xi_{\sigma(i+m-2)},\xi_{i+m-1}),\xi_{i+m}\cdots,\xi_{n+m-1})
\end{eqnarray}
If \eqref{composition} is zero, then it is in $\g_1$ for \rm(deg2-1) is satisfied. So we assume \eqref{composition} $\not=0$. Thus, there are some $\sigma\in\mathbb S_{(i-1,m-1)}$ so that $g(\xi_{\sigma(i)},\cdots,\xi_{\sigma(i+m-2)},\xi_{i+m-1})\not=0$. We consider the case of $$
g(\xi_{\sigma(i)},\cdots,\xi_{\sigma(i+m-2)},\xi_{i+m-1})\in\g_1.
$$
In this case, $\xi_{\sigma(i)}\otimes\cdots\otimes\xi_{\sigma(i+m-2)}\otimes\xi_{i+m-1}\in\g^{l_g+1,k_g}$. Thus $$
\xi_{\sigma(1)}\otimes\cdots\otimes\xi_{\sigma(i-1)}\otimes g(\xi_{\sigma(i)},\cdots,\xi_{\sigma(i+m-2)},\xi_{i+m-1})\otimes\xi_{i+m}\cdots\otimes\xi_{n+m-1}\in\g^{l_f+1,k_f},
$$
which gives
$$f(\xi_{\sigma(1)},\cdots,\xi_{\sigma(i-1)},g(\xi_{\sigma(i)},\cdots,\xi_{\sigma(i+m-2)},\xi_{i+m-1}),\xi_{i+m}\cdots,\xi_{n+m-1})\in\g_1.
$$
If $g(\xi_{\sigma(i)},\cdots,\xi_{\sigma(i+m-2)},\xi_{i+m-1})\in\g_2,$
we have $\xi_{\sigma(i)}\otimes\cdots\otimes\xi_{\sigma(i+m-2)}\otimes\xi_{i+m-1}\in\g^{l_g,k_g+1}$. Thus $$
\xi_{\sigma(1)}\otimes\cdots\otimes\xi_{\sigma(i-1)}\otimes g(\xi_{\sigma(i)},\cdots,\xi_{\sigma(i+m-2)},\xi_{i+m-1})\otimes\xi_{i+m}\cdots\otimes\xi_{n+m-1}\in\g^{l_f+1,k_f},
$$
which gives
$$
f(\xi_{\sigma(1)},\cdots,\xi_{\sigma(i-1)},g(\xi_{\sigma(i)},\cdots,\xi_{\sigma(i+m-2)},\xi_{i+m-1}),\xi_{i+m}\cdots,\xi_{n+m-1})\in\g_1.
$$
This deduce that condition \rm(deg2-1) holds. Similarly, for $\xi_1\otimes\cdots\otimes\xi_{n+m-1}\in\g^{l_f+l_g,k_f+k_g+1}$, the condition \rm(deg2-2) holds. If $\xi_1\otimes\cdots\otimes\xi_{n+m-1}\in\g^{l_f+l_g+i,k_f+k_g+1-i}$, here $i\not=0,1$. For some $\sigma\in\mathbb S_{(i-1,m-1)}$, we have $g(\xi_{\sigma(i)},\cdots,\xi_{\sigma(i+m-2)},\xi_{i+m-1})\not=0$.

If $g(\xi_{\sigma(i)},\cdots,\xi_{\sigma(i+m-2)},\xi_{i+m-1})\in\g_1$, thus we have
$$
\xi_{\sigma(1)}\otimes\cdots\otimes\xi_{\sigma(i-1)}\otimes g(\xi_{\sigma(i)},\cdots,\xi_{\sigma(i+m-2)},\xi_{i+m-1})\otimes\xi_{i+m}\cdots\otimes\xi_{n+m-1}\in\g^{l_f+i,k_f+1-i}.
$$
By $i\not=0,1$, we have
$$
f(\xi_{\sigma(1)},\cdots,\xi_{\sigma(i-1)},g(\xi_{\sigma(i)},\cdots,\xi_{\sigma(i+m-2)},\xi_{i+m-1}),\xi_{i+m}\cdots,\xi_{n+m-1})=0.
$$
 If $g(\xi_{\sigma(i)},\cdots,\xi_{\sigma(i+m-2)},\xi_{i+m-1})\in\g_2$, thus we have
$$
\xi_{\sigma(1)}\otimes\cdots\otimes\xi_{\sigma(i-1)}\otimes g(\xi_{\sigma(i)},\cdots,\xi_{\sigma(i+m-2)},\xi_{i+m-1})\otimes\xi_{i+m}\cdots\otimes\xi_{n+m-1}\in\g^{l_f+i,k_f+1-i}.
$$
By $i\not=0,1$, we have
$$
f(\xi_{\sigma(1)},\cdots,\xi_{\sigma(i-1)},g(\xi_{\sigma(i)},\cdots,\xi_{\sigma(i+m-2)},\xi_{i+m-1}),\xi_{i+m}\cdots,\xi_{n+m-1})=0.
$$
Thus condition \rm(deg3) holds. The proof is finished.
\end{proof}
}

\begin{lem}\label{important-lemma-2}
If $||f||=l_f|k_f$ and $||g||=l_g|k_g$, then $[f,g]_\B$ has the bidegree $l_f+l_g|k_f+k_g.$
\end{lem}
\begin{proof}
By Lemma \ref{important-lemma-1} and \eqref{leibniz-bracket}, we have $||[f,g]_\B||=l_f+l_g|k_f+k_g.$
\end{proof}

\emptycomment{
Let $f$ be an $n$-cochain in $C^n(\g_1\oplus\g_2,\g_1\oplus\g_2)$. We say that the bidegree of $f$ is $k|l$ if $f$ is an element in $C^n(\g^{l,k-1},\g_1)$ or in $C^n(\g^{l-1,k},\g_2)$, where $n=k+l-1$. We denote the bidegree of $f$ by $||f||=k|l$. In general, cochain do not have bidegree. We call a cochain $f$ a homogeneous cochain if $f$ has a bidegree.

We have $k+l\ge2$ because $n\ge1$. Thus there are no cochains of bidegree $0|0$ or $0|1$ or $1|0$. For instance, the lift $\hat{H}\in C^1(\g_1\oplus\g_2)$ of $H:\g_2\lon\g_1$ has bidegree $2|0$. We recall that $\hat{\alpha},\hat{\beta},\hat{\gamma}\in C^2(\g_1\oplus\g_2)$ in \eqref{semidirect-1},\eqref{semidirect-2} and \eqref{semidirect-3}. One can easily see that $||\hat{\alpha}||=||\hat{\beta}||=||\hat{\gamma}||=1|2$. Thus the sum
\begin{eqnarray}
\label{semidirect}\hat{\mu}:=\hat{\alpha}+\hat{\beta}+\hat{\gamma}
\end{eqnarray}
is a homogeneous cochain of bidegree $1|2$. The cochain $\hat{\mu}$ is a multiplication of semidirect product type,
$$
\hat{\mu}\big((x_1,v_1),(x_2,v_2)\big)=(\alpha(x_1,x_2),\beta(x_1,v_2)+\gamma(v_1,x_2)),
$$
where $(x_1,v_1),(x_2,v_2)\in\g_1\oplus\g_2$. Observe that $\mu$ is not a lift (there is no $\mu$), however, we will use this symbol because $\hat{\mu}$ is an interesting homogeneous cochain. \vspace{2mm}

It is obvious that we have the following lemmas:
\begin{lem}
Let $f\in C^n(\g_1\oplus\g_2)$ be a cochain. The bidegree of $f$ is $k|l$ if and only if the following four conditions hold:
\begin{itemize}
\item[\rm(deg1)] $k+l-1=n.$
\item[\rm(deg2-1)] If $X$ is an element in $\g^{l,k-1}$, then $f(X)\in\g_1.$
\item[\rm(deg2-2)] If $X$ is an element in $\g^{l-1,k}$, then $f(X)\in\g_2.$
\item[\rm(deg3)] All the other case, $f(X)=0.$
\end{itemize}
\end{lem}

\begin{lem}\label{Zero-condition-1}
Let $f_1,\cdots,f_i,\cdots,f_k\in C^n(\g_1\oplus\g_2)$ be homogeneous cochain and the bidegree of $f_i$ are
different. Then $f_1+\cdots+f_k=0$ if and only if $f_1=\cdots=f_k=0.$
\end{lem}

\begin{lem}\label{Zero-condition-2}
If $||f||=k|0$ (resp. $0|k$) and $||g||=l|0$ (resp. $0|l$), then $[f,g]=0.$
\end{lem}
\begin{proof}
Assume that $|f|=k|0$ and $|g|=l|0$. Then $f$ and $g$ are both horizontal lift of cochains in $C^*(\g_2,\g_1)$. Thus, form the definition of lift, we have $f\circ_i g=g\circ_j f=0$ for any $i,j.$ The proof is finished.
\end{proof}

\begin{lem}\label{important-lemma-1}
Let $f\in C^{n}(\g_1\oplus\g_2)$ and $g\in C^{m}(\g_1\oplus\g_2)$ be homogeneous cochains with bidegrees $k_f|l_f$ and $k_g|l_g$ respectively. The composition $f\circ_ig\in C^{n+m-1}(\g_1\oplus\g_2)$ is again a homogeneous cochain, and the bidegree is $k_f+k_g-1|l_f+l_g-1.$
\end{lem}
\begin{proof}
We show that conditions \rm(deg1)-\rm(deg3) hold. The condition \rm(deg1) holds because $(k_f+k_g-1)+(l_f+l_g-1)-1=n+m-1.$ We show that conditions \rm(deg2-1) and \rm(deg2-2) hold.
Take an element $\xi_1\otimes\cdots\otimes\xi_{n+m-1}\in\g^{l_f+l_g-1,k_f+k_g-2}$, where $\xi_i\in\g_1$ or $\xi_i\in\g_2$. Consider
\begin{eqnarray}
\nonumber&&f\circ_ig(\xi_1\otimes\cdots\otimes\xi_{n+m-1})\\
\label{composition}&=&\sum_{\sigma\in\mathbb S_{(i-1,m-1)}}(-1)^{\sigma}f(\xi_{\sigma(1)},\cdots,\xi_{\sigma(i-1)},g(\xi_{\sigma(i)},\cdots,\xi_{\sigma(i+m-2)},\xi_{i+m-1}),\xi_{i+m}\cdots,\xi_{n+m-1})
\end{eqnarray}
If \eqref{composition} is zero, then it is in $\g_1$ for \rm(deg2-1) is satisfied. So we assume \eqref{composition} $\not=0$. Thus, there are some $\sigma\in\mathbb S_{(i-1,m-1)}$ so that $g(\xi_{\sigma(i)},\cdots,\xi_{\sigma(i+m-2)},\xi_{i+m-1})\not=0$. We consider the case of $$
g(\xi_{\sigma(i)},\cdots,\xi_{\sigma(i+m-2)},\xi_{i+m-1})\in\g_1.
$$
In this case, $\xi_{\sigma(i)}\otimes\cdots\otimes\xi_{\sigma(i+m-2)}\otimes\xi_{i+m-1}\in\g^{l_g,k_g-1}$. Thus $$
\xi_{\sigma(1)}\otimes\cdots\otimes\xi_{\sigma(i-1)}\otimes g(\xi_{\sigma(i)},\cdots,\xi_{\sigma(i+m-2)},\xi_{i+m-1})\otimes\xi_{i+m}\cdots\otimes\xi_{n+m-1}\in\g^{l_f,k_f-1},
$$
which gives
$$f(\xi_{\sigma(1)},\cdots,\xi_{\sigma(i-1)},g(\xi_{\sigma(i)},\cdots,\xi_{\sigma(i+m-2)},\xi_{i+m-1}),\xi_{i+m}\cdots,\xi_{n+m-1})\in\g_1.
$$
If $g(\xi_{\sigma(i)},\cdots,\xi_{\sigma(i+m-2)},\xi_{i+m-1})\in\g_2,$
we have $\xi_{\sigma(i)}\otimes\cdots\otimes\xi_{\sigma(i+m-2)}\otimes\xi_{i+m-1}\in\g^{l_g-1,k_g}$. Thus $$
\xi_{\sigma(1)}\otimes\cdots\otimes\xi_{\sigma(i-1)}\otimes g(\xi_{\sigma(i)},\cdots,\xi_{\sigma(i+m-2)},\xi_{i+m-1})\otimes\xi_{i+m}\cdots\otimes\xi_{n+m-1}\in\g^{l_f,k_f-1},
$$
which gives
$$
f(\xi_{\sigma(1)},\cdots,\xi_{\sigma(i-1)},g(\xi_{\sigma(i)},\cdots,\xi_{\sigma(i+m-2)},\xi_{i+m-1}),\xi_{i+m}\cdots,\xi_{n+m-1})\in\g_1.
$$
This deduce that condition \rm(deg2-1) holds. Similarly, for $\xi_1\otimes\cdots\otimes\xi_{n+m-1}\in\g^{l_f+l_g-2,k_f+k_g-1}$, the condition \rm(deg2-2) holds. If $\xi_1\otimes\cdots\otimes\xi_{n+m-1}\in\g^{l_f+l_g-1+i,k_f+k_g-2-i}$, here $i\not=0,-1$. For some $\sigma\in\mathbb S_{(i-1,m-1)}$, we have $g(\xi_{\sigma(i)},\cdots,\xi_{\sigma(i+m-2)},\xi_{i+m-1})\not=0$.

If $g(\xi_{\sigma(i)},\cdots,\xi_{\sigma(i+m-2)},\xi_{i+m-1})\in\g_1$, thus we have
$$
\xi_{\sigma(1)}\otimes\cdots\otimes\xi_{\sigma(i-1)}\otimes g(\xi_{\sigma(i)},\cdots,\xi_{\sigma(i+m-2)},\xi_{i+m-1})\otimes\xi_{i+m}\cdots\otimes\xi_{n+m-1}\in\g^{l_f+i,k_f-1-i}.
$$
By $i\not=0,-1$, we have
$$
f(\xi_{\sigma(1)},\cdots,\xi_{\sigma(i-1)},g(\xi_{\sigma(i)},\cdots,\xi_{\sigma(i+m-2)},\xi_{i+m-1}),\xi_{i+m}\cdots,\xi_{n+m-1})=0.
$$
 If $g(\xi_{\sigma(i)},\cdots,\xi_{\sigma(i+m-2)},\xi_{i+m-1})\in\g_2$, thus we have
$$
\xi_{\sigma(1)}\otimes\cdots\otimes\xi_{\sigma(i-1)}\otimes g(\xi_{\sigma(i)},\cdots,\xi_{\sigma(i+m-2)},\xi_{i+m-1})\otimes\xi_{i+m}\cdots\otimes\xi_{n+m-1}\in\g^{l_f+i,k_f-1-i}.
$$
By $i\not=0,-1$, we have
$$
f(\xi_{\sigma(1)},\cdots,\xi_{\sigma(i-1)},g(\xi_{\sigma(i)},\cdots,\xi_{\sigma(i+m-2)},\xi_{i+m-1}),\xi_{i+m}\cdots,\xi_{n+m-1})=0.
$$
Thus condition \rm(deg3) holds. The proof is finished.
\end{proof}

\begin{lem}\label{important-lemma-2}
If $||f||=k_f|l_f$ and $||g||=k_g|l_g$, then $[f,g]$ has the bidegree $k_f+k_g-1|l_f+l_g-1.$
\end{lem}
\begin{proof}
By Lemma \ref{important-lemma-1} and \eqref{leibniz-bracket}, we have $||[f,g]||=k_f+k_g-1|l_f+l_g-1.$ The proof is finished.
\end{proof}
}

\subsection{(Relative) Rota-Baxter operators}\label{sec:K2}
First we introduce the notion of a (relative) Rota-Baxter operator and give some examples.

\begin{defi} \label{defi:O} Let $(\g,[\cdot,\cdot]_\g)$ be a Leibniz algebra.
\begin{enumerate}
\item[\rm(i)] A linear operator $R:\g\longrightarrow \g$ is called a {\bf Rota-Baxter operator } if
\begin{equation} [R(x),R(y)]_\g=R\big([R(x),y]_\g+ [x,R(y)]_\g \big), \quad \forall x, y \in \g.
\label{eq:rbo}
\end{equation}
\item[\rm(ii)]
Let $(V;\rho^L,\rho^R)$ be a representation of a Leibniz algebra $(\g,[\cdot,\cdot]_\g)$. A {\bf relative Rota-Baxter operator} on $\g$ with respect to the representation $(V;\rho^L,\rho^R)$ is a linear map $K:V\longrightarrow\g$ such that
 \begin{eqnarray}\label{O-operator}
[Kv_1,Kv_2]_\g=K(\rho^L(Kv_1)v_2+\rho^R(Kv_2)v_1),\,\,\,\,\forall v_1,v_2\in V.
\end{eqnarray}
\end{enumerate}
\end{defi}

\begin{rmk}
  When $(\g,[\cdot,\cdot]_\g)$ is a Lie algebra and $\rho^R=-\rho^L$, we obtain the notion of a relative Rota-Baxter operator (an $\huaO$-operator) on a Lie algebra with respect to a representation.
\end{rmk}

\begin{ex}\label{example-6}{\rm
%\yh{Also find a 3-dim Leibniz algebra and compute symmetric solution of classical Leibniz Yang-Baxter equation}
Consider the $2$-dimensional Leibniz algebra $(\g,[\cdot,\cdot])$ given with respect to a basis $\{e_1,e_2\}$   by
\begin{eqnarray*}
[e_1,e_1]=0,\quad [e_1,e_2]=0,\quad [e_2,e_1]=e_1,\quad [e_2,e_2]=e_1.
\end{eqnarray*}
Let $\{e_1^*,e_2^*\}$ be the dual basis.
Then $K=\left(\begin{array}{cc}a_{11}&a_{12}\\
                                                                a_{21}&a_{22}\end{array}\right)$ is a relative Rota-Baxter operator on $(\g,[\cdot,\cdot])$ with respect to the representation $(\g^*;L^*,-L^*-R^*)$\footnote{It is the dual representation of the regular representation. See Lemma \ref{lem:dualrep}.} if and only if
$$
   [Ke_i^*,Ke_j^*]=K\Big(L^*_{Ke_i^*}e_j^*-L^*_{Ke_j^*}e_i^*-R^*_{Ke_j^*}e_i^*\Big),\quad \forall i,j=1,2.
   $$
It is straightforward to deduce that
\begin{eqnarray*}
L_{e_1}(e_1,e_2)=(e_1,e_2)\left(\begin{array}{cc}0&0\\
                                                 0&0\end{array}\right),\quad
L_{e_2}(e_1,e_2)=(e_1,e_2)\left(\begin{array}{cc}1&1\\
                                                 0&0\end{array}\right),\\
R_{e_1}(e_1,e_2)=(e_1,e_2)\left(\begin{array}{cc}0&1\\
                                                 0&0\end{array}\right),\quad
R_{e_2}(e_1,e_2)=(e_1,e_2)\left(\begin{array}{cc}0&1\\
                                                 0&0\end{array}\right),
\end{eqnarray*}
and
\begin{eqnarray*}
L_{e_1}^*(e_1^*,e_2^*)=(e_1^*,e_2^*)\left(\begin{array}{cc}0&0\\
                                                           0&0\end{array}\right),\quad
L_{e_2}^*(e_1^*,e_2^*)=(e_1^*,e_2^*)\left(\begin{array}{cc}-1&0\\
                                                           -1&0\end{array}\right),\\
R_{e_1}^*(e_1^*,e_2^*)=(e_1^*,e_2^*)\left(\begin{array}{cc}0&0\\
                                                          -1&0\end{array}\right),\quad
R_{e_2}^*(e_1^*,e_2^*)=(e_1^*,e_2^*)\left(\begin{array}{cc}0&0\\
                                                          -1&0\end{array}\right).
\end{eqnarray*}
 We have
\begin{eqnarray*}
[Ke_1^*,Ke_1^*]=[a_{11}e_1+a_{21}e_2,a_{11}e_1+a_{21}e_2]=a_{21}(a_{11}+a_{21})e_1,
\end{eqnarray*}
and
\begin{eqnarray*}
K\Big(L^*_{Ke_1^*}e_1^*-L^*_{Ke_1^*}e_1^*-R^*_{Ke_1^*}e_1^*\Big)=(a_{11}+a_{21})K(e_2^*)=a_{12}(a_{11}+a_{21})e_1+a_{22}(a_{11}+a_{21})e_2.
\end{eqnarray*}
Thus, by $[Ke_1^*,Ke_1^*]=K\Big(L^*_{Ke_1^*}e_1^*-L^*_{Ke_1^*}e_1^*-R^*_{Ke_1^*}e_1^*\Big)$, we obtain
$$
a_{21}(a_{11}+a_{21})=a_{12}(a_{11}+a_{21}),\quad a_{22}(a_{11}+a_{21})=0.
$$
Similarly, we obtain
 \begin{eqnarray*}
 a_{21}(a_{12}+a_{22})&=&a_{22}a_{11}+(a_{12}+2a_{22})a_{12},\quad a_{22}(a_{21}+a_{12}+2a_{22})=0,\\
 a_{22}(a_{11}+a_{21})&=&-a_{22}(a_{11}+a_{12}),\quad -a_{22}(a_{21}+a_{22})=0,\quad a_{22}(a_{12}+a_{22})=0.
\end{eqnarray*}
 \emptycomment{
By $[Ke_1^*,Ke_2^*]=K\Big(L^*_{Ke_1^*}e_2^*-L^*_{Ke_2^*}e_1^*-R^*_{Ke_2^*}e_1^*\Big)$, we have
\begin{eqnarray*}
[Ke_1^*,Ke_2^*]=[a_{11}e_1+a_{21}e_2,a_{12}e_1+a_{22}e_2]=a_{21}(a_{12}+a_{22})e_1,
\end{eqnarray*}
and
\begin{eqnarray*}
K\Big(L^*_{Ke_1^*}e_2^*-L^*_{Ke_2^*}e_1^*-R^*_{Ke_2^*}e_1^*\Big)&=&K\Big(a_{22}e_1^*+(a_{12}+2a_{22})e_2^*\Big)\\
                                                                &=&a_{22}a_{11}e_1+a_{22}a_{21}e_2+(a_{12}+2a_{22})a_{12}e_1+(a_{12}+2a_{22})a_{22}e_2\\
                                                                &=&\Big(a_{22}a_{11}+(a_{12}+2a_{22})a_{12}\Big)e_1+a_{22}(a_{21}+a_{12}+2a_{22})e_2.
\end{eqnarray*}
-----
By $[Ke_2^*,Ke_1^*]=K\Big(L^*_{Ke_2^*}e_1^*-L^*_{Ke_1^*}e_2^*-R^*_{Ke_1^*}e_2^*\Big)$, we have
\begin{eqnarray*}
[Ke_2^*,Ke_1^*]=[a_{12}e_1+a_{22}e_2,a_{11}e_1+a_{21}e_2]=a_{22}(a_{11}+a_{21})e_1,
\end{eqnarray*}
and
\begin{eqnarray*}
K\Big(L^*_{Ke_2^*}e_1^*-L^*_{Ke_1^*}e_2^*-R^*_{Ke_1^*}e_2^*\Big)&=&-a_{22}K\Big(e_1^*+e_2^*\Big)\\
                                                                &=&-a_{22}a_{11}e_1-a_{22}a_{21}e_2-a_{22}a_{12}e_1-a_{22}a_{22}e_2\\
                                                                &=&-a_{22}(a_{11}+a_{12})e_1-a_{22}(a_{21}+a_{22})e_2.
\end{eqnarray*}
-----
By $[Ke_2^*,Ke_2^*]=K\Big(L^*_{Ke_2^*}e_2^*-L^*_{Ke_2^*}e_2^*-R^*_{Ke_2^*}e_2^*\Big)$, we have
\begin{eqnarray*}
[Ke_2^*,Ke_2^*]=[a_{12}e_1+a_{22}e_2,a_{12}e_1+a_{22}e_2]=a_{22}(a_{12}+a_{22})e_1,
\end{eqnarray*}
and $K\Big(L^*_{Ke_2^*}e_2^*-L^*_{Ke_2^*}e_2^*-R^*_{Ke_2^*}e_2^*\Big)=0.$
}
Summarize the above discussion, we have
\begin{itemize}
     \item[\rm(i)] If $a_{22}=0$, then    $K=\left(\begin{array}{cc}a_{11}&a_{12}\\
   a_{21}&0\end{array}\right)$ is a relative Rota-Baxter operator ~on $(\g,[\cdot,\cdot])$ with respect to the representation $(\g^*;L^*,-L^*-R^*)$ if and only if
   $$
   (a_{12}-a_{21})a_{12}=(a_{12}-a_{21})(a_{11}+a_{21})=0.
   $$
   More precisely, any $K=\left(\begin{array}{cc}a &b\\
b&0\end{array}\right)$ or $K=\left(\begin{array}{cc}a &0\\
-a&0\end{array}\right)$ is a relative Rota-Baxter operator.
     \item[\rm(ii)] If $a_{22}\not=0$, then    $K=\left(\begin{array}{cc}a_{11}&a_{12}\\
   a_{21}&a_{22}\end{array}\right)$ is a relative Rota-Baxter operator~ on $(\g,[\cdot,\cdot] )$ with respect to the representation $(\g^*;L^*,-L^*-R^*)$ if and only if
   $$
   a_{11}=-a_{12}=-a_{21}=a_{22}.
   $$
   \end{itemize}
   }
\end{ex}

In the sequel, we construct the graded Lie algebra that characterize relative Rota-Baxter operators as Maurer-Cartan elements.

Let $(V;\rho^L,\rho^R)$ be  a representation of a Leibniz algebra $(\g,[\cdot,\cdot]_\g)$. Then there is a Leibniz algebra structure on $\g\oplus V$  given by
\begin{eqnarray}
[x+u,y+v]_{\ltimes}=[x,y]_\g+\rho^L(x)v+\rho^R(y)u, \quad \forall x,y\in\g,~ u,v\in V.
\end{eqnarray}
This Leibniz algebra is called the semidirect product of $\g$ and $(V;\rho^L,\rho^R)$, and denoted by $\g\ltimes_{\rho^L,\rho^R}V.$ We denote the above semidirect product Leibniz multiplication by $\hat{\mu}_1.$

Consider the graded vector space
$$C^*(V,\g):=\oplus_{n\ge 1}C^n(V,\g)=\oplus_{n\geq 1}\Hom(\otimes^{n}V,\g).$$

\begin{thm}\label{twilled-DGLA}
With the above notations,  $(C^*(V,\g),\{\cdot,\cdot\})$ is a   graded Lie algebra, where the graded Lie bracket $\{\cdot,\cdot\}:C^m(V,\g)\times C^n(V,\g)\lon C^{m+n}(V,\g)$ is defined by
\begin{eqnarray*}
\{g_1,g_2\}&=&(-1)^{|g_1|}[[\hat{\mu}_1,\hat{g}_1]_\B,\hat{g}_2]_\B,\quad \forall g_1\in C^m(V,\g),~g_2\in C^n(V,\g).
\end{eqnarray*}
 More precisely, we have
{\footnotesize
\begin{eqnarray}
&&\nonumber\{g_1,g_2\}(v_1,v_2,\cdots,v_{m+n})\\
&=&\nonumber\sum_{k=1}^{m}\sum_{\sigma\in\mathbb S_{(k-1,n)}}(-1)^{(k-1)n+1}(-1)^{\sigma}g_1(v_{\sigma(1)},\cdots,v_{\sigma(k-1)},\rho^L(g_2(v_{\sigma(k)},\cdots,v_{\sigma(k+n-1)}))v_{k+n},v_{k+n+1},\cdots,v_{m+n})\\
&&\nonumber+\sum_{k=2}^{m+1}\sum_{\sigma\in\mathbb S_{(k-2,n,1)}\atop \sigma(k+n-2)=k+n-1}(-1)^{kn}(-1)^{\sigma}
g_1(v_{\sigma(1)},\cdots,v_{\sigma(k-2)},\rho^R(g_2(v_{\sigma(k-1)},\cdots,v_{\sigma(k+n-2)}))v_{\sigma(k+n-1)},v_{k+n},\cdots,v_{m+n})\\
&&\nonumber+\sum_{k=1}^{m}\sum_{\sigma\in\mathbb S_{(k-1,n-1)}}(-1)^{(k-1)n}(-1)^{\sigma}[g_2(v_{\sigma(k)},\cdots,v_{\sigma(k+n-2)},v_{k+n-1}),g_1(v_{\sigma(1)},\cdots,v_{\sigma(k-1)},v_{k+n},\cdots,v_{m+n})]_{\g}\\
&&\nonumber+\sum_{\sigma\in\mathbb S_{(m,n-1)}}(-1)^{mn+1}(-1)^{\sigma}[g_1(v_{\sigma(1)},\cdots,v_{\sigma(m)}),g_2(v_{\sigma(m+1)},\cdots,v_{\sigma(m+n-1)},v_{m+n})]_{\g}\\
&&\nonumber+\sum_{k=1}^{n}\sum_{\sigma\in\mathbb S_{(k-1,m)}}(-1)^{m(k+n-1)}(-1)^{\sigma}g_2(v_{\sigma(1)},\cdots,v_{\sigma(k-1)},\rho^L(g_1(v_{\sigma(k)},\cdots,v_{\sigma(k+m-1)}))v_{k+m},v_{k+m+1},\cdots,v_{m+n})\\
&&\nonumber+\sum_{k=1}^{n}\sum_{\sigma\in\mathbb S_{(k-1,m,1)}\atop\sigma(k+m-1)=k+m}(-1)^{m(k+n-1)+1}(-1)^{\sigma}
g_2(v_{\sigma(1)},\cdots,v_{\sigma(k-1)},\rho^R(g_1(v_{\sigma(k)},\cdots,v_{\sigma(k-1+m)}))v_{\sigma(k+m)},v_{k+m+1},\cdots,v_{m+n}).
\end{eqnarray}
}
Moreover, its Maurer-Cartan elements are relative Rota-Baxter operators on the Leibniz algebra $(\g,[\cdot,\cdot]_\g)$ with respect to the representation $(V;\rho^L,\rho^R)$.
\end{thm}

\begin{proof}
The graded Lie algebra $(C^*(V,\g),\{\cdot,\cdot\})$ is obtained via the derived bracket \cite{Kosmann-Schwarzbach,Voronov1}. In fact, the Balavoine bracket $[\cdot,\cdot]_\B$ associated to the direct sum vector space $\g\oplus V$ gives rise to a graded Lie algebra $(C^*(\g\oplus V,\g\oplus V),[\cdot,\cdot]_\B)$. Since $\hat{\mu}_1$ is the semidirect product Leibniz algebra structure on the vector space $\g\oplus V$. By Theorem \ref{leibniz-algebra-B}, we deduce that $(C^*(\g\oplus V,\g\oplus V),[\cdot,\cdot]_\B,d=[\hat{\mu}_1,\cdot]_\B)$ is a differential graded Lie algebra. Obviously $C^*(V,\g)$ is an abelian subalgebra. Further, we define the derived bracket on the graded vector space $C^*(V,\g)$ by
\begin{eqnarray*}
\{g_1,g_2\}:=(-1)^{|g_1|}[d(\hat{g}_1),\hat{g}_2]_\B=(-1)^{|g_1|}[[\hat{\mu}_1,\hat{g}_1]_\B,\hat{g}_2]_\B,\quad \forall g_1\in C^m(V,\g),~g_2\in C^n(V,\g).
\end{eqnarray*}
By Lemma \ref{important-lemma-2}, the derived bracket $\{\cdot,\cdot\}$ is closed on $C^*(V,\g)$, which implies  that $(C^*(V,\g),\{\cdot,\cdot\})$ is a graded Lie algebra. Moreover, it is straightforward to obtain the above concrete graded Lie  bracket $\{\cdot,\cdot\}$ on $C^*(V,\g)=\oplus_{k=1}^{+\infty}C^k(V,\g)$.

For all $K\in C^1(V,\g)$, we have
\begin{eqnarray*}
\{K,K\}(v_1,v_2)=2([Kv_1,Kv_2]_\g-K(\rho^L(Kv_1)v_2)-K(\rho^R(Kv_2)v_1)),\quad\forall v_1,v_2\in V.
\end{eqnarray*}
Thus, Maurer-Cartan elements are precisely relative Rota-Baxter operators on $\g$ with respect to the representation $(V;\rho^L,\rho^R)$. The proof is finished.
\end{proof}

  This is the main ingredient in our later study of the classical Leibniz Yang-Baxter equation and the classical Leibniz $r$-matrix.

\subsection{Twilled Leibniz algebras and the twisting theory}\label{sec:K3}

Let $(\huaG,[\cdot,\cdot]_\huaG)$ be a Leibniz algebra with a decomposition  into two subspaces\footnote{Here $\g_1$ and $\g_2$ are not necessarily subalgebras.}, $\huaG=\g_1\oplus\g_2$.   For later convenience, we use $\Omega$ to denote the multiplication $[\cdot,\cdot]_\huaG$, i.e. $$\Omega((x,u),(y,v)):=[(x,u),(y,v)]_{\huaG}.$$

\begin{lem}\label{lem:dec}
Any $\Omega\in C^2(\huaG,\huaG)$ is uniquely decomposed into four homogeneous linear maps of bidegrees $2|-1,~1|0,~0|1$ and $-1|2,$
$$
\Omega=\hat{\phi}_1+\hat{\mu}_1+\hat{\mu}_2+\hat{\phi}_2.
$$
\end{lem}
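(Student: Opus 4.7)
My plan is to apply the horizontal lift isomorphism \eqref{decomposition} at $n=2$ and regroup the resulting summands according to bidegree.

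Since the arity constraint $l+k+1=n$ forces $l+k=1$ when $n=2$, the four possible bidegrees of a bilinear map are $2|-1$, $1|0$, $0|1$ and $-1|2$, matching the four pieces $\hat{\phi}_1,\hat{\mu}_1,\hat{\mu}_2,\hat{\phi}_2$ in the statement. First I would apply \eqref{decomposition} at $n=2$ to write
\[
C^2(\huaG,\huaG)\;\cong\;\bigoplus_{l+k=2} C^2(\g^{l,k},\g_1)\;\oplus\;\bigoplus_{l+k=2} C^2(\g^{l,k},\g_2),
\]
yielding six direct summands indexed by $(l,k)\in\{(2,0),(1,1),(0,2)\}$ and a choice of target $\g_1$ or $\g_2$.

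Next I would match each summand to a bidegree by checking conditions (i)--(iv) of the definition. The summand $C^2(\g^{2,0},\g_2)$, extended by zero elsewhere, satisfies (i)--(iv) with bidegree $2|-1$ (condition (ii) is vacuous since $\g^{l+1,k}=\g^{3,-1}$ is empty); this piece is $\hat{\phi}_1$. Dually, $C^2(\g^{0,2},\g_1)$ has bidegree $-1|2$ and gives $\hat{\phi}_2$. The remaining four summands pair up into two bidegree classes: the space of bidegree-$1|0$ maps is $C^2(\g^{2,0},\g_1)\oplus C^2(\g^{1,1},\g_2)$, whose general element has exactly the semidirect-product shape of $\hat{\mu}$ in \eqref{semidirect}, producing $\hat{\mu}_1$; similarly the space of bidegree-$0|1$ maps is $C^2(\g^{1,1},\g_1)\oplus C^2(\g^{0,2},\g_2)$, producing $\hat{\mu}_2$. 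Since every one of the six summands has been assigned to exactly one of the four bidegree classes, existence of the decomposition follows.

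Uniqueness is then immediate from Lemma \ref{Zero-condition-1}: given two such decompositions, their difference expresses zero as a sum of four homogeneous linear maps with pairwise distinct bidegrees, so each difference vanishes. There is no real obstacle here; the one bookkeeping subtlety is the vacuous-condition check at the boundary bidegrees $2|-1$ and $-1|2$, where the ``missing'' index $\g^{3,-1}$ or $\g^{-1,3}$ makes the corresponding defining condition automatic. Otherwise the lemma is pure regrouping of the isomorphism \eqref{decomposition}.
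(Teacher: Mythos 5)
Your proposal is correct and follows essentially the same route as the paper, which likewise applies the lift isomorphism \eqref{decomposition} at $n=2$ to split $C^2(\huaG,\huaG)$ into the four bidegree components $(2|-1)\oplus(1|0)\oplus(0|1)\oplus(-1|2)$ and then invokes Lemma \ref{Zero-condition-1} for uniqueness. You merely make explicit the bookkeeping the paper leaves implicit, namely matching the six summands $C^2(\g^{l,k},\g_i)$ to the four bidegree classes, including the vacuous boundary checks at $2|-1$ and $-1|2$.
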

\begin{proof}
By \eqref{decomposition},  $C^2(\huaG,\huaG)$ is decomposed into
$$
C^2(\huaG,\huaG)=(2|-1)+(1|0)+(0|1)+(-1|2),
$$
where $(i|j)$ is the space of linear maps  of the bidegree $i|j$.  By Lemma \ref{Zero-condition-1}, $\Omega$ is uniquely decomposed into homogeneous linear maps of bidegrees $2|-1,~1|0,~0|1$ and $-1|2$.
\end{proof}

Denote $P_{\g_i}[X,Y]_\huaG$ by $[X,Y]_i$, for $X,Y\in\huaG,~i=1,2$, where $P_{\g_1}$ and $P_{\g_2}$ are the natural projections from $\huaG$ to $\g_1$ and $\g_2$ respectively. The multiplication $[(x,u),(y,v)]_{\huaG}$ of $\huaG$ is uniquely decomposed by the canonical projections $P_{\g_1}$ and $P_{\g_2}$ into eight multiplications:
\begin{eqnarray*}
\label{1}[x,y]_{\huaG}&=&([x,y]_1,[x,y]_2),\quad
\label{2}[x,v]_{\huaG}=([x,v]_1,[x,v]_2),\\
\label{3}[u,y]_{\huaG}&=&([u,y]_1,[u,y]_2),\quad
\label{4}[u,v]_{\huaG}=([u,v]_1,[u,v]_2).
\end{eqnarray*}
 Write $\Omega=\hat{\phi}_1+\hat{\mu}_1+\hat{\mu}_2+\hat{\phi}_2$ as in Lemma \ref{lem:dec}.  Then we obtain
%We put bidegrees on the four cochains: $||\hat{\phi}_1||=2|-1,||\hat{\mu}_1||=1|0,||\hat{\mu}_2||=0|1$ and $||\hat{\phi}_2||=-1|2$.
\begin{eqnarray}
\label{bracket-1}\hat{\phi}_1((x,u),(y,v))&=&(0,[x,y]_2),\\
\label{bracket-2}\hat{\mu}_1((x,u),(y,v))&=&([x,y]_1,[x,v]_2+[u,y]_2),\\
\label{bracket-3}\hat{\mu}_2((x,u),(y,v))&=&([x,v]_1+[u,y]_1,[u,v]_2),\\
\label{bracket-4}\hat{\phi}_2((x,u),(y,v))&=&([u,v]_1,0).
\end{eqnarray}
Observe that $\hat{\phi}_1$ and $\hat{\phi}_2$ are lifted linear maps of $\phi_1(x,y):=[x,y]_2$ and $\phi_2(u,v):=[u,v]_1.$

\begin{defi}
The triple $(\huaG,\g_1,\g_2)$ is called a {\bf twilled Leibniz algebra} if $\phi_1=\phi_2=0$, or equivalently, $\g_1$ and $\g_2$ are subalgebras of $\huaG$.
\end{defi}

\begin{lem}\label{lem:twillL}
The triple $(\huaG,\g_1,\g_2)$ is a twilled Leibniz algebra if and only if the following three conditions hold:
\begin{eqnarray}
\label{twilled-1}\frac{1}{2}[\hat{\mu}_1,\hat{\mu}_1]_\B&=&0,\\
\label{twilled-2}[\hat{\mu}_1,\hat{\mu}_2]_\B&=&0,\\
\label{twilled-3}\frac{1}{2}[\hat{\mu}_2,\hat{\mu}_2]_\B&=&0.
\end{eqnarray}
\end{lem}
\begin{proof}
By Lemma \ref{important-lemma-2} and Lemma \ref{Zero-condition-1}, the proof is straightforward.
\end{proof}

\begin{pro}
  There is a one-to-one correspondence between matched pairs of Leibniz algebras and twilled Leibniz algebras.
\end{pro}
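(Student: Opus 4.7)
The plan is to exhibit explicit mutually inverse constructions and to verify their compatibility by unpacking the Balavoine bracket conditions of Lemma \ref{lem:twillL} on homogeneous tensors.

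Starting from a twilled Leibniz algebra $(\huaG,\g_1,\g_2)$ with structure $\Omega=\hat{\mu}_1+\hat{\mu}_2$, Remark \ref{two-representation} already produces Leibniz brackets $[\cdot,\cdot]_{\g_1}$ and $[\cdot,\cdot]_{\g_2}$ together with maps $\rho_1^L,\rho_1^R:\g_1\lon\gl(\g_2)$ and $\rho_2^L,\rho_2^R:\g_2\lon\gl(\g_1)$ defined by
\begin{eqnarray*}
\rho_1^L(x)v=\hat{\mu}_1(x,v),\quad \rho_1^R(x)v=\hat{\mu}_1(v,x),\quad \rho_2^L(u)y=\hat{\mu}_2(u,y),\quad \rho_2^R(u)y=\hat{\mu}_2(y,u).
\end{eqnarray*}
Conditions \eqref{twilled-1} and \eqref{twilled-3} ensure that $(\g_2;\rho_1^L,\rho_1^R)$ is a representation of $\g_1$ and $(\g_1;\rho_2^L,\rho_2^R)$ is a representation of $\g_2$, since the restrictions of the Leibniz identity for $\hat{\mu}_i$ to triples with exactly one entry in the non-subalgebra factor are precisely the representation axioms \eqref{rep-1}--\eqref{rep-3}.

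The heart of the argument is to show that the cross-compatibility \eqref{twilled-2}, namely $[\hat{\mu}_1,\hat{\mu}_2]^B=0$, is equivalent to the six matched pair identities \eqref{matched-pair-1}--\eqref{matched-pair-6}. By Lemma \ref{important-lemma-2}, the bracket $[\hat{\mu}_1,\hat{\mu}_2]^B$ has bidegree $1|1$, so evaluating it on a triple in $\g^{2,1}$ yields an element of $\g_1$ while a triple in $\g^{1,2}$ yields an element of $\g_2$, and all other components vanish. Choosing the three possible positions of the $\g_2$-entry in $\g^{2,1}$ gives three $\g_1$-valued conditions which, after expanding via \eqref{leibniz-bracket} and translating back through the representations above, become exactly \eqref{matched-pair-4}, \eqref{matched-pair-5} and \eqref{matched-pair-6}. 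The symmetric computation on $\g^{1,2}$ yields \eqref{matched-pair-1}, \eqref{matched-pair-2} and \eqref{matched-pair-3}. By Lemma \ref{Zero-condition-1} the vanishing of $[\hat{\mu}_1,\hat{\mu}_2]^B$ is equivalent to the simultaneous vanishing of all six components, giving the desired equivalence.

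Conversely, given a matched pair, define $\Omega=\hat{\mu}_1+\hat{\mu}_2$ on $\huaG:=\g_1\oplus\g_2$ by
\begin{eqnarray*}
[x+u,y+v]_{\huaG}:=[x,y]_{\g_1}+\rho_2^L(u)y+\rho_2^R(v)x+[u,v]_{\g_2}+\rho_1^L(x)v+\rho_1^R(y)u,
\end{eqnarray*}
so that $\g_1$ and $\g_2$ are subalgebras by construction. Conditions \eqref{twilled-1} and \eqref{twilled-3} follow from the Leibniz identities of $\g_1$ and $\g_2$ together with the representation axioms \eqref{rep-1}--\eqref{rep-3}, while \eqref{twilled-2} follows by reversing the correspondence of the previous paragraph. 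The two constructions are manifestly mutually inverse. The main obstacle is the careful bookkeeping required to match each scalar component of $[\hat{\mu}_1,\hat{\mu}_2]^B$ on a mixed triple with the correct matched pair identity and to track signs under the shuffle sums in \eqref{leibniz-bracket}; once this dictionary is established, both directions reduce to direct verification.
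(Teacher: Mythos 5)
Your proposal is correct and follows essentially the same route as the paper: both directions rest on the decomposition $\Omega=\hat{\mu}_1+\hat{\mu}_2$, the representations from Remark \ref{two-representation}, the semidirect-type bracket \eqref{eq:dm}, and the equivalence of $[\hat{\mu}_1,\hat{\mu}_2]^B=0$ with \eqref{matched-pair-1}--\eqref{matched-pair-6} via Lemma \ref{lem:twillL}. Your explicit bidegree-$1|1$ bookkeeping, locating identities \eqref{matched-pair-4}--\eqref{matched-pair-6} on $\g^{2,1}$ and \eqref{matched-pair-1}--\eqref{matched-pair-3} on $\g^{1,2}$, merely spells out the componentwise expansion the paper leaves as straightforward, and it is carried out correctly.
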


\begin{proof}
Let $(\g_1,\g_2;(\rho^L_{1},\rho^R_{1}),(\rho^L_{2},\rho^R_{2}))$ be a matched pair   of Leibniz algebras. By Proposition \ref{matched-pair-to-big-algebras}, we obtain that $(\g_1\oplus\g_2,[\cdot,\cdot]_{ \bowtie })$ is a Leibniz algebra. We denote this Leibniz algebra simply by $\g_1\bowtie\g_2$. Then $(\g_1\bowtie\g_2,\g_1,\g_2)$ is a twilled Leibniz algebra.

Conversely, if $(\huaG,\g_1,\g_2)$ is a twilled Leibniz algebra,  then $(\rho^L_{1},\rho^R_{1})$ is a representation of $\g_1$ on $\g_2$ and  $(\rho^L_{2},\rho^R_{2})$ is a representation of $\g_2$ on $\g_1$, where $\rho^L_{1},~\rho^R_{1},~ \rho^L_{2},~\rho^R_{2}$ are defined by
\begin{eqnarray*}
 \rho^L_{1}(x)u=[x,u]_2,\,\,\,\,\rho^R_{1}(x)u=[u,x]_2,\quad
\rho^L_{2}(u)x=[u,x]_1,\,\,\,\,\rho^R_{2}(u)x=[x,u]_1.
\end{eqnarray*}
  By Lemma \ref{lem:twillL},  $[\hat{\mu}_1,\hat{\mu}_2]_\B=0$, which is equivalent to  \eqref{matched-pair-1}-\eqref{matched-pair-6}. Thus,
 $(\g_1,\g_2;(\rho^L_{1},\rho^R_{1}),(\rho^L_{2},\rho^R_{2}))$ is a matched pair of Leibniz algebras.
\end{proof}

Let $(\huaG,[\cdot,\cdot]_\huaG)$ be a Leibniz algebra with a decomposition  into two subspaces, $\huaG=\g_1\oplus\g_2$, and $\Omega=\hat{\phi}_1+\hat{\mu}_1+\hat{\mu}_2+\hat{\phi}_2$ the Leibniz multiplication.
  Let $\hat{H}$ be the lift of a linear map $H:\g_2\lon\g_1$. Then $e^{[\cdot,\hat{H}]_\B}$ is an automorphism of the graded Lie algebra $(C^*(\huaG,\huaG),[\cdot,\cdot]_\B)$.

\begin{defi}
The transformation $\Omega^{H}:=e^{[\cdot,\hat{H}]_\B}\Omega$ is called a {\bf twisting} of $\Omega$ by $H$.
\end{defi}

\begin{lem}
$\Omega^{H}=e^{-\hat{H}}\circ \Omega\circ (e^{\hat{H}}\otimes e^{\hat{H}})$.
\end{lem}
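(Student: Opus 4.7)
The plan is to exploit the nilpotency of $\hat{H}$. Because $\hat{H}(x,v)=(H(v),0)$ always lands in $\g_1$ while $\hat{H}$ vanishes on $\g_1$, one has $\hat{H}\circ\hat{H}=0$, so both exponentials reduce to finite sums: $e^{\pm\hat{H}}=\mathrm{Id}\pm\hat{H}$ are mutually inverse linear isomorphisms of $\huaG$, and the adjoint series $e^{[\cdot,\hat{H}]^B}\Omega$ truncates after finitely many terms once one checks that iterating $[\cdot,\hat{H}]^B$ eventually forces a repeated $\hat{H}$.

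First I would unwind $[\Omega,\hat{H}]^B$ from the definition of the Balavoine bracket. Since $\hat{H}\in C^1$, the shifted degree is $p=0$, so the Koszul sign is trivial and a one-line computation gives
$$[X,\hat{H}]^B(x,y)=X(\hat{H}x,y)+X(x,\hat{H}y)-\hat{H}(X(x,y))$$
for any $X\in C^2(\huaG,\huaG)$. Hence the derivation $D:=[\cdot,\hat{H}]^B$ acting on two-cochains decomposes as $D=L_1+L_2+R$, where $L_i(X):=X\circ_i\hat{H}$ inserts $\hat{H}$ in the $i$-th argument and $R(X):=-\hat{H}\circ X$ post-composes. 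A direct inspection shows these three operators pairwise commute and each squares to zero: $L_i^2$, $R^2$, and all three $L_iR$, $RL_i$, $L_1L_2$ relations simply transport an extra factor of $\hat{H}^2=0$.

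With commuting squarezero summands, the exponential factors as
$$e^D=e^{L_1}e^{L_2}e^R=(\mathrm{Id}+L_1)(\mathrm{Id}+L_2)(\mathrm{Id}+R),$$
so $\Omega^H=e^D(\Omega)$ evaluated on $(x,y)$ produces precisely the eight terms gotten from all choices of whether to apply $\hat{H}$ in the first slot, the second slot, and outside. On the other side, expanding $(\mathrm{Id}-\hat{H})\circ\Omega\circ((\mathrm{Id}+\hat{H})\otimes(\mathrm{Id}+\hat{H}))$ yields the same eight terms, and term-by-term comparison closes the proof. I expect the only obstacle is bookkeeping signs (the minus signs all come from $R$, and appear with the correct multiplicities $1,2,4$ coming from the three orders of $\hat{H}$ factors).

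A more conceptual alternative, which I would mention if the bookkeeping gets awkward, is the following: for any invertible $\phi\in\mathrm{GL}(\huaG)$, the conjugation $T_\phi(f):=\phi^{-1}\circ f\circ\phi^{\otimes n}$ on $C^n(\huaG,\huaG)$ is a gLa automorphism of the Balavoine bracket, because partial compositions $\circ_k$ transform functorially under $\phi$. Applying this with $\phi=e^{s\hat{H}}$ and noting that $T_{e^{s\hat{H}}}(\hat{H})=\hat{H}$ (since $\hat{H}$ commutes with its own exponentials), differentiating in $s$ gives $\tfrac{d}{ds}T_{e^{s\hat{H}}}(\Omega)=[T_{e^{s\hat{H}}}(\Omega),\hat{H}]^B$ with initial value $\Omega$; by uniqueness, $T_{e^{s\hat{H}}}(\Omega)=e^{s[\cdot,\hat{H}]^B}(\Omega)$, and specializing to $s=1$ delivers the identity.
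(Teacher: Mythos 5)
Your proof is correct, and its first half is essentially the paper's computation in cleaner clothing: both arguments rest on $\hat{H}\circ\hat{H}=0$ and on the formula $[X,\hat{H}]^B=X\circ(\hat{H}\otimes\Id)+X\circ(\Id\otimes\hat{H})-\hat{H}\circ X$ for $X\in C^2(\huaG,\huaG)$, and both end by matching eight terms against the expansion of $(\Id-\hat{H})\circ \Omega\circ\bigl((\Id+\hat{H})\otimes(\Id+\hat{H})\bigr)$. The difference is organizational: the paper evaluates the iterated brackets $[\Omega,\hat{H}]^B$, $[[\Omega,\hat{H}]^B,\hat{H}]^B$, $[[[\Omega,\hat{H}]^B,\hat{H}]^B,\hat{H}]^B$ on elements, checks that the fourth iterate vanishes, and lets the internal multiplicities $2$ and $6$ cancel the coefficients $\half$ and $\frac{1}{6}$; your factorization $e^{L_1+L_2+R}=(\Id+L_1)(\Id+L_2)(\Id+R)$ -- legitimate since the three summands pairwise commute and square to zero on $C^2(\huaG,\huaG)$ (note the commutations $L_1L_2=L_2L_1$ and $L_iR=RL_i$ are formal slot-independence facts, only the squares use $\hat{H}^2=0$) -- makes each of the eight terms appear exactly once with its sign, so your parenthetical worry about ``multiplicities $1,2,4$'' is moot, and as stated those numbers are not the ones that occur. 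Your second argument is genuinely different from the paper and is the more valuable one: conjugation $T_\phi(f)=\phi^{-1}\circ f\circ \phi^{\otimes n}$ preserves each partial composition $\circ_k$ (the inner $\phi\circ\phi^{-1}$ cancels), hence is an automorphism of the Balavoine gLa, and both $T_{e^{s\hat{H}}}(\Omega)$ and $e^{s[\cdot,\hat{H}]^B}\Omega$ solve $Y'=[Y,\hat{H}]^B$ with $Y(0)=\Omega$; since $\K$ is an abstract field of characteristic $0$ you should read this ``ODE'' formally, which is unproblematic because both sides are polynomials in $s$ ($\hat{H}$ is square-zero and $(\ad_{\hat{H}})^4\Omega=0$) and uniqueness of formal solutions in characteristic $0$ is immediate. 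This route buys more than the lemma: it proves the paper's unproved assertion that $e^{[\cdot,\hat{H}]^B}$ is an automorphism of $(C^*(\huaG,\huaG),[\cdot,\cdot]^B)$, and it applies verbatim to cochains of arbitrary arity rather than just to the binary structure $\Omega$.
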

\begin{proof}
For all $(x_1,v_1),~(x_2,v_2)\in\huaG$, we have
\begin{eqnarray*}
&&[\Omega,\hat{H}]_\B\big((x_1,v_1),(x_2,v_2)\big)=(\Omega\bar{\circ}\hat{H}-\hat{H}\bar{\circ}\Omega)\big((x_1,v_1),(x_2,v_2)\big)\\
                                                 &&\qquad\qquad=\Omega((H(v_1),0),(x_2,v_2))+\Omega((x_1,v_1),(H(v_2),0))-\hat{H}(\Omega((x_1,v_1),(x_2,v_2))).
\end{eqnarray*}
By $\hat{H}\circ\hat{H}=0$, we have
\begin{eqnarray*}
[[\Omega,\hat{H}]_\B,\hat{H}]_\B\big((x_1,v_1),(x_2,v_2)\big)&=&[\Omega,\hat{H}]_\B((H(v_1),0),(x_2,v_2))+[\Omega,\hat{H}]_\B((x_1,v_1),(H(v_2),0))\\
&&-\hat{H}\big([\Omega,\hat{H}]_\B((x_1,v_1),(x_2,v_2))\big)\\
&=&2\Omega((H(v_1),0),(H(v_2),0))-2\hat{H}\Omega((H(v_1),0),(x_2,v_2))\\
&&-2\hat{H}\Omega((x_1,v_1),(H(v_2),0)).
\end{eqnarray*}
Moreover, we have
\begin{eqnarray*}
{}[[[\Omega,\hat{H}]_\B,\hat{H}]_\B,\hat{H}]_\B\big((x_1,v_1),(x_2,v_2)\big)&=&-6\hat{H}\Omega((H(v_1),0),(H(v_2),0)),\\
{}\underbrace{[\cdots[[}_{i}\Omega,\hat{H}]_\B,\hat{H}]_\B,\cdots,\hat{H}]_\B\big((x_1,v_1),(x_2,v_2)\big)&=&0,\,\,\,\,\forall i\ge4,
\end{eqnarray*}
and
\begin{eqnarray}\label{eq:twisting-operator}
e^{[\cdot,\hat{H}]_\B}\Omega=\Omega+[\Omega,\hat{H}]_\B+\half[[\Omega,\hat{H}]_\B,\hat{H}]_\B+\frac{1}{6}[[[\Omega,\hat{H}]_\B,\hat{H}]_\B,\hat{H}]_\B.
\end{eqnarray}
Thus, we have
\begin{eqnarray*}
\Omega^{H}&=&\Omega-\hat{H}\circ\Omega+\Omega\circ(\hat{H}\otimes{\Id})+\Omega\circ({\Id}\otimes\hat{H})-\hat{H}\circ\Omega\circ({\Id}\otimes\hat{H})-\hat{H}\circ\Omega\circ(\hat{H}\otimes{\Id})\\
          \nonumber&&+\Omega\circ(\hat{H}\otimes\hat{H})-\hat{H}\circ\Omega\circ(\hat{H}\otimes\hat{H}).
\end{eqnarray*}
By $\hat{H}\circ\hat{H}=0$, we have
\begin{eqnarray*}
e^{-\hat{H}}\circ \Omega\circ (e^{\hat{H}}\otimes e^{\hat{H}})&=&({\Id}-\hat{H})\circ\Omega\circ (({\Id}+\hat{H})\otimes({\Id}+\hat{H}))\\
&=&\Omega+\Omega\circ({\Id}\otimes\hat{H})+\Omega\circ(\hat{H}\otimes{\Id})+\Omega\circ(\hat{H}\otimes\hat{H})\\
&&-\hat{H}\circ\Omega-\hat{H}\circ\Omega\circ({\Id}\otimes\hat{H})-\hat{H}\circ\Omega\circ(\hat{H}\otimes{\Id})-\hat{H}\circ\Omega\circ(\hat{H}\otimes\hat{H}).
\end{eqnarray*}
Thus, we obtain that $\Omega^{H}=e^{-\hat{H}}\circ \Omega\circ (e^{\hat{H}}\otimes e^{\hat{H}})$. The proof is finished.
\end{proof}

\begin{pro}
The twisting $\Omega^{H}$ is a Leibniz algebra structure on $\huaG$.
\end{pro}
\begin{proof}
By $\Omega^{H}=e^{-\hat{H}}\circ \Omega\circ (e^{\hat{H}}\otimes e^{\hat{H}})$, we have
\begin{eqnarray*}
[\Omega^{H},\Omega^{H}]_\B=2\Omega^{H}\bar{\circ}\Omega^{H}&=&2e^{-\hat{H}}\circ(\Omega\bar{\circ}\Omega)\circ(e^{\hat{H}}\otimes e^{\hat{H}}\otimes e^{\hat{H}})\\
&=&e^{-\hat{H}}\circ[\Omega,\Omega]_\B\circ(e^{\hat{H}}\otimes e^{\hat{H}}\otimes e^{\hat{H}})=0,
\end{eqnarray*}
which implies that $\Omega^{H}$ is a Leibniz algebra structure on $\huaG$ by Theorem \ref{leibniz-algebra-B}.
\end{proof}

\begin{cor}\label{twisting-isomorphism}
$
e^{\hat{H}}:(\huaG,\Omega^{H})\lon(\huaG,\Omega)
$
is an isomorphism between Leibniz algebras.
\end{cor}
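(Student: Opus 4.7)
The plan is to read the isomorphism statement off directly from the formula $\Omega^{H}=e^{-\hat{H}}\circ \Omega\circ (e^{\hat{H}}\otimes e^{\hat{H}})$ established in the preceding lemma. So the proof is essentially a one-line verification once the right objects are in hand.

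First I would observe that, because $\hat{H}\circ\hat{H}=0$, the exponential $e^{\hat{H}}$ truncates to $\mathrm{Id}+\hat{H}$, which is a linear bijection on $\huaG$ with inverse $e^{-\hat{H}}=\mathrm{Id}-\hat{H}$. Explicitly, $e^{\hat{H}}(x,v)=(x+H(v),v)$. This immediately takes care of the bijectivity part of ``isomorphism.''

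Next I would check the multiplicativity condition
\[
e^{\hat{H}}\bigl(\Omega^{H}((x_1,v_1),(x_2,v_2))\bigr)=\Omega\bigl(e^{\hat{H}}(x_1,v_1),e^{\hat{H}}(x_2,v_2)\bigr).
\]
Applying $e^{\hat{H}}$ on the left of both sides of $\Omega^{H}=e^{-\hat{H}}\circ\Omega\circ(e^{\hat{H}}\otimes e^{\hat{H}})$, and using $e^{\hat{H}}\circ e^{-\hat{H}}=\mathrm{Id}$, one obtains $e^{\hat{H}}\circ\Omega^{H}=\Omega\circ(e^{\hat{H}}\otimes e^{\hat{H}})$, which is exactly the desired identity. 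Since $(\huaG,\Omega^{H})$ is already a Leibniz algebra by the previous proposition ($[\Omega^{H},\Omega^{H}]^{B}=0$), this suffices.

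There is no real obstacle here: the preceding lemma has already done the work of rewriting the bracket-level exponential $e^{[\cdot,\hat{H}]^{B}}$ as conjugation by $e^{\hat{H}}$, and the corollary is just the ``intertwiner'' reformulation of that identity. The only things I would be careful about are: (i) checking that $e^{\hat{H}}$ truly is invertible (this uses nilpotency of $\hat{H}$), and (ii) that this multiplicativity condition is indeed what is meant by ``isomorphism of Leibniz algebras'' in this context, i.e.\ a bijective linear map preserving the bracket. Both are immediate, so the proof will be essentially a two-line invocation of the preceding lemma.
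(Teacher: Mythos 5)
Your proof is correct and is exactly the argument the paper intends: the corollary is stated without proof precisely because it follows by composing the lemma's identity $\Omega^{H}=e^{-\hat{H}}\circ\Omega\circ(e^{\hat{H}}\otimes e^{\hat{H}})$ with the invertible map $e^{\hat{H}}=\mathrm{Id}+\hat{H}$ (invertible since $\hat{H}\circ\hat{H}=0$), together with the preceding proposition that $[\Omega^{H},\Omega^{H}]^{B}=0$. Nothing is missing.
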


%\yh{will this isomorphism induce some quasi-morphism between corresponding $L_\infty$-algebras?}

Obviously,  $\Omega^{H}$ is also decomposed into the unique four substructures. The twisting operations are completely determined by the following result.

\begin{pro}\label{thm:twist}
Write $\Omega:=\hat{\phi}_1+\hat{\mu}_1+\hat{\mu}_2+\hat{\phi}_2$ and $\Omega^{H}:=\hat{\phi}_1^{H}+\hat{\mu}_1^{H}+\hat{\mu}_2^{H}+\hat{\phi}_2^{H}$. Then we have:
\begin{eqnarray}
\label{twisting-1}\hat{\phi}_1^{H}&=&\hat{\phi}_1,\\
\label{twisting-2}\hat{\mu}_1^{H}&=&\hat{\mu}_1+[\hat{\phi}_1,\hat{H}]_\B,\\
\label{twisting-3}\hat{\mu}_2^{H}&=&\hat{\mu}_2+[\hat{\mu}_1,\hat{H}]_\B+\half[[\hat{\phi}_1,\hat{H}]_\B,\hat{H}]_\B,\\
\label{twisting-4}\hat{\phi}_2^{H}&=&\hat{\phi}_2+[\hat{\mu}_2,\hat{H}]_\B+\half[[\hat{\mu}_1,\hat{H}]_\B,\hat{H}]_\B+\frac{1}{6}[[[\hat{\phi}_1,\hat{H}]_\B,\hat{H}]_\B,\hat{H}]_\B.
\end{eqnarray}
\end{pro}
\begin{proof}
By \eqref{eq:twisting-operator}, the first term of $\Omega^{H}$ is $\Omega$. By Lemma \ref{Zero-condition-2} and $||\hat{\phi}_2||=-1|2,~||\hat{H}||=-1|1$, the second term is
$$
[\hat{\phi}_1,\hat{H}]_\B+[\hat{\mu}_1,\hat{H}]_\B+[\hat{\mu}_2,\hat{H}]_\B.
$$
By Lemma \ref{important-lemma-2} and $||\hat{\phi}_1||=2|-1,~||\hat{\mu}_1||=1|0,~||\hat{\mu}_2||=0|1$, we have $$||[\hat{\phi}_1,\hat{H}]_\B||=1|0,\quad ||[\hat{\mu}_1,\hat{H}]_\B||=0|1,\quad||[\hat{\mu}_2,\hat{H}]_\B||=-1|2.$$
Therefore, $[[\hat{\mu}_2,\hat{H}]_\B,\hat{H}]_\B=0$ and the third term is
$$
\half([[\hat{\phi}_1,\hat{H}]_\B,\hat{H}]_\B+[[\hat{\mu}_1,\hat{H}]_\B,\hat{H}]_\B).
$$
Moreover, we have
$$||[[\hat{\phi}_1,\hat{H}]_\B,\hat{H}]_\B||=0|1,\quad||[[\hat{\mu}_1,\hat{H}]_\B,\hat{H}]_\B||=-1|2.$$
Thus, $[[[\hat{\mu}_1,\hat{H}]_\B,\hat{H}]_\B,\hat{H}]_\B=0$ and the final term is
$$
\frac{1}{6}[[[\hat{\phi}_1,\hat{H}]_\B,\hat{H}]_\B,\hat{H}]_\B.
$$
We have
$
||[[[\hat{\phi}_1,\hat{H}]_\B,\hat{H}]_\B,\hat{H}]_\B||=-1|2.
$
By Lemma \ref{Zero-condition-1}, the sum of all $-1|2$-terms is
$$
\hat{\phi}_2+[\hat{\mu}_2,\hat{H}]_\B+\half[[\hat{\mu}_1,\hat{H}]_\B,\hat{H}]_\B+\frac{1}{6}[[[\hat{\phi}_1,\hat{H}]_\B,\hat{H}]_\B,\hat{H}]_\B.
$$
Thus, we deduce that \eqref{twisting-4} holds. The sum of all $0|1$-terms is
$
\hat{\mu}_2+[\hat{\mu}_1,\hat{H}]_\B+\half[[\hat{\phi}_1,\hat{H}]_\B,\hat{H}]_\B.
$
Thus, we deduce  that \eqref{twisting-3} holds. The sum of all $1|0$-terms is
$
\hat{\mu}_1+[\hat{\phi}_1,\hat{H}]_\B.
$
Thus, we deduce that \eqref{twisting-2} holds. The sum of all $2|-1$-terms is
$
\hat{\phi}_1.
$
Thus, we deduce that \eqref{twisting-1} holds. The proof is finished.
\end{proof}

In the sequel, we consider a special case of the above twisting theory. Let $(V;\rho^L,\rho^R)$ be  a representation of a Leibniz algebra $(\g,[\cdot,\cdot]_\g)$. Consider the twilled Leibniz algebra  $(\g\ltimes_{\rho^L,\rho^R}V,\g,V)$.
Denote the Leibniz bracket $[\cdot,\cdot]_{\ltimes}$ by $\Omega$.
  Write $\Omega=\hat{\mu}_1+\hat{\mu}_2$. Then $\hat{\mu}_2=0$.

\begin{thm}\label{twisting-twilled}
With the above notations, let $H:V \longrightarrow\g$ be a linear map. The   twisting $((\g\oplus V, \Omega^{H}),\g,V)$ is a twilled Leibniz algebra if and only if $H$ is a  relative Rota-Baxter operator on the Leibniz algebra $(\g,[\cdot,\cdot]_\g)$ with respect to the representation $(V;\rho^L,\rho^R)$. Moreover, the Leibniz algebra structure on $V$ is given by \begin{eqnarray}\label{eq:mul2}
[u,v]_{H}:=\rho^L(H(u))v+\rho^R(H(v))u,\quad\forall u,v\in V.
\end{eqnarray}
\end{thm}
\begin{proof}
By Proposition \ref{thm:twist},
 the twisting have the form:
\begin{eqnarray}
\label{twilled-twisting-1}\hat{\mu}_1^{H}&=&\hat{\mu}_1,\\
\label{twilled-twisting-2}\hat{\mu}_2^{H}&=& [\hat{\mu}_1,\hat{H}]_\B,\\
\label{twilled-twisting-3}\hat{\phi}_2^{H}&=& \half[[\hat{\mu}_1,\hat{H}]_\B,\hat{H}]_\B.
\end{eqnarray}
Thus, the twisting $((\g\oplus V, \Omega^{H}),\g,V)$ is a twilled Leibniz algebra if and only if  $\hat{\phi}_2^{H}=0$, which implies that $H$ is a relative Rota-Baxter operator by Theorem \ref{twilled-DGLA}.

By Lemma \ref{lem:twillL}, we deduce that $\hat{\mu}_2^{H}$ is a Leibniz algebra multiplication on $V$. It is straightforward to deduce that the multiplication on $V$ is given by \eqref{eq:mul2}.
\end{proof}

\emptycomment{
\subsection{Invariant twilled Leibniz algebras and Leibniz bialgebras}
\begin{thm}
Let $(V,\rho^L,\rho^R)$ be a representation of a Leibniz algebra $(\g,[\cdot,\cdot]_{\g})$. Then $({\rho^*}^L,-{\rho^*}^L-{\rho^*}^R)$ is a representation of the Leibniz algebra $(\g,[\cdot,\cdot]_{\g})$ on the vector
space $V^*$, which is called the {\bf dual representation} of the representation $(V,\rho^L,\rho^R)$.
\end{thm}
\begin{proof}
By \eqref{rep-1}, for all $x,y,z\in\g$ and $\xi\in\g^*$,  we have
\begin{eqnarray*}
\langle l^*_{[x,y]_\g}\xi,z\rangle&=&-\langle \xi,l_{[x,y]_\g}z\rangle\\
                                  &=&-\langle \xi,l_x(l_yz)-l_y(l_xz)\rangle\\
                                  &=&-\langle l^*_y(l^*_x\xi),z\rangle+\langle l^*_x(l^*_y\xi),z\rangle\\
                                  &=&\langle [l^*_x,l^*_y]\xi,z\rangle.
\end{eqnarray*}
Thus, we have $l^*_{[x,y]_\g}=[l^*_x,l^*_y]$. For all $x,y,z\in\g$ and $\xi\in\g^*$, by \eqref{rep-1} and \eqref{rep-2}, we have
\begin{eqnarray*}
\langle -l^*_{[x,y]_\g}-r^*_{[x,y]_\g}\xi,z\rangle&=&\langle \xi,l_{[x,y]_\g}z+r_{[x,y]_\g}z\rangle\\
                                  &=&\langle \xi,l_x(l_yz)-l_y(l_xz)+l_x(r_yz)-r_y(l_xz)\rangle\\
                                  &=&\langle l^*_y(l^*_x\xi),z\rangle-\langle l^*_x(l^*_y\xi),z\rangle+\langle r^*_y(l^*_x\xi),z\rangle-\langle l^*_x(r^*_y\xi),z\rangle\\
                                  &=&\langle [l^*_x,-l^*_{y}-r^*_{y}]\xi,z\rangle.
\end{eqnarray*}
Thus, we have $-l^*_{[x,y]_\g}-r^*_{[x,y]_\g}=[l^*_x,-l^*_{y}-r^*_{y}]$. For all $x,y,z\in\g$ and $\xi\in\g^*$, by \eqref{rep-3}, we have
\begin{eqnarray*}
\langle (-l^*_{y}-r^*_{y})(l^*_x\xi),z\rangle&=&-\langle \xi,(l_{x}\circ l_{y}+l_{x}\circ r_{y})z\rangle\\
                                  &=&-\langle \xi,(l_{x}\circ l_{y}+r_x\circ l_y+l_{x}\circ r_{y}+r_x\circ r_y)z\rangle\\
                                  &=&-\langle \xi,(l_{x}+r_x)\circ(l_y+r_{y})z\rangle\\
                                  &=&-\langle (-l^*_{y}-r^*_{y})((-l^*_{x}-r^*_{x})\xi),z\rangle.
\end{eqnarray*}
Thus, we have $(-l^*_{y}-r^*_{y})\circ l^*_x=-(-l^*_{y}-r^*_{y})\circ (-l^*_{x}-r^*_{x})$. The proof is finished.
\end{proof}

\begin{defi}\label{matched-pair}
A pair $(\frkk,\frks)$ of two Leibniz algebras is called a {\bf matched pair} if there exists a representation $(\rho^L_{1},\rho^R_{1})$ of $\frkk$ on $\frks$ and a representation $(\rho^L_{2},\rho^R_{2})$ of $\frks$ on $\frkk$ such that the identities
\begin{itemize}

        \item[\rm(i)]
        $\rho^R_{1}(x)[u,v]_{\frks}=[u,\rho^R_{1}(x)v]_{\frks}-[v,\rho^R_{1}(x)u]_{\frks}+\rho^R_{1}(\rho^L_{2}(v)x)u-\rho^R_{1}(\rho^L_{2}(u)x)v$;
        \item[\rm(ii)]
        $\rho^L_{1}(x)[u,v]_{\frks}=[\rho^L_{1}(x)u,v]_{\frks}+[u,\rho^L_{1}(x)v]_{\frks}+\rho^L_{1}(\rho^R_{2}(u)x)v+\rho^R_{1}(\rho^R_{2}(v)x)u$;
        \item[\rm(iii)]
        $[\rho^L_{1}(x)u,v]_{\frks}+\rho^L_{1}(\rho^R_{2}(u)x)v+[\rho^R_{1}(x)u,v]_{\frks}+\rho^L_{1}(\rho^L_{2}(u)x)v=0$;
        \item[\rm(iv)]
        $\rho^R_{2}(u)[x,y]_{\frkk}=[x,\rho^R_{2}(u)y]_{\frkk}-[y,\rho^R_{2}(u)x]_{\frkk}+\rho^R_{2}(\rho^L_{1}(y)u)x-\rho^R_{2}(\rho^L_{1}(x)u)y$;
        \item[\rm(v)]
        $\rho^L_{2}(u)[x,y]_{\frkk}=[\rho^L_{2}(u)x,y]_{\frkk}+[x,\rho^L_{2}(u)y]_{\frkk}+\rho^L_{2}(\rho^R_{1}(x)u)y+\rho^R_{2}(\rho^R_{1}(y)u)x$;
        \item[\rm(vi)]
        $[\rho^L_{2}(u)x,y]_{\frkk}+\rho^L_{2}(\rho^R_{1}(x)u)y+[\rho^R_{2}(u)x,y]_{\frkk}+\rho^L_{2}(\rho^L_{1}(x)u)y=0$,
\end{itemize}
hold for all $x,y\in\frkk$ and $u,v\in\frks$.
\end{defi}

\begin{lem}
Given a matched pair $(\frkk,\frks)$ of Leibniz algebras, there is a Leibniz algebra structure $\frkk\bowtie\frks$ on the direct sum vector space $\frkk\oplus\frks$ with the bracket
\begin{eqnarray}
[x+u,y+v]_{\frkk\bowtie\frks}=[x,y]_{\frkk}+\rho^R_{2}(v)x+\rho^L_{2}(u)y+[u,v]_{\frks}+\rho^L_{1}(x)v+\rho^R_{1}(y)u.
\end{eqnarray}
Conversely, if $\frkk\oplus\frks$ has a Leibniz algebra structure for which $\frkk$ and $\frks$ are Leibniz subalgebras, then the representations defined by
\begin{eqnarray}
&&\rho^L_{1}(x)u=P_{\frks}([x,u]_{\frkk\oplus\frks}),\,\,\,\,\rho^R_{1}(x)u=P_{\frks}([u,x]_{\frkk\oplus\frks}),\\
&&\rho^L_{2}(u)x=P_{\frkk}([u,x]_{\frkk\oplus\frks}),\,\,\,\,\rho^R_{2}(u)x=P_{\frkk}([x,u]_{\frkk\oplus\frks}),
\end{eqnarray}
where $P_{\frkk}$ and $P_{\frks}$ are the natural projection of $\frkk\oplus\frks$ to $\frkk$ and $\frks$ respectively. Moreover, they endow the couple $(\frkk,\frks)$ with a structure of a matched pair.
\end{lem}

\begin{rmk}
The twilled Leibniz algebras are the same natation of matched pair of Leibniz algebras. For more detail of matched pair of Leibniz algebras please see \cite{Agore}.
\end{rmk}

In the following, we concentrate on the case that $\g'=\g^*$, the dual space of $\g$ and
$$
\rho^L_{1}=L^*,\rho^R_{1}=-L^*-R^*,\rho^L_{2}=\huaL^*,\rho^R_{2}=-\huaL^*-\huaR^*.
$$

For a Leibniz algebra $(\g,[\cdot,\cdot]_{\g})$ (resp.\,\,$(\g^*,[\cdot,\cdot]_{\g^*})$), let $\triangle^*:\g^*\longrightarrow\otimes^2 \g^*$ (resp.\,\,$\triangle:\g\longrightarrow\otimes^2 \g$) be the dual map of $[\cdot,\cdot]_\g:\otimes^2 \g\longrightarrow\g$ (resp.\,\,$[\cdot,\cdot]_{\g^*}:\otimes^2 \g^*\longrightarrow\g^*$), i.e.
\begin{eqnarray*}
\langle \triangle^*\xi,x\otimes y\rangle=\langle \xi,[x,y]_\g\rangle,\,\,\,\,\langle \triangle x,\xi\otimes \eta\rangle=\langle x,[\xi,\eta]_{\g^*}\rangle.
\end{eqnarray*}
\begin{defi}{\rm (\cite{Chapoton})}
A {\bf quadratic Leibniz algebra} is a Leibniz algebra $(\g,[\cdot,\cdot]_\g)$ equipped with a nondegenerate skew-symmetric bilinear form $\omega\in\wedge^2\g^*$ such that the following invariant condition
holds:
\begin{eqnarray}\label{Invariant-bilinear-forms}
\omega(x,[y,z]_\g)=\omega([x,z]_\g+[z,x]_\g,y),\quad \forall  x,y,z\in \g.
\end{eqnarray}
\end{defi}

\begin{defi}
  A {\bf Manin triple of Leibniz algebras} is a triple $(\huaG;\g,\g')$, where
  \begin{itemize}
    \item $(\huaG,[\cdot,\cdot]_\huaG,\omega)$ is a quadratic Leibniz algebra;
    \item both $\g$ and $\g'$ are isotropic subalgebras of $(\huaG,[\cdot,\cdot]_\huaG)$;
    \item $\huaG=\g\oplus \g'$ as vector spaces.
  \end{itemize}
\end{defi}
Let $V$ be a vector space and $V^*=\Hom(V,\mathbb R)$   its dual space. Then there is a natural nondegenerate skew-symmetric bilinear form $\omega$ on $T^*V=V\oplus V^*$ given by:
\begin{eqnarray}\label{phase-space}
\omega(x+\alpha,y+\beta)=\langle \alpha,y\rangle-\langle \beta,x\rangle,\,\,\,\,\forall x,y\in V,\alpha,\beta\in V^*.
\end{eqnarray}

\begin{pro}\label{Leibniz-Manin-triple}
$(\g,\g^*;L^*,-L^*-R^*,\huaL^*,-\huaL^*-\huaR^*)$ is a matched pair of Leibniz algebras if and only if $(\g\oplus\g^*;\g,\g^*)$ is Manin triple of Leibniz algebras with the invariant bilinear form given by \eqref{phase-space}.
\end{pro}
\begin{proof}
Let $(\g,\g^*;L^*,-L^*-R^*,\huaL^*,-\huaL^*-\huaR^*)$ be a matched pair of Leibniz algebras. We only should proof that $\omega$ satisfy the invariant condition \eqref{Invariant-bilinear-forms}. For all $x,y,z\in\g$ and $\xi,\eta,\alpha\in\g^*$, we have
\begin{eqnarray*}
\omega(x+\xi,[y+\eta,z+\alpha])&=&\omega(x+\xi,[y,z]+[y,\alpha]+[\eta,z]+[\eta,\alpha])\\
                               &=&\omega(x+\xi,[y,z]+L^*_y\alpha+(-\huaL^*_{\alpha}-\huaR^*_{\alpha})y+\huaL^*_{\eta}z+(-L^*_z-R^*_z)\eta+[\eta,\alpha])\\
                               &=&\langle \xi,[y,z]\rangle-\langle \xi,\huaL^*_{\alpha}y\rangle-\langle \xi,\huaR^*_{\alpha}y\rangle+\langle \xi,\huaL^*_{\eta}z\rangle-\langle L^*_y\alpha,x\rangle+\langle L^*_z\eta,x\rangle+\langle R^*_z\eta,x\rangle-\langle [\eta,\alpha],x\rangle\\
                               &=&\langle \xi,[y,z]\rangle+\langle [\alpha,\xi],y\rangle+\langle [\xi,\alpha],y\rangle-\langle [\eta,\xi],z\rangle\\
                               &&+\langle \alpha,[y,x]\rangle-\langle \eta,[z,x]\rangle-\langle \eta,[x,z]\rangle-\langle [\eta,\alpha],x\rangle.
\end{eqnarray*}
Moreover, we have
\begin{eqnarray*}
\omega([x+\xi,z+\alpha]+[z+\alpha,x+\xi],y+\eta)&=&\omega([x,z]+[x,\alpha]+[\xi,z]+[\xi,\alpha]+[z,x]+[z,\xi]+[\alpha,x]+[\alpha,\xi],y+\eta)\\
                                                &=&\omega([x,z]+L^*_x\alpha+(-\huaL^*_{\alpha}-\huaR^*_{\alpha})x+\huaL^*_{\xi}z+(-L^*_z-R^*_z)\xi+[\xi,\alpha]\\
                                                &&+[z,x]+L^*_z\xi+(-\huaL^*_{\xi}-\huaR^*_{\xi})z+\huaL^*_{\alpha}x+(-L^*_x-R^*_x)\alpha+[\alpha,\xi],y+\eta)\\
                                                &=&\omega([x,z]-\huaR^*_{\alpha}x-R^*_z\xi+[\xi,\alpha]+[z,x]-\huaR^*_{\xi}z-R^*_x\alpha+[\alpha,\xi],y+\eta)\\
                                                &=&-\langle R^*_z\xi,y\rangle+\langle [\xi,\alpha],y\rangle-\langle R^*_x\alpha,y\rangle+\langle [\alpha,\xi],y\rangle\\
                                                &&-\langle \eta,[x,z]\rangle+\langle \eta,\huaR^*_{\alpha}x\rangle-\langle \eta,[z,x]\rangle+\langle \eta,\huaR^*_{\xi}z\rangle\\
                                                &=&\langle \xi,[y,z]\rangle+\langle [\xi,\alpha],y\rangle+\langle\alpha,[y,x]\rangle+\langle [\alpha,\xi],y\rangle\\
                                                &&-\langle \eta,[x,z]\rangle-\langle [\eta,\alpha],x\rangle-\langle \eta,[z,x]\rangle-\langle [\eta,\xi],z\rangle.
\end{eqnarray*}
Thus, $\omega$ satisfy the invariant condition \eqref{Invariant-bilinear-forms}.

On the other hand, if $(\g\oplus\g^*;\g,\g^*)$ is Manin triple of Leibniz algebras with the invariant bilinear form given by \eqref{phase-space}. For $x\in\g,\xi,\eta\in\g^*$, we have
\begin{eqnarray*}
\langle \eta,\rho_2^R(\xi)x\rangle=\omega(\eta,[x,\xi])&=&\omega([\eta,\xi]+[\xi,\eta],x)\\
                                                       &=&\langle [\eta,\xi]+[\xi,\eta],x\rangle\\
                                                       &=&\langle \huaR_{\xi}\eta+\huaL_{\xi}\eta,x\rangle\\
                                                       &=&-\langle \eta,\huaR_{\xi}^*x+\huaL_{\xi}^*x\rangle.
\end{eqnarray*}
Thus, we have $\rho_2^R=-\huaL^*-\huaR^*$. For $x,y\in\g,\xi\in\g^*$, we have
\begin{eqnarray*}
\langle \rho_1^L(x)\xi,y\rangle=-\omega(y,[x,\xi])&=&-\omega([y,\xi]+[\xi,y],x)\\
                                                  &=&-\omega(\xi,[x,y])\\
                                                  &=&-\langle\xi,L_xy\rangle\\
                                                  &=&\langle L_x^*\xi,y\rangle.
\end{eqnarray*}
Thus, we have $\rho_1^L=L^*$. Similarly, we have $\rho_1^R=-L^*-R^*$ and $\rho_2^L=\huaL^*$. Thus, $(\g\oplus\g^*;\g,\g^*)$ is the matched pair of Leibniz algebras $(\g,\g^*;L^*,-L^*-R^*,\huaL^*,-\huaL^*-\huaR^*)$. The proof is finished.
\end{proof}

\begin{defi}
An invariant twilled Leibniz algebra is a twilled Leibniz algebra $\huaG=\g_1\oplus\g_2$ with a quadratic Leibniz algebra structure $\omega$ on $\huaG$ such that $\omega(\g_1,\g_1)=\omega(\g_2,\g_2)=0.$
\end{defi}

\begin{rmk}
Manin triple of Leibniz algebras are the same natation of the invariant twilled Leibniz algebras.
\end{rmk}

\begin{pro}
Let $(\g,[\cdot,\cdot]_\g)$ be a Leibniz algebra. Then the semidirect product Leibniz algebra $\g\ltimes_{L^*,-L^*-R^*}\g^*$ is an invariant twilled Leibniz algebra.
\end{pro}

\begin{proof}
By Proposition \ref{Leibniz-Manin-triple}, $(\g\oplus\g^*;\g,\g^*)$ is Manin triple of Leibniz algebras with the invariant bilinear form given by \eqref{phase-space}. Then, $\g\ltimes_{L^*,-L^*-R^*}\g^*$ is an invariant twilled Leibniz algebra. The proof is finished.
\end{proof}

\begin{defi}
Let $\g$ be a vector space. A Leibniz bialgebra structure on $\g$ is pair of linear maps $([\cdot,\cdot]_\g,\triangle)$ such that $[\cdot,\cdot]_\g:\otimes^2\g\lon\g,\triangle:\g\lon\otimes^2\g$ and
\begin{itemize}
    \item[\rm(a)] $(\g,[\cdot,\cdot]_\g)$ is a Leibniz algebra;
    \item[\rm(b)] $(\g^*,[\cdot,\cdot]_{\g^*})$ is a Leibniz algebra, here $[\cdot,\cdot]_{\g^*}$ is the dual map of $\triangle$;
    \item[\rm(c)] For all $x,y\in\g$, we have
   $$ \tau_{12}( R_y\otimes{\Id})(\triangle x)=(R_x\otimes {\Id})(\triangle y);$$
    \item[\rm(d)] For all $x,y\in\g$, we have
$$\triangle[x,y]_\g=\big(({\Id}\otimes R_y-L_y\otimes{\Id}-R_y\otimes{\Id})\circ({\Id}+\tau_{12})\big)\triangle x+\big({\Id}\otimes L_x+L_x\otimes{\Id}\big)\triangle y.$$
  \end{itemize}
\end{defi}

\begin{thm}
Let $(\g,[\cdot,\cdot]_\g)$ and $(\g^*,[\cdot,\cdot]_{\g^*})$ be two Leibniz algebras. Then the following conditions are equivalent.
\begin{itemize}
    \item[\rm(i)] $(\g,\g^*)$ is a Leibniz bialgebra.
    \item[\rm(ii)]$(\g,\g^*;L^*,-L^*-R^*,\huaL^*,-\huaL^*-\huaR^*)$ is a matched pair of Leibniz algebras.
    \item[\rm(iii)] $(\g\oplus\g^*;\g,\g^*)$ is Manin triple of Leibniz algebras with the invariant bilinear form given by \eqref{phase-space}.
  \end{itemize}
\end{thm}

\begin{proof}
\end{proof}

Let $T:\g^*\lon\g$ be an $\huaO$-operator on $\g$ with respect to the representation $(\g^*;L^*,-L^*-R^*)$. By Proposition \ref{twisting-twilled},
we obtain that $\g\bowtie\g^*$ is a twilled Leibniz algebra. Moreover, by Proposition \ref{twisting-isomorphism}, we obtain that
$e^{\hat{T}}:\g\bowtie\g^*\lon\g\ltimes_{L^*,-L^*-R^*}\g^*$ is a Leibniz isomorphism.

\begin{pro}
Let $T:\g^*\lon\g$ be an $\huaO$-operator on $\g$ with respect to the representation $(\g^*;L^*,-L^*-R^*)$. Then $e^{\hat{T}}$ preserves the bilinear form given by \eqref{phase-space} if and only if $T^*$\footnote{Here $T^*$ is the dual map of $T$, that is,
$\langle T\xi,\eta\rangle=\langle \xi,T^*\eta\rangle.$
}$=T.$
\end{pro}

\begin{proof}
By $\hat{T}\circ\hat{T}=0$, we have $e^{\hat{T}}={\Id}+\hat{T}$. For $x,y\in\g,\xi,\eta\in\g^*$, we have
\begin{eqnarray*}
\omega(e^{\hat{T}}(x+\xi),e^{\hat{T}}(y+\eta))&=&\omega(x+\xi+T(\xi),y+\eta+T(\eta))\\
                                              &=&\omega(x+\xi,y+\eta)+\omega(x+\xi,T(\eta))+\omega(T(\xi),y+\eta)+\omega(T(\xi),T(\eta))\\
                                              &=&\omega(x+\xi,y+\eta)+\omega(\xi,T(\eta))+\omega(T(\xi),\eta)\\
                                              &=&\omega(x+\xi,y+\eta)+\langle \xi,T(\eta)\rangle-\langle \eta,T(\xi)\rangle\\
                                              &=&\omega(x+\xi,y+\eta)+\langle (T^*-T)\xi,\eta\rangle.
\end{eqnarray*}
Thus, $\omega(e^{\hat{T}}(x+\xi),e^{\hat{T}}(y+\eta))=\omega(x+\xi,y+\eta)$ if and only if $T^*=T.$ The proof is finished.
\end{proof}

\begin{thm}
Let $T$ be an $\huaO$-operator on $\g$ with respect to the representation $(\g^*;L^*,-L^*-R^*)$ and $T^*=T$. Then $e^{\hat{T}}$ is an isomorphism from invariant twilled Leibniz algebra $\g\bowtie\g^*$ to $\g\ltimes_{L^*,-L^*-R^*}\g^*$.
\end{thm}

\begin{proof}
Since $e^{\hat{T}}$ is a Leibniz algebra isomorphism and preserves the bilinear form $\omega$. We deduce that $\g\bowtie\g^*$ is an invariant twilled Leibniz algebra and $e^{\hat{T}}$ is an isomorphism of invariant twilled Leibniz algebras. The proof is finished.
\end{proof}

\begin{pro}
Let $T:\g^*\lon\g$ be an $\huaO$-operator on $\g$ with respect to the representation $(\g^*;L^*,-L^*-R^*)$. Then $\g\bowtie\g^*$ is an invariant twilled Leibniz algebra if and only if
\end{pro}

\begin{proof}
By Proposition \ref{twisting-twilled}, we obtain that
\begin{eqnarray*}
\bar{\rho}^L(\xi)x:&=&[T\xi,x]_\g+T(L^*_x\xi)+T(R^*_x\xi),\\
\bar{\rho}^R(\xi)x:&=&[x,T\xi]_\g-T(L^*_x\xi)
\end{eqnarray*}
is a representation of the Leibniz algebra $(\g^*,[\cdot,\cdot]_{\g^*})$ on the vector space $\g$. By Proposition \ref{Leibniz-Manin-triple}, we deduce that $\g\bowtie\g^*$ is an invariant twilled Leibniz algebra if and only if
$$
\huaL^*=\bar{\rho}^L,~~-\huaL^*-\huaR^*=\bar{\rho}^R.
$$
For $\xi,\eta\in\g^*,x\in\g$, we have
\begin{eqnarray*}
\langle \bar{\rho}^L(\xi)x-\huaL^*_{\xi}x,\eta\rangle&=&\langle [T\xi,x]_\g+T(L^*_x\xi)+T(R^*_x\xi),\eta\rangle-\langle \huaL^*_{\xi}x,\eta\rangle\\
                                                     &=&
\end{eqnarray*}
\end{proof}

For $k\ge1$, we define $\Psi:\otimes^{k+1}\g\longrightarrow \Hom(\otimes^k\g^*,\g)$ by
\begin{equation}\label{eq:defipsi}
 \langle\Psi(P)(\xi_1,\cdots,\xi_k),\xi_{k+1}\rangle=\langle P,\xi_1\otimes\cdots\otimes\xi_k\otimes\xi_{k+1}\rangle,\quad \forall P\in\otimes^{k+1}\g, \xi_1,\cdots, \xi_{k+1}\in\g^*,
\end{equation}
and $\Upsilon:\Hom(\otimes^k\g^*,\g)\longrightarrow \otimes^{k+1}\g$ by
\begin{equation}\label{eq:defiUpsilon}
 \langle\Upsilon(f),\xi_1\otimes\cdots\otimes\xi_k\otimes\xi_{k+1}\rangle=\langle f(\xi_1,\cdots,\xi_k),\xi_{k+1}\rangle,\quad \forall f\in\Hom(\otimes^k\g^*,\g), \xi_1,\cdots, \xi_{k+1}\in\g^*.
\end{equation}
Thus, we have $\Psi\circ\Upsilon={\Id},~~\Upsilon\circ\Psi={\Id}.$

\begin{pro}
Let $(\g,[\cdot,\cdot]_\g)$ be a Leibniz algebra. Then, there is graded Lie algebra structure on $C^*(\g^*,\g)$ as following:
\begin{eqnarray*}
&&\nonumber\{g_1,g_2\}(\xi_1,\xi_2,\cdots,\xi_{m+n})\\
&=&\sum_{k=1}^{m}\sum_{\sigma\in\mathbb S_{(k-1,n)}}(-1)^{(k-1)n+1}(-1)^{\sigma}g_1(\xi_{\sigma(1)},\cdots,\xi_{\sigma(k-1)},L^*_{g_2(\xi_{\sigma(k)},\cdots,\xi_{\sigma(k+n-1)})}\xi_{k+n},\xi_{k+n+1},\cdots,\xi_{m+n})\\
&&+\sum_{k=2}^{m+1}\sum_{\sigma\in\mathbb S_{(k-2,n,1)}\atop \sigma(k+n-2)=k+n-1}(-1)^{kn+1}(-1)^{\sigma}
g_1(\xi_{\sigma(1)},\cdots,\xi_{\sigma(k-2)},L^*_{g_2(\xi_{\sigma(k-1)},\cdots,\xi_{\sigma(k+n-2)})}\xi_{\sigma(k+n-1)},\xi_{k+n},\cdots,\xi_{m+n})\\
&&+\sum_{k=2}^{m+1}\sum_{\sigma\in\mathbb S_{(k-2,n,1)}\atop \sigma(k+n-2)=k+n-1}(-1)^{kn+1}(-1)^{\sigma}
g_1(\xi_{\sigma(1)},\cdots,\xi_{\sigma(k-2)},R^*_{g_2(\xi_{\sigma(k-1)},\cdots,\xi_{\sigma(k+n-2)})}\xi_{\sigma(k+n-1)},\xi_{k+n},\cdots,\xi_{m+n})\\
&&+\sum_{k=1}^{m}\sum_{\sigma\in\mathbb S_{(k-1,n-1)}}(-1)^{(k-1)n}(-1)^{\sigma}[g_2(\xi_{\sigma(k)},\cdots,\xi_{\sigma(k+n-2)},\xi_{k+n-1}),g_1(\xi_{\sigma(1)},\cdots,\xi_{\sigma(k-1)},\xi_{k+n},\cdots,\xi_{m+n})]_{\g_1}\\
&&+\sum_{\sigma\in\mathbb S_{(m,n-1)}}(-1)^{mn+1}(-1)^{\sigma}[g_1(\xi_{\sigma(1)},\cdots,\xi_{\sigma(m)}),g_2(\xi_{\sigma(m+1)},\cdots,\xi_{\sigma(m+n-1)},\xi_{m+n})]_{\g_1}\\
&&+\sum_{k=1}^{n}\sum_{\sigma\in\mathbb S_{(k-1,m)}}(-1)^{m(k+n-1)}(-1)^{\sigma}g_2(\xi_{\sigma(1)},\cdots,\xi_{\sigma(k-1)},L^*_{g_1(\xi_{\sigma(k)},\cdots,\xi_{\sigma(k+m-1)})}\xi_{k+m},\xi_{k+m+1},\cdots,\xi_{m+n})\\
&&+\sum_{k=1}^{n}\sum_{\sigma\in\mathbb S_{(k-1,m,1)}\atop\sigma(k+m-1)=k+m}(-1)^{m(k+n-1)}(-1)^{\sigma}
g_2(\xi_{\sigma(1)},\cdots,\xi_{\sigma(k-1)},L^*_{g_1(\xi_{\sigma(k)},\cdots,\xi_{\sigma(k-1+m)})}\xi_{\sigma(k+m)},\xi_{k+m+1},\cdots,\xi_{m+n})\\
&&+\sum_{k=1}^{n}\sum_{\sigma\in\mathbb S_{(k-1,m,1)}\atop\sigma(k+m-1)=k+m}(-1)^{m(k+n-1)}(-1)^{\sigma}
g_2(\xi_{\sigma(1)},\cdots,\xi_{\sigma(k-1)},R^*_{g_1(\xi_{\sigma(k)},\cdots,\xi_{\sigma(k-1+m)})}\xi_{\sigma(k+m)},\xi_{k+m+1},\cdots,\xi_{m+n}).
\end{eqnarray*}
\end{pro}

\begin{proof}
Since $(\g^*,L^*,-L^*-R^*)$ is representation of the Leibniz algebra $(\g,[\cdot,\cdot]_\g)$. We obtain that $\g\ltimes_{L^*,-L^*-R^*}\g^*$ is a twilled Leibniz algebra.
By Lemma \ref{twilled-DGLA-concrete}, there is a graded Lie algebra structure on $C^*(\g^*,\g)$. The proof is finished.
\end{proof}

\begin{thm}
Let $(\g,[\cdot,\cdot]_\g)$ be a Leibniz algebra. Then, there is a graded Lie algebra bracket $[\cdot,\cdot]$ on the tensor space $\oplus_{k\ge2}(\otimes^{k}\g)$. More precisely, the graded Lie bracket is given by
$$
[P,Q]:=\Upsilon\{\Psi(P),\Psi(Q)\},\,\,\,\,P\in\otimes^{m+1},Q\in\otimes^{n+1}.
$$
\end{thm}

\begin{proof}
By $\Psi\circ\Upsilon={\Id},~~\Upsilon\circ\Psi={\Id},$ we transfer the graded Lie algebra structure on $C^*(\g^*,\g)$ to that on the tensor space $\oplus_{k\ge2}(\otimes^{k}\g)$. The proof is finished.
\end{proof}
We go to study The tensor form of the $\huaO$-operator on $\g$ with respect to the representation $(\g^*;L^*,-L^*-R^*)$.

\begin{defi}
Let $(\g,[\cdot,\cdot]_\g)$ be a Leibniz algebra and $r\in\g\otimes\g$. Then equation \ref{Leibniz Yang-Baxter} is
called {\bf Leibniz Yang-Baxter equation} in $\g$.
\end{defi}

\begin{defi}
A triangular Leibniz bialgebra
\end{defi}

\begin{defi}
A pseudo-Hessian structure on a Leibniz algebra $(\g,[\cdot,\cdot]_\g)$ is a nondegenerate
symmetric bilinear form $\huaB$ satisfying the following equality:
\begin{eqnarray}
\huaB(z,[x,y]_\g)=-\huaB(y,[x,z]_\g)+\huaB(x,[y,z]_\g)+\huaB(x,[z,y]_\g).
\end{eqnarray}
\end{defi}

\begin{pro}
Let $\huaB$ be a pseudo-Hessian structure on a Leibniz algebra $(\g,[\cdot,\cdot]_\g)$. Then, there is a Leibniz-dendriform algebra structure on
$\g$ given by:
\end{pro}

\begin{thm}
Let $r\in\g\otimes\g$ be a symmetric and nondegenerate solution of the Leibniz Yang-Baxter equation in $\g$ if and only if the bilinear form $\huaB$ on
$\g$ given by
\begin{eqnarray}
\huaB(x,y):=\langle r^{-1}(x),y\rangle,~~\forall x,y\in\g,
\end{eqnarray}
is a pseudo-Hessian structure on the Leibniz algebra $(\g,[\cdot,\cdot]_\g)$.
\end{thm}

\begin{thm}
Let $(V,\rho^L,\rho^R)$ be a representation of a Leibniz algebra $(\g,[\cdot,\cdot]_\g)$. Then $T:V\lon\g$ is a $\huaO$-operator on Leibniz algebra $(\g,[\cdot,\cdot]_\g)$ with respect to the representation
$(V,\rho^L,\rho^R)$ if and only if $T+T^*$ is a $\huaO$-operator on the Leibniz algebra $\g\ltimes_{{\rho^*}^L,-{\rho^*}^L-{\rho^*}^R}V$ with respect to the dual representation of
the regular representation, that is, $T+T^*$ is a solution of the Leibniz Yang-Baxter equation in $\g\ltimes_{{\rho^*}^L,-{\rho^*}^L-{\rho^*}^R}V$.
\end{thm}

\begin{pro}
Let $(A,\rhd,\lhd)$ be a Leibniz-dendriform algebra. Then
\begin{eqnarray}
r:=\sum_{i=1}^{n}(e_i^*\otimes e_i+e_i\otimes e_i^*)
\end{eqnarray}
is a symmetric solution of Leibniz Yang-Baxter equation in $A\ltimes_{L_\lhd^*,-L_\lhd^*-R_\rhd^*}A^*$, , where $\{e_1,\cdots,e_n\}$ is a basis of
$A$ and $\{e_1^*,\cdots,e_n^*\}$ is its dual basis. Moreover, $r$ is nondegenerate and the induced $\huaB$ of $A\ltimes_{L_\lhd^*,-L_\lhd^*-R_\rhd^*}A^*$ is
given by:
\end{pro}
}

\section{The classical Leibniz Yang-Baxter equation and triangular Leibniz bialgebras}\label{sec:R}

In this section, first we construct a Leibniz bialgebra using a symmetric relative Rota-Baxter operator. Then we define the classical Leibniz Yang-Baxter equation using the graded Lie algebra obtained in  Theorem \ref{twilled-DGLA}. Its solutions are called classical Leibniz $r$-matrices. Using the twisting theory given in Section \ref{sec:K}, we define a triangular Leibniz bialgebra successfully. %We give the dual description of a classical Leibniz $r$-matrix.
Finally, we generalize a Semonov-Tian-Shansky's result in \cite{STS} about the relation between the operator form and the tensor form of a classical $r$-matrix to the context of Leibniz algebras.

Let $K:\g^*\lon\g$ be a relative Rota-Baxter operator on a Leibniz algebra $(\g,[\cdot,\cdot]_{\g})$ with respect to the representation $(\g^*;L^*,-L^*-R^*)$. Let $\Omega$ be the Leibniz bracket of the semidirect product Leibniz algebra $\g\ltimes_{L^*,-L^*-R^*}\g^*$. By Theorem \ref{twisting-twilled}, $(( \g\oplus\g^*,\Omega^K),\g,\g^*)$  is a twilled Leibniz algebra. Moreover, by Corollary \ref{twisting-isomorphism},
$e^{\hat{K}}:( \g\oplus\g^*,\Omega^K)\lon( \g\oplus\g^*,\Omega)$ is an isomorphism between  Leibniz algebras.

First by Theorem \ref{twisting-twilled}, we have

\begin{cor}\label{cor:algrep}
Let $K:\g^*\lon\g$ be a relative Rota-Baxter operator on $\g$ with respect to the representation $(\g^*;L^*,-L^*-R^*)$. Then $\g^*_K:=(\g^*,[\cdot,\cdot]_K)$ is a Leibniz algebra, where $[\cdot,\cdot]_K$ is given by
$$
[\xi,\eta]_K=L^*_{K\xi}\eta-L^*_{K\eta}\xi-R^*_{K\eta}\xi,\quad\forall\xi,\eta\in\g^*.
$$
\end{cor}

\begin{pro}
Let $K:\g^*\lon\g$ be a relative Rota-Baxter operator on  a Leibniz algebra $(\g,[\cdot,\cdot]_{\g})$ with respect to the representation $(\g^*;L^*,-L^*-R^*)$. Then $e^{\hat{K}}$ preserves the bilinear form $\omega$ given by \eqref{phase-space} if and only if $K^*$$=K.$ Here $K^*$ is the   dual map of $K$, i.e.
$\langle K\xi,\eta\rangle=\langle \xi,K^*\eta\rangle,$ for all $\xi,\eta\in\g^*.$
\end{pro}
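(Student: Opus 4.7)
The plan is to expand $e^{\hat{K}}$ explicitly and compare the two bilinear forms by a direct computation, isolating a single term that encodes the symmetry of $K$.

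First I would observe that since $\hat{K}$ sends $\g^*$ into $\g$ and annihilates $\g$, we have $\hat{K}\circ\hat{K}=0$, hence the exponential truncates to $e^{\hat{K}}=\Id+\hat{K}$. Thus for any $x+\xi, y+\eta\in\g\oplus\g^*$ one has
\[
e^{\hat{K}}(x+\xi)=x+K\xi+\xi,\qquad e^{\hat{K}}(y+\eta)=y+K\eta+\eta.
\]

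Next I would use that $\omega$ vanishes on $\g\times\g$ and on $\g^*\times\g^*$ by the very definition \eqref{phase-space}. Expanding $\omega\bigl(e^{\hat{K}}(x+\xi),e^{\hat{K}}(y+\eta)\bigr)$ by bilinearity, all four terms landing in $\g\times\g$ or $\g^*\times\g^*$ drop out, leaving only the cross pairings. A short computation then gives
\[
\omega\bigl(e^{\hat{K}}(x+\xi),e^{\hat{K}}(y+\eta)\bigr)=\omega(x+\xi,y+\eta)+\langle\xi,K\eta\rangle-\langle\eta,K\xi\rangle.
\]

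Finally, using the definition $\langle K\xi,\eta\rangle=\langle\xi,K^{*}\eta\rangle$, the correction term becomes $\langle\xi,(K-K^{*})\eta\rangle$. This vanishes for all $\xi,\eta\in\g^{*}$ if and only if $K=K^{*}$, which is precisely the claim. I do not expect any serious obstacle here: the proof is a one-line verification once $e^{\hat{K}}=\Id+\hat{K}$ is in hand, and the only care needed is to keep track of the isotropy of $\g$ and $\g^{*}$ with respect to $\omega$.
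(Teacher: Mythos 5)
Your proposal is correct and follows essentially the same route as the paper's own proof: truncate the exponential via $\hat{K}\circ\hat{K}=0$ to get $e^{\hat{K}}=\Id+\hat{K}$, expand $\omega$ bilinearly using the isotropy of $\g$ and $\g^*$, and reduce the discrepancy to $\langle \xi,(K-K^{*})\eta\rangle$, which (by nondegeneracy of the pairing) vanishes for all $\xi,\eta$ if and only if $K=K^{*}$. No gaps; the paper records the same correction term as $\langle (K^{*}-K)\xi,\eta\rangle$, which is equivalent.
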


\begin{proof}
By $\hat{K}\circ\hat{K}=0$, we have $e^{\hat{K}}={\Id}+\hat{K}$. For all $x,y\in\g,~\xi,\eta\in\g^*$, we have
\begin{eqnarray*}
\omega(e^{\hat{K}}(x+\xi),e^{\hat{K}}(y+\eta))&=&\omega(x+\xi+K(\xi),y+\eta+K(\eta))\\
                                              &=&\omega(x+\xi,y+\eta)+\omega(x+\xi,K(\eta))+\omega(K(\xi),y+\eta)+\omega(K(\xi),K(\eta))\\
                                              &=&\omega(x+\xi,y+\eta)+\omega(\xi,K(\eta))+\omega(K(\xi),\eta)\\
                                              &=&\omega(x+\xi,y+\eta)+\langle \xi,K(\eta)\rangle-\langle \eta,K(\xi)\rangle\\
                                              &=&\omega(x+\xi,y+\eta)+\langle (K^*-K)\xi,\eta\rangle.
\end{eqnarray*}
Thus, $\omega(e^{\hat{K}}(x+\xi),e^{\hat{K}}(y+\eta))=\omega(x+\xi,y+\eta)$ if and only if $K^*=K.$
\end{proof}

\begin{pro}\label{isomorphism-invariant-twilled}
Let $K:\g^*\lon\g$ be a relative Rota-Baxter operator on  a Leibniz algebra $(\g,[\cdot,\cdot]_{\g})$ with respect to the representation $(\g^*;L^*,-L^*-R^*)$ and $K^*=K$.
Then $(\g\oplus\g^*,\Omega^K)$  is a quadratic Leibniz algebra with the invariant bilinear form $\omega$ given by \eqref{phase-space} and $e^{\hat{K}}$ is an isomorphism from the quadratic Leibniz algebra $(\g\oplus\g^*,\Omega^K)$ to $(\g\oplus\g^*,\Omega)$.
 \end{pro}

\begin{proof}
Since $e^{\hat{K}}$ is a Leibniz algebra isomorphism and preserves the bilinear form $\omega$, for all $X,Y,Z\in\g\oplus\g^*$, we have
\begin{eqnarray*}
  \omega(X,\Omega^K(Y,Z))&=& \omega(X,e^{-\hat{K}}\Omega(e^{\hat{K}}Y,e^{\hat{K}}Z))=\omega(e^{\hat{K}}X,\Omega(e^{\hat{K}}Y,e^{\hat{K}}Z))\\
  &=&\omega(\Omega(e^{\hat{K}}X,e^{\hat{K}}Z)+\Omega(e^{\hat{K}}Z,e^{\hat{K}}X),e^{\hat{K}}Y)\\
  &=&\omega(\Omega^K(X,Z)+\Omega^K(Z,X),Y),
\end{eqnarray*}
which implies that $(\g\oplus\g^*,\Omega^K)$  is a quadratic   Leibniz algebra. It is obvious that $e^{\hat{K}}$ is an isomorphism from the quadratic Leibniz algebra $(\g\oplus\g^*,\Omega^K)$ to $(\g\oplus\g^*,\Omega)$.
\end{proof}

By Corollary \ref{cor:algrep}, Proposition \ref{isomorphism-invariant-twilled} and Theorem \ref{thm:equivalent}, we obtain
\begin{thm}\label{cor:bialg}
 Let $K:\g^*\lon\g$ be a relative Rota-Baxter operator on  a Leibniz algebra $(\g,[\cdot,\cdot]_{\g})$ with respect to the representation $(\g^*;L^*,-L^*-R^*)$ and $K^*=K$. Then $(\g,\g^*_K)$ is a Leibniz bialgebra, where the Leibniz algebra $\g_K^*$ is given in Corollary \ref{cor:algrep}.
\end{thm}

By Theorem \ref{twilled-DGLA}, we have
\begin{cor}\label{dual-GLA}
Let $(\g,[\cdot,\cdot]_\g)$ be a Leibniz algebra. Then   $(C^*(\g^*,\g),\{\cdot,\cdot\})$ is a graded Lie algebra, where $\{\cdot,\cdot\}$ is given by
\begin{eqnarray*}
&&\nonumber\{g_1,g_2\}(\xi_1,\xi_2,\cdots,\xi_{m+n})\\
&=&\sum_{k=1}^{m}\sum_{\sigma\in\mathbb S_{(k-1,n)}}(-1)^{(k-1)n+1}(-1)^{\sigma}g_1(\xi_{\sigma(1)},\cdots,\xi_{\sigma(k-1)},L^*_{g_2(\xi_{\sigma(k)},\cdots,\xi_{\sigma(k+n-1)})}\xi_{k+n},\xi_{k+n+1},\cdots,\xi_{m+n})\\
&&+\sum_{k=2}^{m+1}\sum_{\sigma\in\mathbb S_{(k-2,n,1)}\atop \sigma(k+n-2)=k+n-1}(-1)^{kn+1}(-1)^{\sigma}
g_1(\xi_{\sigma(1)},\cdots,\xi_{\sigma(k-2)},L^*_{g_2(\xi_{\sigma(k-1)},\cdots,\xi_{\sigma(k+n-2)})}\xi_{\sigma(k+n-1)},\xi_{k+n},\cdots,\xi_{m+n})\\
&&+\sum_{k=2}^{m+1}\sum_{\sigma\in\mathbb S_{(k-2,n,1)}\atop \sigma(k+n-2)=k+n-1}(-1)^{kn+1}(-1)^{\sigma}
g_1(\xi_{\sigma(1)},\cdots,\xi_{\sigma(k-2)},R^*_{g_2(\xi_{\sigma(k-1)},\cdots,\xi_{\sigma(k+n-2)})}\xi_{\sigma(k+n-1)},\xi_{k+n},\cdots,\xi_{m+n})\\
&&+\sum_{k=1}^{m}\sum_{\sigma\in\mathbb S_{(k-1,n-1)}}(-1)^{(k-1)n}(-1)^{\sigma}[g_2(\xi_{\sigma(k)},\cdots,\xi_{\sigma(k+n-2)},\xi_{k+n-1}),g_1(\xi_{\sigma(1)},\cdots,\xi_{\sigma(k-1)},\xi_{k+n},\cdots,\xi_{m+n})]_{\g}\\
&&+\sum_{\sigma\in\mathbb S_{(m,n-1)}}(-1)^{mn+1}(-1)^{\sigma}[g_1(\xi_{\sigma(1)},\cdots,\xi_{\sigma(m)}),g_2(\xi_{\sigma(m+1)},\cdots,\xi_{\sigma(m+n-1)},\xi_{m+n})]_{\g}\\
&&+\sum_{k=1}^{n}\sum_{\sigma\in\mathbb S_{(k-1,m)}}(-1)^{m(k+n-1)}(-1)^{\sigma}g_2(\xi_{\sigma(1)},\cdots,\xi_{\sigma(k-1)},L^*_{g_1(\xi_{\sigma(k)},\cdots,\xi_{\sigma(k+m-1)})}\xi_{k+m},\xi_{k+m+1},\cdots,\xi_{m+n})\\
&&+\sum_{k=1}^{n}\sum_{\sigma\in\mathbb S_{(k-1,m,1)}\atop\sigma(k+m-1)=k+m}(-1)^{m(k+n-1)}(-1)^{\sigma}
g_2(\xi_{\sigma(1)},\cdots,\xi_{\sigma(k-1)},L^*_{g_1(\xi_{\sigma(k)},\cdots,\xi_{\sigma(k-1+m)})}\xi_{\sigma(k+m)},\xi_{k+m+1},\cdots,\xi_{m+n})\\
&&+\sum_{k=1}^{n}\sum_{\sigma\in\mathbb S_{(k-1,m,1)}\atop\sigma(k+m-1)=k+m}(-1)^{m(k+n-1)}(-1)^{\sigma}
g_2(\xi_{\sigma(1)},\cdots,\xi_{\sigma(k-1)},R^*_{g_1(\xi_{\sigma(k)},\cdots,\xi_{\sigma(k-1+m)})}\xi_{\sigma(k+m)},\xi_{k+m+1},\cdots,\xi_{m+n}).
\end{eqnarray*}
\end{cor}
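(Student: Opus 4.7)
The plan is to obtain Corollary \ref{dual-GLA} as an immediate specialization of Proposition \ref{huaO-MC}. By Lemma \ref{lem:dualrep}, the triple $(\g^*;L^*,-L^*-R^*)$ is a representation of the Leibniz algebra $(\g,[\cdot,\cdot]_\g)$, so Proposition \ref{huaO-MC} applies verbatim with $V=\g^*$, $\rho^L=L^*$ and $\rho^R=-L^*-R^*$. This immediately yields a gLa structure on $C^*(\g^*,\g)$; the only thing that remains is to verify that the explicit bracket formula stated in the corollary is what one obtains after substituting the specific representation.

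Concretely, I would walk through the seven summands listed in Proposition \ref{huaO-MC} and rewrite each of them under the above substitution. The substitutions $\rho^L(\cdot)=L^*_{\cdot}$ leave the first and fifth summands of Proposition \ref{huaO-MC} intact (these become the first and sixth lines of the formula in Corollary \ref{dual-GLA}). The substitutions $\rho^R(\cdot)=-L^*_{\cdot}-R^*_{\cdot}$ split each $\rho^R$-summand of Proposition \ref{huaO-MC} into two terms, flipping the overall sign: thus the single summand with prefactor $(-1)^{kn}$ in Proposition \ref{huaO-MC} becomes the two summands with prefactor $(-1)^{kn+1}$ in Corollary \ref{dual-GLA} (one carrying $L^*$, one carrying $R^*$), and similarly the summand with prefactor $(-1)^{m(k+n-1)+1}$ becomes the two summands with prefactor $(-1)^{m(k+n-1)}$. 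The two summands involving $[\cdot,\cdot]_\g$ directly are transcribed without change.

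There is no genuine obstacle; the argument is a purely clerical expansion, and no new identity is needed beyond Lemma \ref{lem:dualrep} and Proposition \ref{huaO-MC}. Therefore the proof reduces to a single sentence invoking these two results, followed by the observation that the explicit formula is obtained by the substitution described above. Since this is routine, I would not grind out all seven term-by-term matches in the paper, but simply record the substitution and refer the reader to Proposition \ref{huaO-MC}.
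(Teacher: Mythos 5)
Your proposal is correct and is exactly the paper's route: the paper derives Corollary \ref{dual-GLA} with the single phrase ``By Proposition \ref{huaO-MC}'' applied to the dual representation $(\g^*;L^*,-L^*-R^*)$ of Lemma \ref{lem:dualrep}, and your sign bookkeeping --- the $\rho^L$-summands carrying over unchanged, the $(-1)^{kn}$ and $(-1)^{m(k+n-1)+1}$ summands each splitting into an $L^*$- and an $R^*$-term with the prefactor flipped to $(-1)^{kn+1}$ and $(-1)^{m(k+n-1)}$ respectively, and the two bracket summands transcribing verbatim --- matches the stated formula term by term. No gap; if anything, you make explicit the clerical expansion the paper leaves silent.
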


 In the sequel, to define the classical Leibniz Yang-Baxter equation, we transfer the above graded Lie algebra structure to the tensor space.

For $k\ge1$, we define $\Psi:\otimes^{k+1}\g\longrightarrow \Hom(\otimes^k\g^*,\g)$ by
\begin{equation}\label{eq:defipsi}
 \langle\Psi(P)(\xi_1,\cdots,\xi_k),\xi_{k+1}\rangle=\langle P,~\xi_1\otimes\cdots\otimes\xi_k\otimes\xi_{k+1}\rangle,\quad \forall P\in\otimes^{k+1}\g,~ \xi_1,\cdots, \xi_{k+1}\in\g^*,
\end{equation}
and $\Upsilon:\Hom(\otimes^k\g^*,\g)\longrightarrow \otimes^{k+1}\g$ by
\begin{equation}\label{eq:defiUpsilon}
 \langle\Upsilon(f),\xi_1\otimes\cdots\otimes\xi_k\otimes\xi_{k+1}\rangle=\langle f(\xi_1,\cdots,\xi_k),\xi_{k+1}\rangle,\quad \forall f\in\Hom(\otimes^k\g^*,\g),~ \xi_1,\cdots, \xi_{k+1}\in\g^*.
\end{equation}
Obviously we have $\Psi\circ\Upsilon={\Id},~~\Upsilon\circ\Psi={\Id}.$

\begin{thm}
Let $(\g,[\cdot,\cdot]_\g)$ be a Leibniz algebra. Then, there is a graded Lie  bracket $[[\cdot,\cdot]]$ on the graded space $\oplus_{k\ge2}(\otimes^{k}\g)$ given by
$$
[[ P,Q]]:=\Upsilon\{\Psi(P),\Psi(Q)\},\,\,\,\,\forall P\in\otimes^{m+1}\g,Q\in\otimes^{n+1}\g.
$$
\end{thm}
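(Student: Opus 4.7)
The plan is to observe that this theorem is essentially a bracket-transfer statement: we already have a graded Lie algebra $(C^*(\g^*,\g),\{\cdot,\cdot\})$ by Corollary \ref{dual-GLA}, and the maps $\Psi$ and $\Upsilon$ are mutually inverse linear isomorphisms between the graded spaces $\oplus_{k\ge 1}\otimes^{k+1}\g$ and $\oplus_{k\ge 1}C^k(\g^*,\g)$. So nothing needs to be constructed from scratch; we simply need to certify that transporting a graded Lie bracket along a degree-preserving linear isomorphism yields a graded Lie bracket.

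First I would record precisely the grading convention. Assign $\otimes^{k+1}\g$ the degree $k$ (matching the degree of $C^k(\g^*,\g)$, which is already the degree used in the gLa $(C^*(\g^*,\g),\{\cdot,\cdot\})$). Under this convention, both $\Psi$ and $\Upsilon$ are degree-$0$ linear maps, and by definition of $\Psi$ and $\Upsilon$ via the pairing in \eqref{eq:defipsi} and \eqref{eq:defiUpsilon}, they are mutually inverse; the identities $\Psi\circ\Upsilon=\Id$ and $\Upsilon\circ\Psi=\Id$ are already noted in the excerpt.

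Next I would verify the two axioms of a graded Lie algebra for $[[\cdot,\cdot]]:=\Upsilon\{\Psi(\cdot),\Psi(\cdot)\}$ on $\oplus_{k\ge 2}\otimes^k\g$. For graded skew-symmetry, for $P\in\otimes^{m+1}\g$ and $Q\in\otimes^{n+1}\g$ we have
\[
[[P,Q]]=\Upsilon\{\Psi(P),\Psi(Q)\}=-(-1)^{mn}\Upsilon\{\Psi(Q),\Psi(P)\}=-(-1)^{mn}[[Q,P]],
\]
which uses the graded skew-symmetry of $\{\cdot,\cdot\}$ together with the matching degrees $|\Psi(P)|=m$, $|\Psi(Q)|=n$. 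For the graded Jacobi identity, for $R\in\otimes^{k+1}\g$ one computes
\[
[[P,[[Q,R]]]]=\Upsilon\{\Psi(P),\Psi\Upsilon\{\Psi(Q),\Psi(R)\}\}=\Upsilon\{\Psi(P),\{\Psi(Q),\Psi(R)\}\},
\]
using $\Psi\circ\Upsilon=\Id$, and the graded Jacobi identity for $\{\cdot,\cdot\}$ then transfers term-by-term to give the graded Jacobi identity for $[[\cdot,\cdot]]$.

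There is no real obstacle: the result is essentially formal once the bracket on $C^*(\g^*,\g)$ is in hand. The only point requiring some care is the indexing/degree convention (the shift between ``tensor rank $k+1$'' and ``cochain degree $k$''), because the sign $(-1)^{mn}$ in graded skew-symmetry and the signs in the Jacobi identity must match on both sides of the isomorphism. Once this bookkeeping is set up consistently, graded skew-symmetry and graded Jacobi for $[[\cdot,\cdot]]$ are immediate consequences of the same properties for $\{\cdot,\cdot\}$ established in Corollary \ref{dual-GLA}, so the proof can be stated in just a few lines.
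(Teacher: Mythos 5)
Your proposal is correct and is exactly the paper's argument: the paper proves this theorem in one line, transporting the gLa structure on $C^*(\g^*,\g)$ along the mutually inverse maps $\Psi$ and $\Upsilon$. Your additional bookkeeping (degree conventions, explicit verification of graded skew-symmetry and Jacobi via $\Psi\circ\Upsilon=\Id$) merely spells out what the paper leaves implicit.
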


\begin{proof}
By $\Psi\circ\Upsilon={\Id},~~\Upsilon\circ\Psi={\Id},$ we transfer the graded Lie algebra structure on $C^*(\g^*,\g)$ to that on the graded  space $\oplus_{k\ge2}(\otimes^{k}\g)$. The proof is finished.
\end{proof}
The general formula of $[[P,Q]]$ is very sophisticated. But for $P=x\otimes y$ and $Q=z\otimes w$, there is an explicit expression, which is enough for our application.
\begin{lem}\label{gla-r-matrix}
For $x\otimes y,~z\otimes w\in\g\otimes\g$, we have
\begin{eqnarray}\label{2-tensor}
\nonumber[[x\otimes y,z\otimes w]]&=&z\otimes[w,x]_\g\otimes y-[w,x]_\g\otimes z\otimes y-[x,w]_\g\otimes z\otimes y+z\otimes x\otimes[w,y]_\g\\
                       &&+x\otimes z\otimes[y,w]_\g+x\otimes[y,z]_\g\otimes w-[y,z]_\g\otimes x\otimes w-[z,y]_\g\otimes x\otimes w.
\end{eqnarray}
\end{lem}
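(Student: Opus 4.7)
\medskip

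\noindent\textbf{Proof proposal.} The plan is to unfold the definition $[[P,Q]]=\Upsilon\{\Psi(P),\Psi(Q)\}$ and track each term. First I would record the explicit shape of the rank-one map associated to a simple 2-tensor: from \eqref{eq:defipsi},
\[
\Psi(x\otimes y)(\xi)=\langle x,\xi\rangle\,y,\qquad \Psi(z\otimes w)(\xi)=\langle z,\xi\rangle\,w,\qquad\forall\xi\in\g^*.
\]
This will make every internal evaluation of the form $f(L^*_{\cdot}\xi)$ or $f(R^*_{\cdot}\xi)$ collapse to a scalar times a single element of $\g$, which is what lets the final answer be written as an explicit $3$-tensor.

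Next I would specialize the graded bracket formula in Corollary \ref{dual-GLA} to the case $m=n=1$. In that case each of the six sums has at most two terms (from the shuffle sets $\mathbb S_{(0,1)}$, $\mathbb S_{(0,1,1)}$ with $\sigma(1)=2$, and $\mathbb S_{(1,0)}$), and all higher-order shuffle terms vanish. So
\[
\{\Psi(x\otimes y),\Psi(z\otimes w)\}(\xi_1,\xi_2)
\]
becomes a manageable alternating sum of eight expressions, each of the form $\langle \alpha,\xi_i\rangle\langle\beta,\xi_j\rangle\,\gamma$ with $\alpha,\beta\in\g$ and $\gamma$ a single bracket $[w,x]_\g$, $[x,w]_\g$, $[w,y]_\g$, $[y,w]_\g$, $[y,z]_\g$ or $[z,y]_\g$ in $\g$, using the defining identities $\langle L^*_a\xi,b\rangle=-\langle\xi,[a,b]_\g\rangle$ and $\langle R^*_a\xi,b\rangle=-\langle\xi,[b,a]_\g\rangle$.

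Then I would apply $\Upsilon$ using \eqref{eq:defiUpsilon}: pairing the resulting bilinear function $\{\Psi(x\otimes y),\Psi(z\otimes w)\}(\xi_1,\xi_2)$ against a third covector $\xi_3$ and reading off the $3$-tensor from the identity
\[
\langle\Upsilon(h),\xi_1\otimes\xi_2\otimes\xi_3\rangle=\langle h(\xi_1,\xi_2),\xi_3\rangle.
\]
Each of the eight scalar terms $\langle\alpha,\xi_i\rangle\langle\beta,\xi_j\rangle\langle\gamma,\xi_3\rangle$ is dual to a simple tensor of the form $\alpha\otimes\beta\otimes\gamma$ (with the indices $i,j$ determining the position of the factors $\alpha$ and $\beta$), and collecting them should reproduce exactly the eight terms on the right-hand side of \eqref{2-tensor}.

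The main obstacle will be purely bookkeeping: correctly tracking the shuffle signs $(-1)^\sigma$, the prefactors $(-1)^{(k-1)n+1}$, $(-1)^{kn+1}$ and $(-1)^{m(k+n-1)}$ in Corollary \ref{dual-GLA}, and the minus sign that comes with each star dualization $L^*$ and $R^*$, then matching the resulting signed $3$-tensors to the listed terms. Once the rank-one structure of $\Psi(x\otimes y)$ and $\Psi(z\otimes w)$ has been exploited to reduce everything to brackets in $\g$, the identification with \eqref{2-tensor} is a finite, mechanical check.
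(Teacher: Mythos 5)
Your proposal is correct and follows essentially the same route as the paper's own proof: the paper likewise uses $\Psi(x\otimes y)(\xi)=\langle x,\xi\rangle y$, specializes the bracket of Corollary~\ref{dual-GLA} to $m=n=1$ (where only the eight low-order shuffle terms survive), and then pairs against $\xi_1\otimes\xi_2\otimes\xi_3$ via $\Upsilon$, using $\langle L^*_a\xi,b\rangle=-\langle\xi,[a,b]_\g\rangle$ and $\langle R^*_a\xi,b\rangle=-\langle\xi,[b,a]_\g\rangle$ to read off the eight signed $3$-tensors of \eqref{2-tensor}. The only remaining content is the finite sign bookkeeping you already flagged, which is exactly what the paper's computation carries out.
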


\begin{proof}
For all $\xi\in\g^*$, we have $\Psi(x\otimes y)(\xi)=\langle x,\xi\rangle y$. By Corollary \ref{dual-GLA}, for all $\xi_1,\xi_2\in\g^*$, we have
\begin{eqnarray*}
\{\Psi(x\otimes y),\Psi(z\otimes w)\}(\xi_1,\xi_2)
&=&-\Psi(x\otimes y)(L^*_{\Psi(z\otimes w)\xi_1}\xi_2)+\Psi(x\otimes y)(L^*_{\Psi(z\otimes w)\xi_2}\xi_1)\\
&&+\Psi(x\otimes y)(R^*_{\Psi(z\otimes w)\xi_2}\xi_1)+[\Psi(z\otimes w)\xi_1,\Psi(x\otimes y)\xi_2]_\g\\
                                      &&+[\Psi(x\otimes y)\xi_1,\Psi(z\otimes w)\xi_2]_\g-\Psi(z\otimes w)(L^*_{\Psi(x\otimes y)\xi_1}\xi_2)\\
                                      &&+\Psi(z\otimes w)(L^*_{\Psi(x\otimes y)\xi_2}\xi_1)+\Psi(z\otimes w)(R^*_{\Psi(x\otimes y)\xi_2}\xi_1).
\end{eqnarray*}
Thus, for all $\xi_1,\xi_2,\xi_3\in\g^*$, we have
\begin{eqnarray*}
&&\langle [[x\otimes y,z\otimes w]],\xi_1\otimes\xi_2\otimes\xi_3\rangle=\langle \{\Psi(x\otimes y),\Psi(z\otimes w)\}(\xi_1,\xi_2),\xi_3\rangle\\
                                                                    &=&-\langle\Psi(x\otimes y)(L^*_{\Psi(z\otimes w)\xi_1}\xi_2),\xi_3\rangle+\langle\Psi(x\otimes y)(L^*_{\Psi(z\otimes w)\xi_2}\xi_1),\xi_3\rangle
                                                                    +\langle\Psi(x\otimes y)(R^*_{\Psi(z\otimes w)\xi_2}\xi_1),\xi_3\rangle\\
                                                                    && +\langle[\Psi(z\otimes w)\xi_1,\Psi(x\otimes y)\xi_2]_\g,\xi_3\rangle+\langle[\Psi(x\otimes y)\xi_1,\Psi(z\otimes w)\xi_2]_\g,\xi_3\rangle-\langle\Psi(z\otimes w)(L^*_{\Psi(x\otimes y)\xi_1}\xi_2),\xi_3\rangle\\
                                                                    && +\langle\Psi(z\otimes w)(L^*_{\Psi(x\otimes y)\xi_2}\xi_1),\xi_3\rangle+\langle\Psi(z\otimes w)(R^*_{\Psi(x\otimes y)\xi_2}\xi_1),\xi_3\rangle\\
                                                                    &=&-\langle z,\xi_1\rangle\langle x,L^*_w\xi_2\rangle\langle y,\xi_3\rangle+\langle z,\xi_2\rangle\langle x,L^*_w\xi_1\rangle\langle y,\xi_3\rangle+\langle z,\xi_2\rangle\langle x,R^*_w\xi_1\rangle\langle y,\xi_3\rangle\\
                                                                    && +\langle z,\xi_1\rangle\langle x,\xi_2\rangle\langle [w,y]_\g,\xi_3\rangle+\langle x,\xi_1\rangle\langle z,\xi_2\rangle\langle [y,w]_\g,\xi_3\rangle-\langle x,\xi_1\rangle\langle z,L^*_y\xi_2\rangle\langle w,\xi_3\rangle\\
                                                                    && +\langle x,\xi_2\rangle\langle z,L^*_y\xi_1\rangle\langle w,\xi_3\rangle+\langle x,\xi_2\rangle\langle z,R^*_y\xi_1\rangle\langle w,\xi_3\rangle\\
                                                                    &=&\langle z,\xi_1\rangle\langle[w,x]_\g,\xi_2\rangle\langle y,\xi_3\rangle-\langle [w,x]_\g,\xi_1\rangle\langle z,\xi_2\rangle\langle y,\xi_3\rangle-\langle[x,w]_\g,\xi_1\rangle\langle z,\xi_2\rangle\langle y,\xi_3\rangle\\
                                                                    &&+\langle z,\xi_1\rangle\langle x,\xi_2\rangle\langle [w,y]_\g,\xi_3\rangle +\langle x,\xi_1\rangle\langle z,\xi_2\rangle\langle [y,w]_\g,\xi_3\rangle+\langle x,\xi_1\rangle\langle [y,z]_\g,\xi_2\rangle\langle w,\xi_3\rangle\\
                                                                    &&-\langle[y,z]_\g,\xi_1\rangle\langle x,\xi_2\rangle\langle w,\xi_3\rangle-\langle[z,y]_\g,\xi_1\rangle\langle x,\xi_2\rangle\langle w,\xi_3\rangle,
\end{eqnarray*}
which implies that \eqref{2-tensor} holds.
\end{proof}

Moreover, we can obtain the tensor form of a relative Rota-Baxter operator on $\g$ with respect to the representation $(\g^*;L^*,-L^*-R^*)$.

\begin{pro}\label{o-operator-tensor-form}
Let $K:\g^*\lon\g$ be a linear map.
 \begin{itemize}
   \item[\rm(i)] $K$ is a relative Rota-Baxter operator on $\g$ with respect to the representation $(\g^*;L^*,-L^*-R^*)$ if and only if the tensor form $\bar{K}=\Upsilon(K)\in\g\otimes\g$ satisfies
\begin{eqnarray*}
\nonumber[[\bar{K},\bar{K}]] = 0.
\end{eqnarray*}
 \item[\rm(ii)] $K=K^*$ if and only if   $\bar{K}=\tau_{12}(\bar{K})$, that is, $\bar{K}\in\Sym^2(\g)$.
 \end{itemize}
\end{pro}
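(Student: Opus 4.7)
The plan is to reduce both statements to the identification between the tensor space and the cochain space provided by the maps $\Psi$ and $\Upsilon$. The key observation, which I would establish first, is that under $\Psi:\g\otimes\g\to\Hom(\g^*,\g)$ the element $\bar K$ corresponds precisely to the operator $K$. Indeed, for any $\xi,\eta\in\g^*$,
\begin{eqnarray*}
\langle\Psi(\bar K)(\xi),\eta\rangle=\langle\bar K,\xi\otimes\eta\rangle=\sum_i\langle e_i,\xi\rangle\langle K(e_i^*),\eta\rangle=\langle K(\xi),\eta\rangle,
\end{eqnarray*}
where I used $\sum_i\langle e_i,\xi\rangle e_i^*=\xi$. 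Hence $\Psi(\bar K)=K$, and since $\Upsilon=\Psi^{-1}$, we have $\bar K=\Upsilon(K)$.

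For part (i), by the very definition of the transferred bracket,
\begin{eqnarray*}
[[\bar K,\bar K]]=\Upsilon\{\Psi(\bar K),\Psi(\bar K)\}=\Upsilon\{K,K\}.
\end{eqnarray*}
Since $\Upsilon$ is a linear isomorphism, $[[\bar K,\bar K]]=0$ if and only if $\{K,K\}=0$. Applying Proposition \ref{huaO-MC} to the semidirect product Leibniz algebra $\g\ltimes_{L^*,-L^*-R^*}\g^*$, whose associated gLa structure is precisely the one of Corollary \ref{dual-GLA}, the Maurer--Cartan condition $\{K,K\}=0$ is equivalent to $K$ being a \kup~ on $\g$ with respect to $(\g^*;L^*,-L^*-R^*)$. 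This handles (i) with essentially no computation beyond the identification $\Psi(\bar K)=K$.

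For part (ii), I would compute the dual tensor directly. For all $\xi,\eta\in\g^*$,
\begin{eqnarray*}
\langle\tau_{12}(\bar K),\xi\otimes\eta\rangle=\sum_i\langle K(e_i^*),\xi\rangle\langle e_i,\eta\rangle=\sum_i\langle e_i^*,K^*(\xi)\rangle\langle e_i,\eta\rangle=\langle K^*(\xi),\eta\rangle,
\end{eqnarray*}
using the defining identity $\langle K(\xi),\eta\rangle=\langle\xi,K^*(\eta)\rangle$ and the dual basis relation $\sum_i\langle e_i^*,x\rangle e_i=x$. Comparing with $\langle\bar K,\xi\otimes\eta\rangle=\langle K(\xi),\eta\rangle$, the equality $\bar K=\tau_{12}(\bar K)$ holds if and only if $\langle K(\xi),\eta\rangle=\langle K^*(\xi),\eta\rangle$ for all $\xi,\eta$, i.e.\ $K=K^*$.

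There is no real obstacle here: once the bracket $[[\cdot,\cdot]]$ on tensors is defined via transport of structure from $\{\cdot,\cdot\}$, both assertions are purely bookkeeping in the dual basis, and the substantive content is already packaged into Proposition \ref{huaO-MC} and Corollary \ref{dual-GLA}. The only care needed is to verify the normalization $\Psi(\bar K)=K$, which fixes the conventions once and for all.
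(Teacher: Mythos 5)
Your proof is correct and follows exactly the paper's route: the paper likewise argues that $\Psi$ is a gLa isomorphism, so $[[\bar K,\bar K]]=0$ iff $\{K,K\}=0$, with the latter equivalent to the Kupershmidt condition by Proposition \ref{huaO-MC}, and dismisses part (ii) as obvious. You merely make explicit the two points the paper leaves implicit --- the normalization $\Psi(\bar K)=K$ and the dual-basis computation showing $\tau_{12}(\bar K)$ corresponds to $K^*$ --- both of which check out.
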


\begin{proof}
Since $\Psi$ is graded Lie algebra isomorphism from $(\oplus_{k\ge2}(\otimes^{k}\g),[[\cdot,\cdot]])$ to $(C^*(\g^*,\g),\{\cdot,\cdot\})$, we deduce that $\{K,K\}=0$ if and only if $[[\bar{K},\bar{K}]]=0.$ The other conclusion is obvious.
\end{proof}

 \emptycomment{
\begin{proof}
Let $\xi_1,\xi_2\in\g^*$, we have
\begin{eqnarray*}
\langle T\xi_1,\xi_2\rangle=\langle \xi_1,T^*\xi_2\rangle=\langle T^*\xi_2,\xi_1\rangle=\langle \bar{T^*},\xi_2\otimes\xi_1\rangle=\langle \tau_{12}(\bar{T^*}),\xi_1\otimes\xi_2\rangle.
\end{eqnarray*}
On the other hand, we have
\begin{eqnarray*}
\langle T\xi_1,\xi_2\rangle=\langle \bar{T},\xi_1\otimes\xi_2\rangle.
\end{eqnarray*}
Thus, we have $\bar{T}=\tau_{12}(\bar{T^*})$. We deduce that $T=T^*$ if and only if the tensor form $\bar{T}=\tau_{12}(\bar{T})$. The proof is finished.
\end{proof}
}

\begin{defi}
Let $(\g,[\cdot,\cdot]_\g)$ be a Leibniz algebra and $r\in\Sym^2(\g)$. Then equation
\begin{equation}\label{Leibniz Yang-Baxter}
~[[r,r]]=0
\end{equation} is
called the {\bf classical Leibniz Yang-Baxter equation} in $\g$ and $r$ is called a {\bf classical Leibniz $r$-matrix}.
\end{defi}

\begin{ex}\label{example-8}{\rm
%\yh{Also find a 3-dim Leibniz algebra and compute symmetric solution of classical Leibniz Yang-Baxter equation}
Consider the $2$-dimensional Leibniz algebra $(\g,[\cdot,\cdot])$  defined with respect to a basis $\{e_1,e_2\}$  by
\begin{eqnarray*}
[e_1,e_1]=0,\quad [e_1,e_2]=0,\quad [e_2,e_1]=e_1,\quad [e_2,e_2]=e_1.
\end{eqnarray*}
 By Lemma \ref{gla-r-matrix}, we have
\begin{eqnarray*}
&&[[e_1\otimes e_1,e_1\otimes e_1]]=0,\\
&&[[e_1\otimes e_1,e_1\otimes e_2]]=e_1\otimes e_1\otimes e_1,\\
&&[[e_1\otimes e_1,e_2\otimes e_1]]=-e_1\otimes e_1\otimes e_1,\\
&&[[e_1\otimes e_1,e_2\otimes e_2]]=2e_2\otimes e_1\otimes e_1-e_1\otimes e_2\otimes e_1-e_1\otimes e_1\otimes e_2,\\
&&[[e_1\otimes e_2,e_1\otimes e_2]]=2e_1\otimes e_1\otimes e_1,\\
&&[[e_1\otimes e_2,e_2\otimes e_1]]=e_1\otimes e_2\otimes e_1-e_1\otimes e_1\otimes e_1,\\
&&[[e_1\otimes e_2,e_2\otimes e_2]]=e_2\otimes e_1\otimes e_2-e_1\otimes e_2\otimes e_2+e_2\otimes e_1\otimes e_1+e_1\otimes e_2\otimes e_1-e_1\otimes e_1\otimes e_2,\\
&&[[e_2\otimes e_1,e_2\otimes e_1]]=-2e_1\otimes e_2\otimes e_1,\\
&&[[e_2\otimes e_1,e_2\otimes e_2]]=e_2\otimes e_1\otimes e_1-2e_1\otimes e_2\otimes e_1+e_2\otimes e_2\otimes e_1-e_1\otimes e_2\otimes e_2,\\
&&[[e_2\otimes e_2,e_2\otimes e_2]]=2e_2\otimes e_1\otimes e_2-4e_1\otimes e_2\otimes e_2+2e_2\otimes e_2\otimes e_1.
\end{eqnarray*}
Therefor, for all $r=ae_1\otimes e_1+be_1\otimes e_2+be_2\otimes e_1+ce_2\otimes e_2\in\Sym^2(\g)$,   we have
\begin{eqnarray*}
&&[[ae_1\otimes e_1+be_1\otimes e_2+be_2\otimes e_1+ce_2\otimes e_2,ae_1\otimes e_1+be_1\otimes e_2+be_2\otimes e_1+ce_2\otimes e_2]]\\
&=&ab[[e_1\otimes e_1,e_1\otimes e_2]]+ab[[e_1\otimes e_1,e_2\otimes e_1]]+ac[[e_1\otimes e_1,e_2\otimes e_2]]+ba[[e_1\otimes e_2,e_1\otimes e_1]]\\
&&+b^2[[e_1\otimes e_2,e_1\otimes e_2]]+b^2[[e_1\otimes e_2,e_2\otimes e_1]]+bc[[e_1\otimes e_2,e_2\otimes e_2]]+ba[[e_2\otimes e_1,e_1\otimes e_1]]\\
&&+b^2[[e_2\otimes e_1,e_1\otimes e_2]]+b^2[[e_2\otimes e_1,e_2\otimes e_1]]+bc[[e_2\otimes e_1,e_2\otimes e_2]]+ca[[e_2\otimes e_2,e_1\otimes e_1]]\\
&&+cb[[e_2\otimes e_2,e_1\otimes e_2]]+cb[[e_2\otimes e_2,e_2\otimes e_1]]+c^2[[e_2\otimes e_2,e_2\otimes e_2]]\\
&=&2ab[[e_1\otimes e_1,e_1\otimes e_2]]+2ab[[e_1\otimes e_1,e_2\otimes e_1]]+2ac[[e_1\otimes e_1,e_2\otimes e_2]]\\
&&+b^2[[e_1\otimes e_2,e_1\otimes e_2]]+2b^2 [[e_1\otimes e_2,e_2\otimes e_1]]+2bc [[e_1\otimes e_2,e_2\otimes e_2]]\\
&&+b^2[[e_2\otimes e_1,e_2\otimes e_1]]+2bc[[e_2\otimes e_1,e_2\otimes e_2]]+c^2[[e_2\otimes e_2,e_2\otimes e_2]]\\
&=&2abe_1\otimes e_1\otimes e_1-2abe_1\otimes e_1\otimes e_1+2ac\Big(2e_2\otimes e_1\otimes e_1-e_1\otimes e_2\otimes e_1-e_1\otimes e_1\otimes e_2\Big)\\
&&+2b^2e_1\otimes e_1\otimes e_1+2b^2\Big(e_1\otimes e_2\otimes e_1-e_1\otimes e_1\otimes e_1\Big)\\
&&+2bc\Big(e_2\otimes e_1\otimes e_2-e_1\otimes e_2\otimes e_2+e_2\otimes e_1\otimes e_1+e_1\otimes e_2\otimes e_1-e_1\otimes e_1\otimes e_2\Big)-2b^2e_1\otimes e_2\otimes e_1\\
&&+2bc\Big(e_2\otimes e_1\otimes e_1-2e_1\otimes e_2\otimes e_1+e_2\otimes e_2\otimes e_1-e_1\otimes e_2\otimes e_2\Big)\\
&&+c^2\Big(2e_2\otimes e_1\otimes e_2-4e_1\otimes e_2\otimes e_2+2e_2\otimes e_2\otimes e_1\Big),\\
&=&4c(a+b)e_2\otimes e_1\otimes e_1-2c(a+b)e_1\otimes e_2\otimes e_1-2c(a+b)e_1\otimes e_1\otimes e_2\\&&+2c(b+c)e_2\otimes e_1\otimes e_2-4c(b+c)e_1\otimes e_2\otimes e_2+2c(b+c)e_2\otimes e_2\otimes e_1.
\end{eqnarray*}
\begin{itemize}
     \item[\rm(i)] If $c=0$, then  any $r=ae_1\otimes e_1+b(e_1\otimes e_2+e_2\otimes e_1)$ is a classical Leibniz $r$-matrix;
     \item[\rm(ii)] If $c\not=0$, then $[[r,r]]=0$ if and only if $a=c=-b$. Thus, any $$r=c\Big(e_1\otimes e_1-e_1\otimes e_2-e_2\otimes e_1+e_2\otimes e_2\Big)$$ is a classical Leibniz $r$-matrix.
   \end{itemize}
   }
\end{ex}

\begin{rmk}
For $r=ae_1\otimes e_1+be_1\otimes e_2+be_2\otimes e_1+ce_2\otimes e_2$, we have
$
r^{\sharp}(e_1^*,e_2^*)=(e_1,e_2)\left(\begin{array}{cc}a&b\\
                                                        b&c\end{array}\right).
$
The above classical Leibniz $r$-matrices actually correspond to   symmetric relative Rota-Baxter operators given in Example \ref{example-6}.
\end{rmk}

\begin{cor}
Let $(\g,[\cdot,\cdot]_\g)$ be a Leibniz algebra and $r\in\Sym^2(\g)$ a solution of the classical Leibniz Yang-Baxter equation in $\g$. Then $(\g,\g^*_{r^\sharp})$ is a Leibniz bialgebra, where $r^\sharp=\Psi(r):\g^*\lon\g$ is defined by $\langle r^\sharp(\xi),\eta\rangle=\langle r,\xi\otimes \eta\rangle$ for all $\xi,\eta\in\g^*$.
\end{cor}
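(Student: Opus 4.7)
The plan is to deduce this corollary by combining Proposition \ref{o-operator-tensor-form} with Corollary \ref{cor:bialg}, so almost no new work is required beyond unpacking definitions.

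First, I would invoke Proposition \ref{o-operator-tensor-form}(i) to translate the hypothesis $[[r,r]]=0$ into the operator statement that $r^\sharp:\g^*\lon\g$ is a \kup~on $(\g,[\cdot,\cdot]_\g)$ with respect to the dual representation $(\g^*;L^*,-L^*-R^*)$. Here I am using that the map $\Psi$ of \eqref{eq:defipsi} sends the symmetric 2-tensor $r$ to $r^\sharp$, which is the content of the definition $\langle r^\sharp(\xi),\eta\rangle=\langle r,\xi\otimes\eta\rangle$. Next, I would apply Proposition \ref{o-operator-tensor-form}(ii) to the symmetry hypothesis $r\in\Sym^2(\g)$: this gives $\tau_{12}(r)=r$, hence $(r^\sharp)^*=r^\sharp$.

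With these two facts in hand, $r^\sharp$ satisfies exactly the hypotheses of Corollary \ref{cor:bialg}: it is a \kup~on $\g$ with respect to $(\g^*;L^*,-L^*-R^*)$, and it is self-dual. Corollary \ref{cor:bialg} then immediately yields that $(\g,\g^*_{r^\sharp})$ is a Leibniz bialgebra, where the Leibniz bracket on $\g^*$ is the one described in Corollary \ref{cor:algrep}, namely
\begin{eqnarray*}
[\xi,\eta]_{r^\sharp}=L^*_{r^\sharp\xi}\eta-L^*_{r^\sharp\eta}\xi-R^*_{r^\sharp\eta}\xi,\quad\forall\xi,\eta\in\g^*.
\end{eqnarray*}

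There is no substantive obstacle here; the entire content lies in the two preceding results, which themselves rest on the twisting machinery of Section \ref{sec:T} and the equivalence theorem (Theorem \ref{thm:equivalent}). If I were to expand the argument at all, it would only be to remark explicitly on the chain of equivalences: a classical Leibniz $r$-matrix $\Leftrightarrow$ a self-dual \kup~with respect to $(\g^*;L^*,-L^*-R^*)$ $\Leftrightarrow$ a quadratic twilled Leibniz algebra structure on $\g\oplus\g^*$ obtained by twisting $\g\ltimes_{L^*,-L^*-R^*}\g^*$ by $r^\sharp$ $\Leftrightarrow$ a Leibniz bialgebra $(\g,\g^*_{r^\sharp})$. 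Thus the proof is essentially one line, and I would simply write: ``It follows immediately from Proposition \ref{o-operator-tensor-form} and Corollary \ref{cor:bialg}.''
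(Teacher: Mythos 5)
Your proposal is correct and follows exactly the paper's own route: the paper likewise uses $r\in\Sym^2(\g)$ and $[[r,r]]=0$ (via Proposition \ref{o-operator-tensor-form}) to conclude that $r^\sharp$ is a \kup~with $(r^\sharp)^*=r^\sharp$, and then applies Theorem \ref{isomorphism-invariant-twilled} together with Corollary \ref{cor:bialg}. Your explicit citation of Proposition \ref{o-operator-tensor-form}(ii) for the self-duality, which the paper leaves implicit, is a harmless clarification rather than a departure.
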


\begin{proof}
By $r\in\Sym^2(\g)$ and $[[r,r]]=0$, we deduce that $r^\sharp:\g^*\lon\g$ is a relative Rota-Baxter operator on $\g$ with respect to the representation $(\g^*;L^*,-L^*-R^*)$ and $(r^\sharp)^*=r^\sharp$. By Theorem \ref{cor:bialg}, we obtain that $(\g,\g^*_{r^\sharp})$ is a Leibniz bialgebra.
\end{proof}

\begin{defi}
Let $(\g,[\cdot,\cdot]_\g)$ be a Leibniz algebra and $r\in\Sym^2(\g)$ a solution of the classical Leibniz Yang-Baxter equation in $\g$. We call the Leibniz bialgebra $(\g,\g^*_{r^\sharp})$ the {\bf triangular Leibniz bialgebra} associated to the classical Leibniz $r$-matrix $r$.
\end{defi}

\begin{rmk}
   In Section \ref{sec:B}, we define a Leibniz bialgebra, which is equivalent to a Manin triple of Leibniz algebras. Note that there is no cohomology theory can be used in the theory of Leibniz bialgebras. Thus, there is not an obvious way to define a ``coboundary Leibniz bialgebra''. Nevertheless, using the twisting method in the theory of twilled Leibniz algebras, we define triangular Leibniz bialgebras successfully.
\end{rmk}

In the Lie algebra context, we know that the dual description of a classical $r$-matrix is a symplectic structure on a Lie algebra. Now we  investigate the dual description of a classical Leibniz $r$-matrix. A symmetric 2-form $\huaB\in\Sym^2(\g^*)$ on a Leibniz algebra $(\g,[\cdot,\cdot]_\g)$ induces a linear map $\huaB^{\natural}:\g\lon\g^*$ by
\begin{eqnarray*}
\langle\huaB^{\natural}(x),y\rangle:=\huaB(x,y),\,\,\,\,\forall x,y\in\g.
\end{eqnarray*}
$\huaB$ is said to be nondegenerate if $\huaB^\natural:\g\lon\g^* $ is an isomorphism.  Similarly, $r\in\Sym^2(\g)$ is said to be nondegenerate if $r^\sharp:\g^*\lon\g$ is an isomorphism.

\begin{pro}\label{Hessian-structure-tensor-form}
 $r\in\Sym^2(\g)$ is a   nondegenerate solution of the classical Leibniz Yang-Baxter equation in a Leibniz algebra $\g$ if and only if the symmetric nondegenerate bilinear form $\huaB$ on
$\g$ defined by
\begin{eqnarray}
\huaB(x,y):=\langle (r^\sharp)^{-1}(x),y\rangle,\quad\forall x,y\in\g,
\end{eqnarray}
 satisfies the following ``closed'' condition
 \begin{eqnarray}\label{Hessian-structure}
\huaB(z,[x,y]_\g)=-\huaB(y,[x,z]_\g)+\huaB(x,[y,z]_\g)+\huaB(x,[z,y]_\g).
\end{eqnarray}
\end{pro}

\begin{proof}
Let $r\in\Sym^2(\g)$ be   nondegenerate. It is obvious that $\huaB$ is symmetric and nondegenerate.

Since $r^\sharp:\g^*\lon\g$ is an invertible linear map, for all $x,y,z\in\g$, there are $\xi_1,\xi_2,\xi_3\in\g^*$ such that $r^\sharp(\xi_1)=x,~r^\sharp(\xi_2)=y$ and $r^\sharp(\xi_3)=z$. Since $r^\sharp$ is a relative Rota-Baxter operator on $\g$ with respect to the representation $(\g^*;L^*,-L^*-R^*)$ and $(r^\sharp)^*=r^\sharp$, we have
\begin{eqnarray*}
\huaB(z,[x,y]_\g)&=&\langle (r^\sharp)^{-1}(z),[x,y]_\g\rangle=\langle \xi_3,[r^\sharp(\xi_1),r^\sharp(\xi_2)]_\g\rangle\\
                                                   &=&\langle \xi_3,r^\sharp(L^*_{r^\sharp(\xi_1)}\xi_2)\rangle-\langle \xi_3,r^\sharp(L^*_{r^\sharp(\xi_2)}\xi_1)\rangle-\langle \xi_3,r^\sharp(R^*_{r^\sharp(\xi_2)}\xi_1)\rangle\\
                                                   &=&\langle r^\sharp(\xi_3),L^*_{r^\sharp(\xi_1)}\xi_2\rangle-\langle r^\sharp(\xi_3),L^*_{r^\sharp(\xi_2)}\xi_1\rangle-\langle r^\sharp(\xi_3),R^*_{r^\sharp(\xi_2)}\xi_1\rangle\\
                                                   &=&-\langle [r^\sharp(\xi_1),r^\sharp(\xi_3)]_\g,\xi_2\rangle+\langle [r^\sharp(\xi_2),r^\sharp(\xi_3)]_\g,\xi_1\rangle+\langle [r^\sharp(\xi_3),r^\sharp(\xi_2)]_\g,\xi_1\rangle\\
                                                   &=&-\huaB([x,z]_\g,y)+\huaB([z,y]_\g,x)+\huaB([y,z]_\g,x).
\end{eqnarray*}
Thus, $\huaB$ satisfies \eqref{Hessian-structure}.
\end{proof}

At the end of this section, we generalize a Semonov-Tian-Shansky's result in \cite{STS} to the context of Leibniz algebras.

\begin{lem}\label{iso-rep}
Let $(\g,[\cdot,\cdot]_\g,\omega)$ be a quadratic Leibniz algebra. Then $\omega^{\natural}:\g\lon\g^*$ is an isomorphism from the regular representation $(\g;L,R)$ to its dual representation $(\g^*;L^*,-L^*-R^*)$.
\end{lem}

\begin{proof}
For all $x,y,z\in\g$, by \eqref{superfluous-condition} we have
\begin{eqnarray*}
\langle\omega^{\natural}(L_xy)-L^*_x\omega^{\natural}(y),z\rangle=\omega([x,y]_\g,z)+\langle \omega^{\natural}(y),L_xz\rangle=\omega([x,y]_\g,z)+\omega(y,[x,z]_\g)=0.
\end{eqnarray*}
Thus, we have $\omega^{\natural}\circ L_x=L^*_x\circ\omega^{\natural}.$ By \eqref{Invariant-bilinear-forms}, we have
\begin{eqnarray*}
\langle\omega^{\natural}(R_xy)-(-L^*_x-R^*_x)\omega^{\natural}(y),z\rangle&=&\omega([y,x]_\g,z)-\langle\omega^{\natural}(y),(L_x+R_x)z\rangle\\
                                                                           &=&\omega([y,x]_\g,z)-\omega(y,[x,z]_\g+[z,x]_\g)\\
                                                                           &=&0.
\end{eqnarray*}
Thus, we have $\omega^{\natural}\circ R_x=(-L^*_x-R^*_x)\circ\omega^{\natural}.$ Therefore,   $\omega^{\natural}$ is an isomorphism between representations. The proof is finished.
\end{proof}

\begin{thm}
Let $(\g,[\cdot,\cdot]_\g,\omega)$ be a quadratic Leibniz algebra and $K:\g^*\lon\g$ a linear map. Then $K$ is a relative Rota-Baxter operator on   $(\g,[\cdot,\cdot]_{\g})$ with respect to the representation $(\g^*;L^*,-L^*-R^*)$ if and only if $K\circ\omega^{\natural}$ is a Rota-Baxter operator on $(\g,[\cdot,\cdot]_{\g})$.
\end{thm}
\begin{proof}
For all $x,y\in\g$, by Lemma \ref{iso-rep}, we have
\begin{eqnarray*}
(K\circ\omega^{\natural})([K\omega^{\natural}(x),y]_\g+[x,K\omega^{\natural}(y)]_\g)&=&K\Big(\omega^{\natural}(L_{K\omega^{\natural}(x)}y)+\omega^{\natural}R_{K\omega^{\natural}(y)}x\Big)\\
                                                                                    &=&K\Big(L^*_{K\omega^{\natural}(x)}\omega^{\natural}(y)-L^*_{K\omega^{\natural}(y)}\omega^{\natural}(x)
                                                                                    -R^*_{K\omega^{\natural}(y)}\omega^{\natural}(x)\Big),
\end{eqnarray*}
which implies the conclusion.
\end{proof}

\begin{cor}
Let $(\g,[\cdot,\cdot]_\g,\omega)$ be a quadratic Leibniz algebra. Then $r\in\Sym^2(\g)$ is a solution of the classical Leibniz Yang-Baxter equation in $\g$ if and only if    $r^{\sharp}\circ\omega^{\natural}$ is a Rota-Baxter operator on   $(\g,[\cdot,\cdot]_{\g})$, that is,
$$
[r^{\sharp}\circ\omega^{\natural}(x),r^{\sharp}\circ\omega^{\natural}(y)]_\g=(r^{\sharp}\circ\omega^{\natural})[r^{\sharp}\circ\omega^{\natural}(x),y]_\g+(r^{\sharp}\circ\omega^{\natural})[x,r^{\sharp}\circ\omega^{\natural}(y)]_\g.
$$
\end{cor}

\begin{rmk}
  In \cite{Rmatrix}, the authors defined $R_{\pm}$-matrix for Leibniz algebras as a direct generalization of Semonov-Tian-Shansky's approach in \cite{STS}, without any bialgebra theory for Leibniz algebras. It is straightforward to see that their $R_+$-matrices in a  Leibniz algebra are simply Rota-Baxter operator~on the Leibniz algebra. By the above corollary, if $r$ is a classical Leibniz $r$-matrix in a quadratic Leibniz algebra $(\g,[\cdot,\cdot]_\g,\omega)$, then $r^{\sharp}\circ\omega^{\natural}$ is an $R_+$-matrix introduced in \cite{Rmatrix}.
\end{rmk}

 Our bialgebra theory for Leibniz algebras enjoys many good properties parallelling to that for Lie algebras. This justifies its correctness.

\section{Solutions of the classical Leibniz Yang-Baxter equations}\label{sec:S}

In this section, first we show that a relative Rota-Baxter operator on a Leibniz $(\g,[\cdot,\cdot]_\g)$ with respect to a general representation $(V;\rho^L,\rho^R)$ gives rise to a solution of the classical Leibniz Yang-Baxter equation in a larger Leibniz algebra. Then we introduce the notion of a Leibniz-dendriform algebra, which is the underlying algebraic structure of a relative Rota-Baxter operator on a Leibniz algebra. This type of algebras  play important role in our study of the classical Leibniz Yang-Baxter equation.
%Namely, we can obtain a natural solution of the classical Leibniz Yang-Baxter equation in the semidirect product Leibniz algebra obtained by a Leibniz-dendriform algebra.
There is a natural solution of the classical Leibniz Yang-Baxter equation in the semidirect product Leibniz algebra $A\ltimes_{L_\lhd^*,-L_\lhd^*-R_\rhd^*} A^*$ associated to a Leibniz-dendriform algebra $(A,\triangleright,\triangleleft)$.

\begin{lem}\label{important-dual-rep}
Let $(V;\rho^L,\rho^R)$ be a representation of a Leibniz algebra $(\g,[\cdot,\cdot]_\g)$. Then, the dual representation $(\g^*\oplus V;L^*_{\ltimes},-L^*_{\ltimes}-R^*_{\ltimes})$ of
the regular representation $(\g\oplus V^*;L_\ltimes,R_\ltimes)$ of the semidirect product Leibniz algebra $\g\ltimes_{(\rho^L)^*,-(\rho^L)^*-(\rho^R)^*}V^*$ have the following properties:
$$\begin{array}{llll}
L^*_{\ltimes}(x)\xi=L^*_x\xi\in\g^*,& L^*_{\ltimes}(x)v=\rho^L(x)v\in V, & L^*_{\ltimes}(\chi)\xi=0,& L^*_{\ltimes}(\chi)v\in\g^*,\\
(-L^*_{\ltimes}-R^*_{\ltimes})(x)\xi=(-L^*_x-R^*_x)\xi,&(-L^*_{\ltimes}-R^*_{\ltimes})(x)v=\rho^R(x)v\in V,&&
\\
(-L^*_{\ltimes}-R^*_{\ltimes})(\chi)\xi=0,&(-L^*_{\ltimes}-R^*_{\ltimes})(\chi)v\in\g^*,&&
\end{array}
$$
for all $x\in\g,~v\in V,~\xi\in\g^*,~\chi\in V^*$.
\end{lem}

\emptycomment{
\begin{eqnarray*}
&&L^*_{\ltimes}(x)\xi_{\g}=L^*_x\xi_{\g}\in\g^*,\quad L^*_{\ltimes}(x)v=\rho^L(x)v\in V,\quad L^*_{\ltimes}(\xi_V)\xi_\g=0,\quad L^*_{\ltimes}(\xi_V)v\in\g^*,\\
&&(-L^*_{\ltimes}-R^*_{\ltimes})(x)\xi_{\g}=(-L^*_x-R^*_x)\xi_\g,\quad(-L^*_{\ltimes}-R^*_{\ltimes})(x)v=\rho^R(x)v\in V,\quad(-L^*_{\ltimes}-R^*_{\ltimes})(\xi_V)\xi_\g=0,\\
&&(-L^*_{\ltimes}-R^*_{\ltimes})(\xi_V)v\in\g^*.
\end{eqnarray*}
Here $x,y\in\g,u,v\in V,\xi,\eta\in\g^*,\chi,\psi\in V^*$.
}

\emptycomment{
\begin{proof}
For $x\in\g,\xi_\g\in\g^*,\xi_V\in V^*$, we have
\begin{eqnarray*}
\langle L^*_{\ltimes}(x)\xi_\g,\xi_V\rangle=-\langle \xi_\g,[x,\xi_V]_{\ltimes}\rangle=0.
\end{eqnarray*}
Thus, we obtain $L^*_{\ltimes}(x)\xi_\g\in\g^*$. Moreover, for $x,y\in\g,\xi_\g\in\g^*$, we have
\begin{eqnarray*}
\langle L^*_{\ltimes}(x)\xi_\g,y\rangle=-\langle \xi_\g,[x,y]_{\ltimes}\rangle=-\langle \xi_\g,[x,y]_{\g}\rangle=\langle L^*_x\xi_\g,y\rangle.
\end{eqnarray*}
Thus, we have $L^*_{\ltimes}(x)\xi_\g=L^*_x\xi_\g$. For $x,y\in\g,v\in V$, we have
\begin{eqnarray*}
\langle L^*_{\ltimes}(x)v,y\rangle=-\langle v,[x,y]_{\ltimes}\rangle=-\langle v,[x,y]_{\g}\rangle=0.
\end{eqnarray*}
Thus, we obtain $L^*_{\ltimes}(x)v\in V$. Moreover, for $x\in\g,v\in V,\xi_V\in V^*$, we have
\begin{eqnarray*}
\langle L^*_{\ltimes}(x)v,\xi_V\rangle=-\langle v,[x,\xi_V]_{\ltimes}\rangle=-\langle v,(\rho^L)^*(x)\xi_V\rangle=\langle\rho^L(x) v,\xi_V\rangle.
\end{eqnarray*}
Thus, we have $L^*_{\ltimes}(x)v=\rho^L(x)v$. For $\xi_V,\eta_V\in V^*,\xi_\g\in\g^*,y\in\g,$ we have
\begin{eqnarray*}
\langle L^*_{\ltimes}(\xi_V)\xi_\g,y+\eta_V\rangle=-\langle \xi_\g,[\xi_V,y+\eta_V]_{\ltimes}\rangle=-\langle \xi_\g,[\xi_V,y]_{\ltimes}\rangle=0.
\end{eqnarray*}
Thus, we obtain $L^*_{\ltimes}(\xi_V)\xi_\g=0.$ For $\xi_V,\eta_V\in V^*,v\in V,$ we have
\begin{eqnarray*}
\langle L^*_{\ltimes}(\xi_V)v,\eta_V\rangle=-\langle v,[\xi_V,\eta_V]_{\ltimes}\rangle=0.
\end{eqnarray*}
Thus, we obtain that $L^*_{\ltimes}(\xi_V)v\in\g^*$.

Similarly, for $x\in\g,\xi_\g\in\g^*,\xi_V\in V^*$, we have
\begin{eqnarray*}
\langle (-L^*_{\ltimes}-R^*_{\ltimes})(x)\xi_{\g},\xi_V\rangle=\langle \xi_{\g},[x,\xi_V]_{\ltimes}+[\xi_V,x]_{\ltimes}\rangle=0.
\end{eqnarray*}
Thus, we obtain that $(-L^*_{\ltimes}-R^*_{\ltimes})(x)\xi_{\g}\in\g^*$. Moreover, for $x,y\in\g,\xi_\g\in\g^*$, we have
\begin{eqnarray*}
\langle (-L^*_{\ltimes}-R^*_{\ltimes})(x)\xi_{\g},y\rangle=\langle \xi_\g,[x,y]_{\ltimes}+[y,x]_{\ltimes}\rangle=\langle \xi_\g,[x,y]_{\g}+[y,x]_{\g}\rangle=-\langle (L^*_x+R^*_x)\xi_\g,y\rangle.
\end{eqnarray*}
Thus, we obtain $(-L^*_{\ltimes}-R^*_{\ltimes})(x)\xi_{\g}=(-L^*_x-R^*_x)\xi_\g$. For $x,y\in\g,v\in V$, we have
\begin{eqnarray*}
\langle (-L^*_{\ltimes}-R^*_{\ltimes})(x)v,y\rangle=\langle v,[x,y]_{\ltimes}+[y,x]_{\ltimes}\rangle=\langle v,[x,y]_{\g}+[y,x]_{\g}\rangle=0.
\end{eqnarray*}
Thus, we obtain that $(-L^*_{\ltimes}-R^*_{\ltimes})(x)v\in V$. Moreover, for $x\in\g,v\in V,\xi_V\in V^*$, we have
\begin{eqnarray*}
\langle (-L^*_{\ltimes}-R^*_{\ltimes})(x)v,\xi_V\rangle=\langle v,[x,\xi_V]_{\ltimes}+[\xi_V,x]_{\ltimes}\rangle=\langle v,-(\rho^R)^*(x)\xi_V\rangle=\langle\rho^R(x) v,\xi_V\rangle.
\end{eqnarray*}
Thus, we obtain that $(-L^*_{\ltimes}-R^*_{\ltimes})(x)v=\rho^R(x)v$. For $\xi_V,\eta_V\in V^*,\xi_\g\in\g^*,y\in\g,$ we have
\begin{eqnarray*}
\langle (-L^*_{\ltimes}-R^*_{\ltimes})(\xi_V)\xi_\g,y+\eta_V\rangle=\langle\xi_\g,[\xi_V,y+\eta_V]_{\ltimes}+[y+\eta_V,\xi_V]_{\ltimes}\rangle=
\langle\xi_\g,[\xi_V,y]_{\ltimes}+[y,\xi_V]_{\ltimes}\rangle=0.
\end{eqnarray*}
Thus, we obtain that $(-L^*_{\ltimes}-R^*_{\ltimes})(\xi_V)\xi_\g=0$. For $\xi_V,\eta_V\in V^*,v\in V,$ we have
\begin{eqnarray*}
\langle (-L^*_{\ltimes}-R^*_{\ltimes})(\xi_V)v,\eta_V\rangle=\langle v,[\xi_V,\eta_V]_{\ltimes}+[\eta_V,\xi_V]_{\ltimes}\rangle=0.
\end{eqnarray*}
Thus, we obtain that $(-L^*_{\ltimes}-R^*_{\ltimes})(\xi_V)v\in\g^*$. The proof is finished.
\end{proof}
}

\begin{thm}\label{o-operator-big-algebra}
A linear map $K:V\lon\g$ is a relative Rota-Baxter operator on a Leibniz algebra $(\g,[\cdot,\cdot]_\g)$ with respect to a representation
$(V;\rho^L,\rho^R)$ if and only if $K+K^*$ is a relative Rota-Baxter operator on the Leibniz algebra $\g\ltimes_{(\rho^L)^*,-(\rho^L)^*-(\rho^R)^*}V^*$ with respect to the dual representation $(\g^*\oplus V;L^*_{\ltimes},-L^*_{\ltimes}-R^*_{\ltimes})$ of
the regular representation $(\g\oplus V^*;L_\ltimes,R_\ltimes)$, that is, the tensor form $\overline{K+K^*}=\Upsilon(K+K^*)$ is a solution of the classical Leibniz Yang-Baxter equation in the Leibniz algebra $\g\ltimes_{(\rho^L)^*,-(\rho^L)^*-(\rho^R)^*}V^*$.
\end{thm}

\begin{proof}
Let $K:V\lon\g$ be a relative Rota-Baxter operator on a Leibniz algebra $(\g,[\cdot,\cdot]_\g)$ with respect to a representation
$(V;\rho^L,\rho^R)$. By Lemma \ref{important-dual-rep}, for all $u,v\in V$, we have
\begin{eqnarray}
\nonumber&&(K+K^*)\Big(L^*_{\ltimes}((K+K^*)u)v-(L^*_{\ltimes}+R^*_{\ltimes})((K+K^*)v)u \Big)-[(K+K^*)u,(K+K^*)v]_{\ltimes}\\
\nonumber&=&(K+K^*)\big(\rho^L(Ku)v+\rho^R(Kv)u\big)-[Ku,Kv]_{\ltimes}\\
\nonumber&=&K\big(\rho^L(Ku)v+\rho^R(Kv)u\big)-[Ku,Kv]_{\g}\\
\label{eq:Olarge}&=&0.
\end{eqnarray}
For all $u,v\in V,\xi\in\g^*$, we have
\begin{eqnarray*}
&&\langle(K+K^*)\Big(L^*_{\ltimes}((K+K^*)u)\xi+(-L^*_{\ltimes}-R^*_{\ltimes})((K+K^*)\xi)u\Big)-[(K+K^*)u,(K+K^*)\xi]_{\ltimes},v\rangle\\
&&=\langle K^*\big(L^*_{Ku}\xi+(-L^*_{\ltimes}-R^*_{\ltimes})(K^*\xi)u\big)-(\rho^L)^*(Ku)K^*\xi,v\rangle\\
&&=\langle L^*_{Ku}\xi+(-L^*_{\ltimes}-R^*_{\ltimes})(K^*\xi)u,Kv\rangle+\langle K^*\xi,\rho^L(Ku)v\rangle\\
&&=-\langle \xi,[Ku,Kv]_\g\rangle+\langle u,[K^*\xi,Kv]_{\ltimes}\rangle+\langle u,[Kv,K^*\xi]_{\ltimes}\rangle+\langle \xi,K(\rho^L(Ku)v)\rangle\\
&&=-\langle \xi,[Ku,Kv]_\g\rangle-\langle u,(\rho^L)^*(Kv)K^*\xi+(\rho^R)^*(Kv)K^*\xi\rangle+\langle u,(\rho^L)^*(Kv)K^*\xi\rangle+\langle \xi,K(\rho^L(Ku)v)\rangle\\
&&=-\langle \xi,[Ku,Kv]_\g\rangle-\langle u,(\rho^R)^*(Kv)K^*\xi\rangle+\langle \xi,K(\rho^L(Ku)v)\rangle\\
&&=-\langle \xi,[Ku,Kv]_\g\rangle+\langle K(\rho^R(Kv)u),\xi\rangle+\langle \xi,K(\rho^L(Ku)v)\rangle\\
&&=0.
\end{eqnarray*}
Similarly, we can show that for all $X,Y\in\g^*\oplus V,$ we have
$$
(K+K^*)\Big(L^*_{\ltimes}((K+K^*)X)Y+(-L^*_{\ltimes}-R^*_{\ltimes})((K+K^*)Y)X\Big)-[(K+K^*)X,(K+K^*)Y]_{\ltimes}=0,
$$
which implies that $K+K^*$ is a relative Rota-Baxter operator on the Leibniz algebra $\g\ltimes_{(\rho^L)^*,-(\rho^L)^*-(\rho^R)^*}V^*$ with respect to the   representation  $(\g^*\oplus V;L^*_{\ltimes},-L^*_{\ltimes}-R^*_{\ltimes})$.

\emptycomment{for $u,v\in V,\xi\in\g^*$, we have
\begin{eqnarray*}
&&\langle(T+T^*)\big(L^*_{\ltimes}((T+T^*)\xi)u-L^*_{\ltimes}((T+T^*)u)\xi-R^*_{\ltimes}((T+T^*)u)\xi\big)-[(T+T^*)\xi,(T+T^*)u]_{\ltimes},v\rangle\\
&=&\langle T^*\big(L^*_{\ltimes}(T^*\xi)u-(L^*_{Tu}+R^*_{Tu})\xi\big)+(\rho^L)^*(Tu)T^*\xi+(\rho^R)^*(Tu)T^*\xi,v\rangle\\
&=&\langle L^*_{\ltimes}(T^*\xi)u-(L^*_{Tu}+R^*_{Tu})\xi,Tv\rangle-\langle T^*\xi,\rho^L(Tu)v\rangle-\langle T^*\xi,\rho^R(Tu)v\rangle\\
&=&-\langle u,[T^*\xi,Tv]_{\ltimes}\rangle+\langle\xi,[Tu,Tv]_\g\rangle+\langle\xi,[Tv,Tu]_\g\rangle-\langle \xi,T(\rho^L(Tu)v)\rangle-\langle \xi,T(\rho^R(Tu)v)\rangle\\
&=&\langle u,(\rho^L)^*(Tv)T^*\xi+(\rho^R)^*(Tv)T^*\xi\rangle+\langle\xi,[Tu,Tv]_\g\rangle+\langle\xi,[Tv,Tu]_\g\rangle-\langle \xi,T(\rho^L(Tu)v)\rangle-\langle \xi,T(\rho^R(Tu)v)\rangle\\
&=&-\langle T(\rho^L(Tv)u),\xi\rangle-\langle T(\rho^R(Tv)u),\xi\rangle+\langle\xi,[Tu,Tv]_\g\rangle+\langle\xi,[Tv,Tu]_\g\rangle-\langle \xi,T(\rho^L(Tu)v)\rangle-\langle \xi,T(\rho^R(Tu)v)\rangle\\
&=&0.
\end{eqnarray*}
For $\xi,\eta\in\g^*$, we have
\begin{eqnarray*}
&&(T+T^*)\big(L^*_{\ltimes}((T+T^*)\xi)\eta-L^*_{\ltimes}((T+T^*)\eta)\xi-R^*_{\ltimes}((T+T^*)\eta)\xi\big)-[(T+T^*)\xi,(T+T^*)\eta]_{\ltimes}\\
&=&(T+T^*)\big(L^*_{\ltimes}(T^*\xi)\eta-L^*_{\ltimes}(T^*\eta)\xi-R^*_{\ltimes}(T^*\eta)\xi\big)-[T^*\xi,T^*\eta]_{\ltimes}\\
&=&0.
\end{eqnarray*}
}

Conversely, let $K+K^*$ be a relative Rota-Baxter operator on the Leibniz algebra $\g\ltimes_{(\rho^L)^*,-(\rho^L)^*-(\rho^R)^*}V^*$ with respect to the   representation $(\g^*\oplus V;L^*_{\ltimes},-L^*_{\ltimes}-R^*_{\ltimes})$. By \eqref{eq:Olarge},
we deduce that $K:V\lon\g$ is a relative Rota-Baxter operator on   $(\g,[\cdot,\cdot]_\g)$ with respect to the representation
$(V;\rho^L,\rho^R)$.
\end{proof}

In the sequel, we introduce the notion of a Leibniz-dendriform algebra as the underlying algebraic structure of a relative Rota-Baxter operator on a Leibniz algebra.

\begin{defi}\label{Leibniz-dendriform}
A {\bf Leibniz-dendriform algebra} is a vector space $A$ equipped with two binary operations  $\rhd$ and $\lhd:A\otimes A\lon A$
 such that for all $x,y,z\in A,$ we have
\begin{eqnarray}
\label{p1}(x\lhd y)\lhd z&=&x\lhd(y\lhd z)-y\lhd(x\lhd z)-(x\rhd y)\lhd z,\\
\label{p2}x\lhd(y\rhd z) &=&(x\lhd y)\rhd z+y\rhd(x\lhd z)+y\rhd(x\rhd z),\\
\label{p3}x\rhd(y\rhd z) &=&(x\rhd y)\rhd z+y\lhd(x\rhd z)-x\rhd(y\lhd z).
\end{eqnarray}
\end{defi}

\begin{pro}\label{Leibniz-up-Leibniz-dendriform}
Let $(A,\rhd,\lhd)$ be a Leibniz-dendriform algebra. Then the binary operation $[\cdot,\cdot]_{\rhd,\lhd}:A\otimes A\lon A$ given by
\begin{eqnarray}
[x,y]_{\rhd,\lhd}=x\lhd y+x\rhd y,\,\,\,\,\forall x,y\in A,
\end{eqnarray}
defines a Leibniz algebra, which is called the {\bf sub-adjacent Leibniz algebra} of $(A,\rhd,\lhd)$ and $(A,\rhd,\lhd)$ is called a {\bf
compatible Leibniz-dendriform algebra} structure on  $(A,[\cdot,\cdot]_{\rhd,\lhd})$.
\end{pro}
\begin{proof}
For all $x,y,z\in A$, we have
\begin{eqnarray*}
[x,[y,z]_{\rhd,\lhd}]_{\rhd,\lhd}=[x,y\lhd z+y\rhd z]_{\rhd,\lhd}=x\lhd (y\lhd z)+x\rhd (y\lhd z)+x\lhd (y\rhd z)+x\rhd (y\rhd z).
\end{eqnarray*}
On the other hand, we have
\begin{eqnarray*}
[[x,y]_{\rhd,\lhd},z]_{\rhd,\lhd}+[y,[x,z]_{\rhd,\lhd}]_{\rhd,\lhd}&=&[x\lhd y+x\rhd y,z]_{\rhd,\lhd}+[y,x\lhd z+x\rhd z]_{\rhd,\lhd}\\
                   &=&(x\lhd y)\lhd z+(x\lhd y)\rhd z+(x\rhd y)\lhd z+(x\rhd y)\rhd z\\
                   &&+y\lhd (x\lhd z)+y\rhd (x\lhd z)+y\lhd (x\rhd z)+y\rhd (x\rhd z).
\end{eqnarray*}
Thus, $(A,[\cdot,\cdot]_{\rhd,\lhd})$ is a Leibniz algebra.
\end{proof}

\begin{ex}\label{example-7}
{\rm
Let $V$ be a vector space. On the vector space $\gl(V)\oplus V$, define two binary operations $\rhd$ and $\lhd:(\gl(V)\oplus V)\otimes (\gl(V)\oplus V)\lon \gl(V)\oplus V$ by
\begin{eqnarray*}
(A+u)\lhd(B+v)=AB+Av,\quad (A+u)\rhd(B+v)=-BA,\quad \forall A,B\in\gl(V),~ u,v\in V.
\end{eqnarray*}
Then $(\gl(V)\oplus V,\rhd,\lhd)$ is a Leibniz-dendriform algebra. Its sub-adjacent Leibniz algebra is exactly the one underlying an omni-Lie algebra introduced by Weinstein in \cite{Alan}.
}
\end{ex}

Let $(A,\rhd,\lhd)$ be a Leibniz-dendriform algebra. Define two linear maps $L_\lhd:A\lon\gl(A)$ and $R_\rhd:A\lon\gl(A)$ by
\begin{eqnarray}
L_\lhd(x)y=x\lhd y,\,\,\,\,R_\rhd(x)y=y\rhd x,\,\,\,\,\forall x,y\in A.
\end{eqnarray}

\begin{pro}\label{adjacent-rep}
Let $(A,\rhd,\lhd)$ be a Leibniz-dendriform algebra. Then $(A;L_\lhd,R_\rhd)$ is a representation of the sub-adjacent Leibniz algebra $(A,[\cdot,\cdot]_{\rhd,\lhd})$. Moreover, the identity map ${\Id}:A\lon A$ is a relative Rota-Baxter operator on the Leibniz algebra $(A,[\cdot,\cdot]_{\rhd,\lhd})$ with respect to the representation $(A;L_\lhd,R_\rhd)$.
\end{pro}

\begin{proof}
By \eqref{p1}, for all $x,y,z\in A$, we have
$$
(L_\lhd([x,y]_{\rhd,\lhd})-[L_\lhd(x),L_\lhd(y)])z=[x,y]_{\rhd,\lhd}\lhd z-x\lhd(y\lhd z)+y\lhd(x\lhd z)=0.
$$
Thus, we have $L_\lhd([x,y]_{\rhd,\lhd})=[L_\lhd(x),L_\lhd(y)]$. By \eqref{p2},   we have
$$
(R_\rhd([x,y]_{\rhd,\lhd})-[L_\lhd(x),R_\rhd(y)])z=z\rhd[x,y] -x\lhd(z\rhd y)+(x\lhd z)\rhd y=0,
$$
which implies that $R_\rhd([x,y]_{\rhd,\lhd})=[L_\lhd(x),R_\rhd(y)]$. By \eqref{p2} and \eqref{p3},   we have
\begin{eqnarray*}
(R_\rhd(y)L_\lhd(x)+R_\rhd(y)R_\rhd(x))z&=&(x\lhd z)\rhd y+(z\rhd x)\rhd y\\
                                        &=&x\lhd(z\rhd y)-z\rhd(x\lhd y)-z\rhd(x\rhd y)\\
                                        &&+z\rhd(x\rhd y)-x\lhd(z\rhd y)+z\rhd(x\lhd y)\\
                                        &=&0.
\end{eqnarray*}
Thus, we have $R_\rhd(y)L_\lhd(x)=-R_\rhd(y)R_\rhd(x)$. Therefore, $(A;L_\lhd,R_\rhd)$ is a representation of the sub-adjacent Leibniz algebra $(A,[\cdot,\cdot]_{\rhd,\lhd})$. Moreover, we have
\begin{eqnarray*}
{\Id}(L_\lhd({\Id} (x))y+R_\rhd({\Id} (y))x)=x\lhd y+x\rhd y=[{\Id} (x),{\Id} (y)]_{\rhd,\lhd}.
\end{eqnarray*}
Thus, we obtain that ${\Id}:A\lon A$ is a relative Rota-Baxter operator on the Leibniz algebra $(A,[\cdot,\cdot]_{\rhd,\lhd})$ with respect to the representation $(A;L_\lhd,R_\rhd)$. The proof is finished.
\end{proof}

\begin{pro}\label{Leibniz-dendriform-under-o-operator}
  Let $K:V\lon\g$ be a relative Rota-Baxter operator on  a Leibniz algebra $(\g,[\cdot,\cdot]_\g)$ with respect to a representation $(V;\rho^L,\rho^R)$. Then there is a Leibniz-dendriform algebra structure on $V$ given by
\begin{eqnarray}
u \rhd v:=\rho^R(Kv)u,\,\,\,\,u \lhd v:=\rho^L(Ku)v,\,\,\,\,\forall u,v\in V.
\end{eqnarray}
\end{pro}

\begin{proof}
By \eqref{rep-1} and \eqref{O-operator}, we have
\begin{eqnarray*}
&&u\lhd(v\lhd w)-v\lhd(u\lhd w)-(u\rhd v)\lhd w-(u \lhd v)\lhd w\\
&=&u\lhd(\rho^L(Kv)w)-v\lhd(\rho^L(Ku)w)-(\rho^R(Kv)u)\lhd w-(\rho^L(Ku)v)\lhd w\\
&=&\rho^L(Ku)\rho^L(Kv)w-\rho^L(Kv)\rho^L(Ku)w-\rho^L(K(\rho^R(Kv)u))w-\rho^L(K(\rho^L(Ku)v))w\\
&=&\rho^L([Ku,Kv]_\g)w-\rho^L(K(\rho^R(Kv)u))w-\rho^L(K(\rho^L(Ku)v))w\\
&=&0,
\end{eqnarray*}
which implies that \eqref{p1} in Definition \ref{Leibniz-dendriform} holds.

Similarly, we can show that \eqref{p2} and \eqref{p3} also hold.
Thus,   $(V,\rhd,\lhd)$ is a Leibniz-dendriform algebra.
\end{proof}

\emptycomment{
By \eqref{rep-2} and \eqref{O-operator}, we have
\begin{eqnarray*}
&&(u\lhd v)\rhd w+v\rhd(u\lhd w)+v\rhd(u\rhd w)-u \lhd (v\rhd w)\\
&=&(\rho^L(Tu)v)\rhd w+v\rhd(\rho^L(Tu)w)+v\rhd(\rho^R(Tw)u)-u \lhd (\rho^R(Tw)v)\\
&=&\rho^R(Tw)\rho^L(Tu)v+\rho^R(T(\rho^L(Tu)w))v+\rho^R(T(\rho^R(Tw)u))v-\rho^L(Tu)\rho^R(Tw)v\\
&=&\rho^R(T(\rho^L(Tu)w))v+\rho^R(T(\rho^R(Tw)u))v-\rho^R([Tu,Tw]_\g)v\\
&=&0.
\end{eqnarray*}

By \eqref{rep-2}, \eqref{rep-3} and \eqref{O-operator}, we have
\begin{eqnarray*}
&&(u\rhd v)\rhd w+v\lhd(u\rhd w)-u\rhd(v\lhd w)-u\rhd(v\rhd w)\\
&=&(\rho^R(Tv)u)\rhd w+v\lhd(\rho^R(Tw)u)-u\rhd(\rho^L(Tv)w)-u\rhd(\rho^R(Tw)v)\\
&=&\rho^R(Tw)\rho^R(Tv)u+\rho^L(Tv)\rho^R(Tw)u-\rho^R(T(\rho^L(Tv)w))u-\rho^R(T(\rho^R(Tw)v))u\\
&=&-\rho^R(Tw)\rho^L(Tv)u+\rho^L(Tv)\rho^R(Tw)u-\rho^R(T(\rho^L(Tv)w))u-\rho^R(T(\rho^R(Tw)v))u\\
&=&\rho^R([Tv,Tw]_\g)u-\rho^R(T(\rho^L(Tv)w))u-\rho^R(T(\rho^R(Tw)v))u\\
&=&0.
\end{eqnarray*}
}

We give a sufficient and necessary condition for the existing of a compatible Leibniz-dendriform algebra structure on a Leibniz algebra.

\begin{pro}
Let $(\g,[\cdot,\cdot]_\g)$ be a Leibniz algebra. Then there is a compatible Leibniz-dendriform algebra on $\g$ if and only if there exists an invertible relative Rota-Baxter operator $K:V\lon\g$ on $\g$ with respect to a representation $(V;\rho^L,\rho^R)$. Furthermore, the compatible Leibniz-dendriform algebra structure on $\g$ is given by
\begin{eqnarray}
x \rhd y:=K(\rho^R(y)K^{-1}x),\,\,\,\,x \lhd y:=K(\rho^L(x)K^{-1}y),\,\,\,\,\forall x,y\in \g.
\end{eqnarray}
\end{pro}

\begin{proof}
Let $K:V\lon\g$ be an invertible  relative Rota-Baxter operator on $\g$ with respect to a representation $(V;\rho^L,\rho^R)$. By Proposition \ref{Leibniz-dendriform-under-o-operator}, there is a Leibniz-dendriform algebra on $V$ given by
$$
u \rhd v:=\rho^R(Kv)u,\,\,\,\,u \lhd v:=\rho^L(Ku)v,\,\,\,\,\forall u,v\in V.
$$
 Since $K$ is an invertible  relative Rota-Baxter operator, we obtain that
$$
x \rhd y:=K(K^{-1}x\rhd K^{-1}x)=K(\rho^R(y)K^{-1}x),\,\,\,\,x \lhd y:=K(K^{-1}x\lhd K^{-1}y)=K(\rho^L(x)K^{-1}y),
$$
is a Leibniz-dendriform algebra on $\g$. By \eqref{O-operator}, we have
\begin{eqnarray*}
x \rhd y+x \lhd y&=&K(\rho^R(y)K^{-1}x)+K(\rho^L(x)K^{-1}y)\\
                 &=&K(\rho^R(K(K^{-1}y))K^{-1}x)+K(\rho^L(K(K^{-1}x))K^{-1}y)\\
                 &=&[x,y]_\g.
\end{eqnarray*}

On the other hand, let $(\g,\rhd,\lhd)$ be a compatible Leibniz-dendriform algebra of the Leibniz algebra $(\g,[\cdot,\cdot]_\g)$. By Proposition \ref{adjacent-rep}, $(\g;L_\lhd,R_\rhd)$ is a representation of the Leibniz algebra $(\g,[\cdot,\cdot]_\g)$. Moreover, ${\Id}:\g\lon\g$ is a  relative Rota-Baxter operator on the Leibniz algebra $(\g,[\cdot,\cdot]_\g)$ with respect to the representation $(\g;L_\lhd,R_\rhd)$. The proof is finished.
\end{proof}

\emptycomment{
The algebraic structures underlying such bilinear forms $\huaB\in\Sym^2(\g^*)$ are Leibniz-dendriform algebras.

\begin{pro}
Let $\huaB\in\Sym^2(\g^*)$ be a nondegenerate bilinear form satisfying  \eqref{Hessian-structure}. Then there is a compatible Leibniz-dendriform algebra structure on
$\g$ given by:
\begin{eqnarray}
\huaB(x\rhd y,z)=\huaB(x,[y,z]_\g+[z,y]_\g),\quad\huaB(x\lhd y,z)=-\huaB(y,[x,z]_\g),\quad\forall x,y,z\in\g.
\end{eqnarray}
\end{pro}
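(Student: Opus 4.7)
The plan is to verify the claim by a direct computation. Since $\huaB$ is nondegenerate, the two given formulas uniquely define bilinear operations $\rhd$ and $\lhd$ on $\g$. First I would establish the compatibility $[x,y]_\g = x \rhd y + x \lhd y$. Pairing with an arbitrary $z \in \g$ and applying the closed condition \eqref{Hessian-structure} together with the symmetry of $\huaB$, one obtains
\begin{eqnarray*}
\huaB(x \rhd y + x \lhd y, z) = \huaB(x, [y,z]_\g + [z,y]_\g) - \huaB(y, [x,z]_\g) = \huaB(z, [x,y]_\g) = \huaB([x,y]_\g, z),
\end{eqnarray*}
so nondegeneracy yields compatibility.

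Next, for each of the three Leibniz-dendriform axioms \eqref{p1}, \eqref{p2}, \eqref{p3}, I would pair the identity with an arbitrary $w \in \g$ and systematically unfold each term via the defining formulas $\huaB(a \lhd b, z) = -\huaB(b, [a,z]_\g)$ and $\huaB(a \rhd b, z) = \huaB(a, [b,z]_\g + [z,b]_\g)$. The verification then reduces to showing $\huaB(u, E) = 0$ for $u \in \{x,y,z\}$, where $E$ is an explicit combination of iterated (triple) brackets in $\g$; by nondegeneracy this is equivalent to $E=0$. For \eqref{p1}, after invoking compatibility to collapse $[x \lhd y, w]_\g + [x \rhd y, w]_\g$ into $[[x,y]_\g, w]_\g$, a single application of the Leibniz identity forces $E = 0$. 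For \eqref{p2}, the six resulting triple brackets naturally partition into two groups of three, each of which vanishes by a single application of the Leibniz identity.

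The main obstacle lies in axiom \eqref{p3}. After unfolding, using compatibility to combine $y \rhd z + y \lhd z = [y,z]_\g$ inside the outer brackets, and applying the Leibniz identity twice (analogously to \eqref{p2}), one is left with the residual term $-[[z,w]_\g + [w,z]_\g, y]_\g$. The key input, which is specific to the Leibniz (as opposed to Lie) setting, is that the symmetrized bracket $[z,w]_\g + [w,z]_\g$ always lies in the left annihilator of $\g$: this follows by polarizing the identity $[[x,x]_\g, y]_\g = 0$, which in turn is an immediate consequence of applying the Leibniz identity to $[x, [x,y]_\g]_\g$. With this observation the residual term vanishes and \eqref{p3} is established.

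Once the three axioms are verified, $(\g, \rhd, \lhd)$ is a Leibniz-dendriform algebra in the sense of Definition \ref{Leibniz-dendriform}. Combining with the compatibility from the first step and Proposition \ref{Leibniz-up-Leibniz-dendriform}, the sub-adjacent Leibniz bracket $[x,y]_{\rhd,\lhd} = x \rhd y + x \lhd y$ coincides with the original $[x,y]_\g$, so $(\g, \rhd, \lhd)$ is a \emph{compatible} Leibniz-dendriform algebra structure on $(\g,[\cdot,\cdot]_\g)$, completing the proof.
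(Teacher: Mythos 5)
Your proof is correct, but it takes a genuinely different route from the paper's. The paper does not verify the axioms \eqref{p1}--\eqref{p3} at all: it observes that, by Theorem \ref{Hessian-structure-tensor-form}, $K=(\huaB^{\natural})^{-1}:\g^*\lon\g$ is an invertible \kup~on $\g$ with respect to $(\g^*;L^*,-L^*-R^*)$ satisfying $K^*=K$, invokes the already-established correspondence between invertible \kups~and compatible Leibniz-dendriform structures (built on Proposition \ref{Leibniz-dendriform-under-o-operator}) to obtain $x\rhd y=-(\huaB^{\natural})^{-1}(L^*_y\huaB^{\natural}x)-(\huaB^{\natural})^{-1}(R^*_y\huaB^{\natural}x)$ and $x\lhd y=(\huaB^{\natural})^{-1}(L^*_x\huaB^{\natural}y)$, and then checks in two lines that these dualize to the stated identities. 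Your direct verification replaces this machinery with hands-on bracket computations, and I checked them: compatibility follows exactly as you say from \eqref{Hessian-structure} and symmetry; \eqref{p1} collapses via compatibility to one Leibniz identity; \eqref{p2} yields six triple brackets (after combining $y\rhd(x\lhd z)+y\rhd(x\rhd z)=y\rhd[x,z]_\g$, which you should state explicitly there, as you did for \eqref{p1} and \eqref{p3}) splitting into two Leibniz identities; and for \eqref{p3} the residual $-[[z,w]_\g+[w,z]_\g,y]_\g$ indeed vanishes because polarizing $[[x,x]_\g,y]_\g=0$ shows symmetrized brackets are left annihilators --- the same fact the paper uses in Lemma \ref{lem:3-form} and which is encoded representation-theoretically in \eqref{rep-3}. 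The trade-off: the paper's proof is short given its apparatus and explains conceptually where the formulas come from (transport of structure through $\huaB^{\natural}$), while yours is self-contained and isolates precisely where each hypothesis enters --- notably that the closed condition \eqref{Hessian-structure} is used only once, to establish compatibility, after which the three axioms follow from the Leibniz identity and the left-annihilator property alone. One small wording caveat: $\huaB(u,E)=0$ for a fixed $u$ is not equivalent to $E=0$; what your argument actually does (correctly) is prove $E=0$ identically, deduce that the pairing against arbitrary $w$ vanishes, and apply nondegeneracy in the $w$-slot.
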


\begin{proof}
Let $\huaB\in\Sym^2(\g^*)$ be a nondegenerate bilinear form satisfying  \eqref{Hessian-structure}. Then $K=(\huaB^{\natural})^{-1}:\g^*\lon\g$ is an invertible relative Rota-Baxter operator on $\g$ with respect to the representation $(\g^*;L^*,-L^*-R^*)$ and $K^*=K$. Thus, we have a compatible Leibniz-dendriform algebra structure on
$\g$ given by:
\begin{eqnarray}
x \rhd y:=-(\huaB^{\natural})^{-1}(L^*_y\huaB^{\natural}x)-(\huaB^{\natural})^{-1}(R^*_y\huaB^{\natural}x),\,\,\,\,x \lhd y:=(\huaB^{\natural})^{-1}(L^*_x\huaB^{\natural}y),\,\,\,\,\forall x,y\in \g.
\end{eqnarray}
Therefore we have
\begin{eqnarray*}
\huaB(x\rhd y,z)&=&\langle\huaB^{\natural}(x\rhd y),z\rangle=-\langle(L^*_y\huaB^{\natural}x),z\rangle-\langle(R^*_y\huaB^{\natural}x),z\rangle=\huaB(x,[y,z]_\g)+\huaB(x,[z,y]_\g),\\
\huaB(x\lhd y,z)&=&\langle\huaB^{\natural}(x\lhd y),z\rangle=\langle L^*_x\huaB^{\natural}y,z\rangle=-\langle \huaB^{\natural}y,[x,z]_\g\rangle=-\huaB(y,[x,z]_\g).
\end{eqnarray*}
The proof is finished.
\end{proof}
}

\begin{thm}
Let $(A,\rhd,\lhd)$ be a Leibniz-dendriform algebra. Then
\begin{eqnarray}
r:=\sum_{i=1}^{n}(e_i^*\otimes e_i+e_i\otimes e_i^*)
\end{eqnarray}
is a symmetric solution of the classical Leibniz Yang-Baxter equation in the Leibniz algebra $A\ltimes_{L_\lhd^*,-L_\lhd^*-R_\rhd^*}A^*$,  where $\{e_1,\cdots,e_n\}$ is a basis of
$A$ and $\{e_1^*,\cdots,e_n^*\}$ is its dual basis. Moreover, $r$ is nondegenerate and the induced bilinear form $\huaB$ on $A\ltimes_{L_\lhd^*,-L_\lhd^*-R_\rhd^*}A^*$ is
given by:
\begin{eqnarray}\label{eq:specialB}
\huaB(x+\xi,y+\eta)=\langle \xi,y\rangle+\langle \eta,x\rangle.
\end{eqnarray}
\end{thm}

\begin{proof}
Since $(A,\rhd,\lhd)$ is a Leibniz-dendriform algebra, the identity map ${\Id}:A\lon A$ is a relative Rota-Baxter operator on the sub-adjacent Leibniz algebra $(A,[\cdot,\cdot]_{\rhd,\lhd})$ with respect to the representation $(A;L_\lhd,R_\rhd)$. By Theorem \ref{o-operator-big-algebra}, $r=\sum_{i=1}^{n}(e_i^*\otimes e_i+e_i\otimes e_i^*)$ is a symmetric solution of the classical Leibniz Yang-Baxter equation in $A\ltimes_{L_\lhd^*,-L_\lhd^*-R_\rhd^*}A^*$. It is obvious that the corresponding bilinear form $\huaB\in \Sym^2(A\oplus A^*)$ is given by  \eqref{eq:specialB}.
The proof is finished.
\end{proof}

The above  results can be viewed as the Leibniz analogue of the results given in \cite{Bai-1}.

\vspace{2mm}
\noindent
{\bf Acknowledgements. } This research is supported by NSFC (11922110). We give warmest thanks to Xiaomeng Xu for helpful comments.

 \end{document}